\newcommand{\As}{\mathcal{A}}
\newcommand{\Ds}{\mathcal{D}}
\newcommand{\Fs}{\mathcal{F}}
\newcommand{\Hs}{\mathcal{H}}
\newcommand{\Ns}{\mathcal{N}}
\newcommand{\Ps}{\mathcal{P}}
\newcommand{\Qs}{\mathcal{Q}}
\newcommand{\Xs}{\mathcal{X}}
\newcommand{\Ys}{\mathcal{Y}}
\newcommand{\Zs}{\mathcal{Z}}
\def\eqref#1{equation~\ref{#1}}
\def\1{\bm{1}}
\def\eps{{\varepsilon}}
\DeclareMathAlphabet{\mathsfit}{\encodingdefault}{\sfdefault}{m}{sl}
\SetMathAlphabet{\mathsfit}{bold}{\encodingdefault}{\sfdefault}{bx}{n}
\newcommand{\E}{\mathbb{E}}
\newcommand{\Ls}{\mathcal{L}}
\newcommand{\R}{\mathbb{R}}
\DeclareMathOperator*{\argmax}{arg\,max}
\newcommand{\notshow}[1]{{}}
\newcommand{\AutoAdjust}[3]{{ \mathchoice{ \left #1 #2  \right #3}{#1 #2 #3}{#1 #2 #3}{#1 #2 #3} }}
\newcommand{\Xcomment}[1]{{}}
\newcommand{\InParentheses}[1]{\AutoAdjust{(}{#1}{)}}
\newcommand{\InBrackets}[1]{\AutoAdjust{[}{#1}{]}}
\newcommand{\InNorms}[1]{\AutoAdjust{\|}{#1}{\|}}
\newcommand{\InAbs}[1]{\AutoAdjust{|}{#1}{|}}
\crefname{ineq}{Inequality}{Inequalities}
\newcommand{\INDSTATE}[1][1]{\STATE\hspace{#1\algorithmicindent}}
\declaretheorem[name=Theorem,numberwithin=section]{theorem}
\declaretheorem[name=Theorem,numbered=no]{theorem*}
\declaretheorem[name=Lemma,numberwithin=section]{lemma}
\declaretheorem[name=Lemma,numbered=no]{lemma*}
\newtheorem{definition}{Definition}[section]
\newtheorem{example}{Example}[section]
\newtheorem{assumption}{Assumption}[section]
\DeclareMathOperator\supp{supp}
\newcommand{\bp}{\mathrm{BP}} 
\newcommand{\ran}{\mathrm{range}} 
\newcommand{\all}{\mathrm{ALL}} 
\newcommand{\dcal}{\mathrm{DCAL}}
\title{Persuasive Prediction via Decision Calibration}
\author{%
  Jingwu Tang\thanks{Equal contribution.} \\
  Carnegie Mellon University \\
  \texttt{jingwutang@cmu.edu} 
  \And
  Jiahao Zhang\footnotemark[1] \\
  Carnegie Mellon University \\
  \texttt{jiahaozhang@cmu.edu} 
  \AND
  Fei Fang \\
  Carnegie Mellon University \\
  \texttt{feifang@cmu.edu} 
  \And
  Zhiwei Steven Wu \\
   Carnegie Mellon University\\
  \texttt{zstevenwu@cmu.edu} 
}
\begin{document}

\maketitle

\begin{abstract}
Bayesian persuasion, a central model in information design, studies how a sender, who privately observes a state drawn from a prior distribution, strategically sends a signal to influence a receiver's action. A key assumption is that both sender and receiver share the precise knowledge of the prior. Although this prior can be estimated from past data, such assumptions break down in high-dimensional or infinite state spaces, where learning an accurate prior may require a prohibitive amount of data.
In this paper, we study a learning-based variant of persuasion, which we term \emph{persuasive prediction}. This setting mirrors Bayesian persuasion with large state spaces, but crucially does not assume a common prior: the sender observes covariates $X$, learns to predict a payoff-relevant outcome $Y$ from past data, and releases a prediction to influence a population of receivers.

To model rational receiver behavior without a common prior, we adopt a learnable proxy: \emph{decision calibration}, which requires the prediction to be unbiased conditioned on the receiver's best response to the prediction. This condition guarantees that myopically responding to the prediction yields no swap regret.  Assuming the receivers best respond to decision-calibrated predictors, we design a computationally and statistically efficient algorithm that learns a decision-calibrated predictor within a randomized predictor class that optimizes the sender's utility. In the commonly studied single-receiver case, our method matches the utility of a Bayesian sender who has full knowledge of the underlying prior distribution. Finally, we extend our algorithmic result to a setting where receivers respond stochastically to predictions and the sender may randomize over an infinite predictor class.

\end{abstract}

\section{Introduction}

The strategic disclosure of information to influence downstream decisions—commonly studied under the \emph{information design} literature (see e.g., \cite{bergemann2023information})—has become a central topic in economic theory. A foundational model in this area is \emph{Bayesian persuasion}, introduced by \cite{kamenica2011bayesian}, which formalizes how a \emph{sender}, who privately observes an underlying state unknown to the \emph{receiver}, can send a signal to shape the receiver's belief and ultimately influence their chosen action. Crucially, the sender is endowed with the \emph{power of commitment}—the ability to credibly commit to a signaling scheme in advance. Based on the committed signaling scheme, the receiver updates their beliefs upon receiving the signal and selects an action accordingly. Both receivers have utility functions that depend on the state and the receiver's action, but these utilities are typically misaligned, so that the sender needs to strategically design their scheme.

A key assumption underpinning Bayesian persuasion---and much of information design---is the existence of a \emph{common prior}: a probability distribution over the state space that both the sender and receiver have precise knowledge about. The most obvious source for such a probability distribution is past data. However, in many real-world environments, the state space may be high-dimensional or even infinite, which makes it challenging to estimate an accurate prior distribution from finite data. A natural example is the following setting of prediction for decision making. 

\paragraph{Persuasive prediction problem.}
The data $(x, y)\in \mathcal{X}\times\mathcal{Y}$ are drawn from a joint distribution $\mathcal{D}$. The sender observes a realization $x$ of $\Xs$, and seeks to predict the unobserved outcome $y$. Both the sender's and the receiver's utilities depend on the receiver's action and the unknown outcome $Y$, but not directly on $X$. Upon observing $x$, the sender provides a prediction $f(x)$, which acts as a signal to inform the receiver's decision. In the language of Bayesian persuasion, the sender's observed state corresponds to the posterior distribution $\mathbb{P}[Y\mid X=x]$, and the prior is the marginal distribution over such posterior distributions, which is effectively $\mathcal{D}$. However, when $\mathcal{X}$ is high-dimensional or infinite, learning the prior $\mathcal{D}$ from finite past data can be infeasible.

This persuasive prediction framework captures a wide range of real-world scenarios where the sender observes high-dimensional covariates describing the state of the world and aims to shape individual decisions toward a socially desirable outcome. For example, a government agency may forecast regional agriculture conditions based on global climate indicators to guide farmers’ crop choices, or a utility provider may issue grid stress forecasts using weather and demand data to encourage energy-saving behavior. In both cases, individuals act independently based on public predictions, and the sender seeks to optimize a global objective, such as food diversity or grid stability.

When data are insufficient to recover the full prior, persuasive prediction presents two central challenges: \textit{How should receiver behavior be modeled in the absence of a common prior? And can the sender, using only finite data, achieve utility comparable to that of a fully informed Bayesian sender with exact knowledge of the distribution \(\mathcal{D}\)?}

\subsection{Our Results and Techniques}

In persuasive prediction, a sender learns a predictor $f$ from past data such that upon observing $X=x$, they will send the prediction $f(x)$ to a population of receivers, who will then select their actions. The sender's goal is to optimize their utility that depends jointly on the unknown outcome $Y$ and the joint action chosen by the receivers. Our results establish a connection between (Bayesian) persuasion and \emph{decision calibration} \cite{zhao2021calibrating}, which allows a learning-based approach for modeling incentives without requiring full knowledge of the prior.

\paragraph{Behavioral modeling via decision calibration.} 
Informally, a (possibly randomized) predictor $f\colon \mathcal{X} \rightarrow \mathcal{Y}$ is \emph{decision-calibrated} if, for every receiver with utility function $v_i$ and every action $a_i$, 
    \begin{equation}\label{eq:deccal}
    \mathbb{E}_{f, (X,Y)\sim \mathcal{D}} \left[ Y - f(X) \mid \arg\max_a v_i(f(X), a) = a_i \right] = 0.            \end{equation}
  Intuitively, decision calibration captures a natural notion of credibility: conditioned on any event defined by the receiver's best response, the predictor $f(X)$ must be an unbiased estimate of the true outcome $Y$. A predictor is approximately decision-calibrated if condition~\eqref{eq:deccal} holds up to a small additive error. We show that myopically best responding to an approximately decision-calibrated, 
 receivers obtain low swap regret (Lemma~\ref{thm:swap}). This motivates a clean behavioral assumption that receivers best respond to approximately decision calibrated predictions.

\paragraph{Efficient optimal persuasive prediction.}
Suppose the sender is allowed to use a stochastic predictor \( f \in \Delta(\mathcal{H}) \), that is, a distribution over deterministic predictors in some finite class \(\mathcal{H}\). Our first main result is a statistically efficient algorithm that learns a predictor \( f \) that optimizes the sender's utility within the class of decision-calibrated predictors in $\Delta(\mathcal{H})$. The core technical idea is to formulate the learning problem as a zero-sum game between a \emph{min player}, who updates the predictor, and an \emph{max player}, who identifies the most violated calibration constraint. Simulating no-regret dynamics between the two players yields a minimax equilibrium,  which recovers the optimal decision-calibrated predictor. The number of required samples scales polynomially in the number of receiver actions and the dimension of \(\mathcal{Y}\), and is independent of the size $|\mathcal{X}|$. Our algorithm is also \emph{oracle-efficient} in the sense that it runs in polynomial time when given access to an ERM oracle over $\mathcal{H}$.
\paragraph{Matching the Bayesian benchmark.}
In the special case of a single receiver, we show that the sender utility achieved by our algorithm matches that of a fully informed Bayesian sender who is restricted to sending signals induced by the same class of decision-calibrated predictors, even though our sender only has access to a finite dataset that is far from sufficient to approximate $\mathcal{D}$.

\paragraph{Extension to quantal responses.}
Finally, we extend our results to settings where receivers perform \emph{quantal responses}-—that is, their action choices follow a softmax distribution rather than deterministic best responses. This extension models scenarios where receiver behavior is stochastic and not perfectly rational \citep{mckelvey1995quantal}. Under this setting, we also provide an efficient algorithm for learning the approximately optimal decision-calibrated predictor for sender utility, and can also handle infinite hypothesis classes \(\mathcal{H}\), provided it has bounded covering numbers.

\subsection{Related Work}
The work most closely related to ours is \cite{feng2025persuasive}, which studied the problem of selecting an optimal calibrated predictor to maximize the learner’s utility. Part of their result was based on \cite{jain2024calibrated}, who established the connection between online calibration and Bayesian persuasion. Notably, \cite{feng2025persuasive} assumed a finite context space $\Xs$ and that the learner knows the distribution $\Ds$ over $\Xs\times\Ys$. We also consider a selection problem: choosing an optimal decision-calibrated predictor to maximize the learner's utility, but in a more challenging setting: (i) we focus on a prior-free model, where the sender does not know $\Ds$ and the context space $\Xs$ is rich enough so that learning the conditional distribution $D_{\Ys\mid x}$ for arbitrary $x$ is infeasible, and (ii) we allow the outcome space to extend beyond the binary case. 

Our work is also conceptually related to recent work on prior-free mechanisms. \cite{lin2024information} studied Bayesian persuasion without knowing the prior. They showed that, under certain regularity conditions, it is possible to learn an approximately optimal signaling scheme by first estimating the prior from the data and then solving the persuasion problem with the estimated prior. Their approach is infeasible in our setting since we do not assume $\Ds$ to be learnable with a finite sample. \cite{camara2020mechanisms,collina2024efficient} studied a repeated Principal-Agent problem between a pair of long-lived Principal and Agent in an adversarial setting where there is no prior distribution. To address the challenges of the online setting, they impose additional rationality assumptions on the agent's behavior. In contrast, we make no such assumptions and only require that the agent (receiver in our case) follows the (smoothed) best response. 

Due to space limitations, we include additional related work in the \cref{sec: rel}.

\section{Model and Preliminaries}
\paragraph{Predictors}
We consider the prediction task over the data domain $\mathcal{X} \times \mathcal{Y}$, where the data is drawn from a distribution $\mathcal{D}$. Here, $\mathcal{X}$ is a rich feature space, and $\mathcal{Y} = [-1, 1]^d$ is the outcome space. We define $\mathcal{H} = \{ h \mid h : \mathcal{X} \rightarrow \mathcal{Y} \}$ as a hypothesis class of \emph{deterministic} predictors. For any $h \in \mathcal{H}$ and $x \in \mathcal{X}$, $h(x)$ is interpreted as a prediction of the conditional mean $\mathbb{E}[Y \mid X=x]$. We use \(h(x)_j\) and \(y_j\) to denote the \(j\)-coordinate of the predicted and true outcome vectors, respectively.

In this paper, we consider a more general setting where the goal is to learn a \emph{randomized} predictor $f \in \Delta(\mathcal{H})$, representing a distribution over $h \in \mathcal{H}$. This aligns with the standard information design literature, where a sender typically transmits a randomized signal to the receiver to achieve higher utility. We assume no direct access to the full distribution $\mathcal{D}$; instead, we seek to learn $f$ from $n$ i.i.d. samples drawn from $\mathcal{D}$, which we denote as dataset $D$.

\paragraph{Receivers' Behavior Model}
We consider $N$ receivers who make decisions based on the prediction $h(x)$. For each $i \in [N]$, receiver $i$ has a finite action set $\mathcal{A}_i$. Without loss of generality, we assume $\lvert \mathcal{A}_i \rvert = m$ for all $i \in [N]$, since any smaller set can be augmented with dummy actions to reach size $m$. Receiver $i$’s utility function is denoted as $v_i(a, y)$, where $
v_i \colon \mathcal{A}_i \times \mathcal{Y} \;\longrightarrow\; [0,1]$.

We assume that $v_i$ is linear and Lipschitz continuous in the outcome $y$.

\begin{assumption}[Linearity and $L$-Lipschitzness]
For any $i \in [N]$, $v_i \in V_i$, and $a \in \mathcal{A}_i$, the utility function $v_i(a, y)$ is linear in $y$, and satisfies $\lvert v_i(a, y_1) - v_i(a, y_2) \rvert \leq L \lVert y_1 - y_2 \rVert_\infty$.
\end{assumption}

Next, we define the receiver's decision rule given the prediction $h(x)$. A natural rule is to treat the prediction as accurate and respond optimally to it.

\begin{definition}[Strict Best Response]
For any $i \in [N]$, receiver $i$, given utility function $v_i$, strictly best responds to the prediction $h(x)$ by choosing:
$$     b_i(h(x), a_i) =
    \begin{cases}
        1 & \text{if } a_i = \arg\min_{a_i' \in \mathcal{A}_i} v_i(a_i', h(x)), \\
        0 & \text{otherwise}.
    \end{cases}
    $$
Here, $b_i(h(x), a_i)$ represents the probability that receiver $i$ takes action $a_i$ given the prediction $h(x)$.
\end{definition}

\paragraph{Decision Calibration}
We aim to design $f$ such that receivers experience no regret when best responding to it. This mirrors the setting in standard Bayesian persuasion, where, given a known prior, the sender recommends an action through a signal, and the receiver's best response—after Bayesian updating—is to follow that recommendation. As we do not assume a known prior, we leverage the notion of \textit{decision calibration} \citep{zhao2021calibrating}, which has been shown to provide similar no-regret guarantees for receivers \citep{zhao2021calibrating,noarov2023high}. Specifically, it ensures that receivers have no incentive to deviate from the recommended action, whether by swapping actions or by acting as if their utility were that of another receiver. Moreover, it can be shown that any approximately decision-calibrated predictor can be post-processed into an approximately fully calibrated predictor that induces the same receiver's behavior. We will show that fully calibrated predictors correspond to signaling schemes in Bayesian persuasion with a known prior (\cref{lem: equivalence}). A similar observation was made in \citep{jain2024calibrated} for online calibrated predictions. But they made finite-state Bayesian persuasion assumption, which we do not require. For these reasons, we adopt decision calibration as a desirable property that the predictor $f$ should satisfy. We provide a detailed discussion of the no-regret guarantees in \cref{sec: no-regret},  and their connection to Bayesian persuasion in \cref{sec: bayesian}.

We now formally define decision calibration as follows.
\begin{definition}[Decision Calibration]\label{def:deccal}
    A randomized predictor $f\in\Delta(\Hs)$ is said to be perfectly decision calibrated if
    \[
    \mathrm{DecCE}(f):=\max_{i\in[N]}\max_{j\in[d]}\max_{a\in\As_i}\InAbs{\E_{h\sim f}\E_{(x,y)\sim\Ds}[(y_j-h(x)_j)\cdot b_i(h(x),a)]}=0.
    \]
    Moreover, $f$ is said to be $\epsilon$-decision calibrated if $\mathrm{DecCE}(f)\le\epsilon$.
\end{definition}
A decision-calibrated predictor ensures that receivers have no incentive to deviate from best responding to the prediction; that is, receivers cannot achieve higher utility by swapping their chosen action with another action. Similar guarantees have been established by \citet{noarov2023high,roth2024forecasting}, and we provide a variant tailored to our setting of randomized predictor in the distributional setting. We formally define swap regret as follows:
\begin{definition}[Swap Regret] We say that a predictor \(f\) achieves  \(\epsilon\)-swap regret if, for any receiver \(i\in[N]\), mapping function \(\phi:A\rightarrow A\),
\[\E_{h\sim f}\E_{\Ds}\InBrackets{\sum_av_i(\phi(a),y)\cdot b_i(h(x),a)}\leq\E_{h\sim f}\E_{\Ds}\InBrackets{\sum_av_i(a,y)\cdot b_i(h(x),a)}+\epsilon.\]
    
\end{definition}
\begin{restatable}[No Swap Regret via Decision Calibration]{theorem}{swap}\label{thm:swap}
If a predictor \(f\) is \(\epsilon\)-decision calibrated, then it satisfies \(2L|A|\epsilon\)-swap regret.
\end{restatable}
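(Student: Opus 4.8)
The plan is to prove the contrapositive-free direct implication: take an $\epsilon$-decision-calibrated predictor $f$ and bound, for an arbitrary receiver $i$ and arbitrary swap function $\phi\colon\As_i\to\As_i$, the regret quantity
\[
\Delta_\phi := \E_{h\sim f}\E_{\Ds}\InBrackets{\sum_a \big(v_i(\phi(a),y)-v_i(a,y)\big)\cdot b_i(h(x),a)}.
\]
The key idea is to interpolate through the prediction $h(x)$ in place of the true outcome $y$ inside the utility, using linearity and Lipschitzness of $v_i$ together with the best-response property of $b_i$.

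First I would fix $a$ and split $v_i(\phi(a),y)-v_i(a,y)$ as
\[
\big(v_i(\phi(a),y)-v_i(\phi(a),h(x))\big) + \big(v_i(\phi(a),h(x))-v_i(a,h(x))\big) + \big(v_i(a,h(x))-v_i(a,y)\big).
\]
The middle term is pointwise $\le 0$ whenever $b_i(h(x),a)=1$, since in that case $a$ is (by the Strict Best Response definition) the utility-\emph{minimizing} action\footnote{The statement uses $\arg\min$ in the Strict Best Response definition but $\arg\max$ in \eqref{eq:deccal}; I will follow the convention under which best-responding is optimal, so the middle term has the ``good'' sign and contributes nothing to an upper bound on $\Delta_\phi$.} against $h(x)$, so $v_i(\phi(a),h(x)) \ge v_i(a,h(x))$ — hence after multiplying by $b_i(h(x),a)\in\{0,1\}$ this term is $\le 0$ and can be dropped. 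For the first and third terms, I use linearity of $v_i(\cdot,y)$ in $y$: write $v_i(b,y) = \langle w_{i,b}, y\rangle + c_{i,b}$ for each action $b$, so that
\[
v_i(b,y)-v_i(b,h(x)) = \langle w_{i,b},\, y - h(x)\rangle = \sum_{j\in[d]} w_{i,b,j}\,(y_j - h(x)_j),
\]
and Lipschitzness gives $\|w_{i,b}\|_1 \le L$ (since $|v_i(b,y_1)-v_i(b,y_2)|\le L\|y_1-y_2\|_\infty$ forces the dual $\ell_1$ bound on the linear coefficient). Summing over $a$ and taking expectations, each of the two remaining groups becomes a linear combination, with coefficients of total absolute weight at most $L$ per action, of the decision-calibration residuals $\E_{h\sim f}\E_\Ds[(y_j-h(x)_j)\,b_i(h(x),a)]$ — with $b=\phi(a)$ (resp. $b=a$) playing the conditioning role. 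Each such residual is bounded in absolute value by $\mathrm{DecCE}(f)\le\epsilon$ by Definition~\ref{def:deccal}. Counting: summing the first-term contribution over $a\in\As_i$ gives at most $\sum_{a}\sum_j |w_{i,\phi(a),j}|\,\epsilon \le |\As_i| L \epsilon$; the third term likewise contributes at most $|\As_i| L \epsilon$; total $\le 2L|\As_i|\epsilon = 2L|A|\epsilon$, which is exactly the claimed bound.

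The one subtlety — and the main thing to get right rather than a genuine obstacle — is the bookkeeping in the first term: the conditioning action inside the decision-calibration expression is $\phi(a)$, not $a$, so one must observe that as $a$ ranges over $\As_i$ the relevant events are still of the form $\{$receiver $i$'s best response to $h(X)$ equals some action$\}$, which is precisely the family over which Definition~\ref{def:deccal} quantifies; the map $\phi$ only reshuffles which residual multiplies which weight and does not create new conditioning events. A second point to state carefully is that the expectation $\E_{h\sim f}$ commutes with everything since $f$ is simply a distribution over deterministic $h$ and all quantities are defined conditionally on the realized $h$. No approximation beyond the single use of the $\epsilon$ bound is incurred, so the constant $2L|A|$ is tight under this argument. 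I would present the three-term decomposition as the one display, note the sign of the middle term, invoke linearity+Lipschitz to pass to $\ell_1$-bounded coefficients, and conclude by the triangle inequality over the $|A|$ actions and $d$ coordinates.
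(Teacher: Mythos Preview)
Your proposal is correct and follows the same three-term decomposition as the paper: interpolate through $h(x)$, drop the middle term by best-response optimality, and bound each of the two outer terms by $L|A|\epsilon$ via linearity/Lipschitzness and the decision-calibration bound. One small correction to your ``subtlety'' paragraph: in the first term the conditioning action inside the residual $\E[(y_j-h(x)_j)\,b_i(h(x),a)]$ is $a$, not $\phi(a)$ --- the map $\phi$ appears only in the coefficient $w_{i,\phi(a),j}$ --- so there is no reshuffling of conditioning events and no bookkeeping issue to resolve.
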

We further show that decision calibration can guarantee other forms of regret, such as \emph{type regret}, which ensures that receivers have no incentive to pretend to be another receiver, as well as combinations of swap and type regret. These results demonstrate that decision calibration is a strongly compatible with receivers' incentive. We provide detailed statements and proofs in \cref{sec: no-regret}.

We consider randomized predictors in $\Delta(\mathcal{H})$ that has decision calibration error bounded by some target level $\gamma$ and assume that such predictors are not vacuous.

\begin{assumption}[Feasibility]\label{assump:realizability}
There is a randomized predictor $f \in \Delta(\Hs)$ such that $\mathrm{DecCE}(f) \le \gamma$.
\end{assumption}

Note that \cref{assump:realizability} is mild since as long as $\Hs$ contains all deterministic constant predictors (or a discretized cover thereof), it necessarily includes a decision-calibrated predictor, specifically the constant predictor $h(x) = \E[Y]$, which is fully calibrated.

\paragraph{Sender’s Objective}
The sender does \emph{not} have direct access to the data distribution $\Ds$, but has examples drawn from $\Ds$. The sender's utility depends on the outcome $y$ and the joint action of all receivers, given by $u:\As\times\Ys\rightarrow\R$ where $\As:=\As_1\times\cdots\times\As_N$. Without loss of generality, we assume the sender's utility is bounded by $1$, i.e. $u(a,y)\in[0,1]$ for any $a\in\As$ and $y\in\Ys$. Given any prediction $h(x)$, for any $\bm a=(a_1,\cdots,a_N)\in\As$, the probability that receivers play joint action $\bm a$ is
$
b(h(x),\bm a):=\prod_{i=1}^N b_i(h(x),a_i).
$

The sender's goal is to maximize their expected utility subject to a $\gamma$-decision calibration constraint. Formally, the sender’s optimization problem is
\begin{equation}\label{eq:constraint}
\begin{aligned}
    \max_f \E_{h\sim f}\E_{(x,y)\sim\Ds}\InBrackets{\sum_{\bm a\in\As}u(\bm a,y)b(h(x),\bm a)}\quad 
    \mathrm{s.t.} \quad \mathrm{DecCE}(f)\le\gamma.
\end{aligned}
\end{equation}

We denote the optimal objective value of \cref{eq:constraint} by $\mathrm{OPT}(\Hs,\Ds,\gamma)$.

\section{A Minimax Approach for Efficient  Persuasive Prediction}\label{sec: finite}
In this section, we present an efficient algorithm \texttt{PerDecCal} (\cref{alg: finite}) for persuasive prediction,  which learns an approximately optimal solution to the constrained optimization problem \cref{eq:constraint} from data when $\InAbs{\Hs}$ is finite. We begin by stating the theoretical guarantee achieved by \texttt{PerDecCal}.

\begin{restatable}{theorem}{percal}\label{thm: percal}
     Suppose \texttt{PerDecCal} runs for \(T=O(\log(Nmd)/\epsilon^4)\) rounds and is given a dataset \(D\) drawn i.i.d. from $\Ds$ of size \(n\geq O(\frac{\log(|\Hs|Ndm/\delta)}{\epsilon^4})\). With probability at least \(1-\delta\),       
     the output predictor \(\hat{f}\) that satisfies
    \begin{enumerate}
        \item $\mathrm{DecCE}(\hat{f})\leq\gamma+\epsilon.$  
        
        \item Suppose the receivers play strict best response to $\hat f$. Then the receivers obtain swap regret bounded by $2mL(\gamma+\epsilon)$.
The sender achieves $\epsilon$-optimal utility: 
    \[\E_{h\sim f}\E_\Ds[u(\bm a,y)\cdot  b(h(x),\bm a)]\geq \mathrm{OPT}(\Hs,\Ds,\gamma)-\epsilon.\]
    \end{enumerate}
\end{restatable}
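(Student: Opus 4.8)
The plan is to set up the constrained optimization \cref{eq:constraint} as a Lagrangian/minimax game and solve it via no-regret dynamics, which is the "core technical idea" flagged in the introduction. First I would rewrite the $\gamma$-decision-calibration constraint $\mathrm{DecCE}(f)\le\gamma$ as a family of linear inequalities indexed by $(i,j,a,s)$ with $i\in[N]$, $j\in[d]$, $a\in\As_i$, and a sign $s\in\{+1,-1\}$, each of the form $s\cdot\E_{h\sim f}\E_\Ds[(y_j-h(x)_j)b_i(h(x),a)]\le\gamma$. Since the sender's objective is linear in $f\in\Delta(\Hs)$ and all constraints are linear in $f$, strong duality (von Neumann's minimax theorem over the simplices $\Delta(\Hs)$ and the simplex over dual variables, after a standard bounded-Lagrange-multiplier argument using \cref{assump:realizability} to control the multiplier magnitude) gives that the optimum equals the value of a zero-sum game: a min player picking $f\in\Delta(\Hs)$ (equivalently $h\in\Hs$, since best responses to the min player suffice) and a max player picking a distribution over the constraint index set $\{(i,j,a,s)\}\cup\{\text{objective}\}$. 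I would then argue that running $T$ rounds of no-regret dynamics — the min player using an ERM/best-response oracle over $\Hs$ against the current weighted combination of objective and constraint "losses", the max player using multiplicative weights over the $O(Nmd)$ constraints plus the objective term — yields, by the standard Freund–Schapire reduction, an averaged pair $(\hat f,\hat\lambda)$ that is an $O(\sqrt{\log(Nmd)/T})$-approximate equilibrium; setting $T=O(\log(Nmd)/\epsilon^4)$ makes this $\epsilon^2$ (the fourth power and squaring will come from the two-sided use below). From the equilibrium guarantee one extracts both (i) near-feasibility $\mathrm{DecCE}(\hat f)\le\gamma+\epsilon$ and (ii) near-optimality of the sender's utility $\ge\mathrm{OPT}(\Hs,\Ds,\gamma)-\epsilon$ — the usual two conclusions of a Lagrangian game, where near-feasibility follows because a large violation would let the max player gain, and near-optimality follows by comparing against the feasible witness from \cref{assump:realizability}.

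The second ingredient is moving from the population quantities to the empirical dataset $D$ of size $n$. Here I would invoke a uniform concentration (Hoeffding plus a union bound) over the finite index set of all $(h,i,j,a)$ tuples: there are $|\Hs|\cdot N\cdot d\cdot m$ of them, each empirical average of a $[-2,2]$-bounded quantity, so $n\ge O(\log(|\Hs|Ndm/\delta)/\epsilon^4)$ samples force all empirical calibration terms and all empirical utility estimates to be within $\epsilon^2$ of their population values simultaneously with probability $1-\delta$. Running the game on empirical estimates therefore perturbs the equilibrium value and the output's guarantees by only $O(\epsilon^2)$, which is absorbed into the final $\epsilon$. The $1/\epsilon^4$ dependence in both $T$ and $n$ is consistent with needing accuracy $\epsilon^2$ in a game whose equilibrium is converted to an $\epsilon$-accurate primal guarantee (one $\epsilon$ from game convergence, one from the fact that these translate quadratically through the Lagrangian value, matching $\sqrt{\log/T}=\epsilon^2$ and $\sqrt{\log/n}=\epsilon^2$).

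Finally, the swap-regret bound in part 2 is immediate: once part 1 gives $\mathrm{DecCE}(\hat f)\le\gamma+\epsilon$, \cref{thm:swap} (No Swap Regret via Decision Calibration) applied with parameter $\gamma+\epsilon$ yields $2L|A|(\gamma+\epsilon)=2mL(\gamma+\epsilon)$ swap regret, since $|A|=m$; nothing further is needed there.

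I expect the main obstacle to be the clean handling of the minimax/duality step in the presence of \emph{only} the mild feasibility \cref{assump:realizability} — specifically, bounding the dual multiplier so that the Lagrangian game has bounded payoffs (hence fast no-regret convergence) without assuming a Slater point with a quantitative margin. The standard fix is to cap the multiplier at a threshold $C$ and argue that, because the objective is bounded in $[0,1]$, any near-optimal capped solution is also near-feasible with a $1/C$-scaled violation; choosing $C=\Theta(1/\epsilon)$ then balances the two error sources. Care is also needed that the min player's oracle is only an \emph{ERM oracle over $\Hs$} (so we get best responses among deterministic $h$, which is exactly what the game needs since the min player's payoff is linear in $f$), and that the concentration union bound is over $\Hs$ rather than over $\Delta(\Hs)$ — linearity in $f$ makes this legitimate. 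Everything else (the Freund–Schapire averaging lemma, Hoeffding, the arithmetic combining $\epsilon^2$-scale errors into a final $\epsilon$) is routine.
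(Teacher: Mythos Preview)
Your proposal is correct and follows the paper's approach essentially step-for-step: Lagrangian reformulation with dual variables capped at $C=\Theta(1/\epsilon)$ (exactly the fix you anticipate in your last paragraph), Hedge for the dual player against an ERM best-response primal, Freund--Schapire averaging to get an approximate equilibrium, Hoeffding plus a union bound over the $|\Hs|\cdot N\cdot d\cdot m$ indices to transfer from empirical to population, and finally \cref{thm:swap} for the swap-regret claim. One small clarification on the $1/\epsilon^4$ scaling: it arises not from needing $\epsilon^2$ game accuracy that then ``translates quadratically'' through the Lagrangian, but simply because the payoffs in the $C$-bounded game are $O(C)$-bounded, so Hedge's average regret is $O(C\sqrt{\log(Nmd)/T})$ and the concentration error in \cref{lem:payoff-approx} also carries a factor of $C$; with $C=\Theta(1/\epsilon)$, driving both to $O(\epsilon)$ forces $T,n=\Theta(C^2/\epsilon^2)=\Theta(1/\epsilon^4)$.
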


\cref{thm: percal} shows that, with enough sample size, our proposed algorithm \texttt{PerDecCal} learns a predictor $\hat{f}$ that achieves nearly optimal utility compared to the best in-class $\gamma$-decision-calibrated predictor, while ensuring that its decision calibration error exceeds $\gamma$ by at most $\epsilon$, and this ensures that the receivers have no regret best responding to the predictions.

\texttt{PerDecCal} follows a minimax-based approach. Specifically, we introduce Lagrangian variables and reformulate the original problem as a minimax game. We then apply an oracle-efficient algorithm to compute an approximate equilibrium of this game, which yields a near-optimal solution to the original problem \cref{eq:constraint}. We now present the details and analysis of \texttt{PerDecCal}, with full proofs provided in \cref{sec:proof-finite}.

\paragraph{Lagrangian and Minimax Game}
As a standard technique in optimization theory, the constrained optimization problem can be equivalently written in its Lagrangian form, which can be interpreted as a minimax game. Specifically, we introduce the Lagrangian as follows:
\begin{equation}\label{eq:lagrangian}
\begin{aligned}
\min_{f\in\Delta{\Hs}}\max_{\lambda\in \R_{+}^{2Nmd}}&\Ls_\Ds(f,\lambda):=-\E_{h\sim f}\E_\Ds\InBrackets{\sum_{\bm a\in\As}u(\bm a,y)\cdot b(h(x),\bm a)}\\&+\sum_{s\in\{+,-\}}\sum_{i=1}^N\sum_{j=1}^d\sum_{a_i\in A_i}\lambda_{s,i,j,a_i}s\InParentheses{\E_f\E_\Ds[(h(x)_j-y_j)\cdot b_i(h(x),a_i)]-\gamma}.
\end{aligned}
\end{equation}

By the folklore result in optimization theory (\cite{boyd2004convex}), the minimax solution of \cref{eq:lagrangian} coincides with the optimal solution of \cref{eq:constraint}. After introducing the Lagrange multipliers, \cref{eq:lagrangian} can be viewed as a minimax game, where the minimization player is the predictor $f$, and the maximization player is the Lagrangian multiplier $\lambda$.

\paragraph{Best Response vs. No Regret Dynamics}
Viewing the problem as a minimax game, we consider solving it using \emph{Best Response vs. No Regret} (BRNR) dynamics, where the min player $f$ plays a best response to the current $\lambda$, and the max player updates $\lambda$ according to a no-regret algorithm. \citet{freund1996game} showed that when both players achieve low regret, the average of their plays converges to an approximate equilibrium.

\begin{lemma}[\cite{freund1996game}]\label{lem:approx-equilibrium}
Consider a two-player zero-sum game where the min player chooses strategies from $\Ps$ and the max player chooses strategies from $\Qs$. Assume $\Ps$ and $\Qs$ are convex, and the utility function is bilinear in the players' strategies. If the sequence of plays satisfies sublinear regret for both players, i.e.,
$$     \sum_{t=1}^T u(p_t, q_t) - \min_{p \in \Ps} u(p, q_t) \leq \gamma_\Ps T, \quad
\mbox{and} \quad
    \sum_{t=1}^T \max_{q \in \Qs} u(p_t, q) - u(p_t, q_t) \leq \gamma_\Qs T,
    $$
then letting $\bar{p} = \frac{1}{T} \sum_{t=1}^T p_t$ and $\bar{q} = \frac{1}{T} \sum_{t=1}^T q_t$, we have that $(\bar{p}, \bar{q})$ is a $(\gamma_\Ps + \gamma_\Qs)$-approximate minimax equilibrium of the game.
\end{lemma}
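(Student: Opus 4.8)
The plan is to follow the classical averaging argument of Freund and Schapire, deriving the approximate saddle-point property directly from the two regret bounds, without invoking the minimax theorem. Write $\bar u := \frac{1}{T}\sum_{t=1}^T u(p_t, q_t)$ for the average realized payoff along the trajectory. Convexity of $\Ps$ and $\Qs$ guarantees $\bar p\in\Ps$ and $\bar q\in\Qs$, so these are legitimate strategies; bilinearity of $u$ gives the exact identities $u(\bar p, q) = \frac1T\sum_{t=1}^T u(p_t,q)$ for every fixed $q\in\Qs$ and $u(p,\bar q)=\frac1T\sum_{t=1}^T u(p,q_t)$ for every fixed $p\in\Ps$. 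These identities are exactly what let the per-round regret guarantees transfer to the time-averaged strategies.

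First I would bound the max player's incentive to deviate against $\bar p$. Using the identity above together with the pointwise inequality $u(p_t,q)\le\max_{q'\in\Qs}u(p_t,q')$ and then the max-player regret hypothesis,
\[
\max_{q\in\Qs} u(\bar p,q)\;=\;\max_{q\in\Qs}\frac1T\sum_{t=1}^T u(p_t,q)\;\le\;\frac1T\sum_{t=1}^T\max_{q'\in\Qs}u(p_t,q')\;\le\;\bar u+\gamma_\Qs.
\]
Symmetrically, against $\bar q$ the min-player regret hypothesis yields
\[
\min_{p\in\Ps} u(p,\bar q)\;=\;\min_{p\in\Ps}\frac1T\sum_{t=1}^T u(p,q_t)\;\ge\;\frac1T\sum_{t=1}^T\min_{p'\in\Ps}u(p',q_t)\;\ge\;\bar u-\gamma_\Ps.
\]

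Chaining the two displays gives $\min_{p\in\Ps}u(p,\bar q)\ge\bar u-\gamma_\Ps\ge\max_{q\in\Qs}u(\bar p,q)-\gamma_\Ps-\gamma_\Qs$, while trivially $\min_{p\in\Ps}u(p,\bar q)\le u(\bar p,\bar q)\le\max_{q\in\Qs}u(\bar p,q)$. Hence the three quantities $\min_{p}u(p,\bar q)$, $u(\bar p,\bar q)$, $\max_{q}u(\bar p,q)$ all lie in an interval of length $\gamma_\Ps+\gamma_\Qs$. In particular $u(\bar p,q)\le u(\bar p,\bar q)+\gamma_\Ps+\gamma_\Qs$ for every $q\in\Qs$ and $u(p,\bar q)\ge u(\bar p,\bar q)-\gamma_\Ps-\gamma_\Qs$ for every $p\in\Ps$, which is precisely the statement that $(\bar p,\bar q)$ is a $(\gamma_\Ps+\gamma_\Qs)$-approximate minimax equilibrium.

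There is essentially no technical obstacle here; the only points requiring care are (i) matching the sign/orientation conventions of the statement — keeping track of which player minimizes and keeping each regret inequality pointed the right way so the final chain closes — and (ii) fixing the precise notion of ``approximate minimax equilibrium'' being proved, namely the unilateral-deviation (saddle-point) form used above. As a byproduct the same sandwich also pins $\bar u$ to within $\gamma_\Ps+\gamma_\Qs$ of the game's minimax value, so the argument simultaneously yields an approximate minimax theorem; I would state that explicitly only if it is invoked downstream.
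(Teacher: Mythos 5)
Your proof is correct and is exactly the classical averaging argument of Freund and Schapire; the paper itself states this lemma as an imported result from \cite{freund1996game} and gives no proof, so there is nothing to diverge from. The two one-sided bounds $\max_{q}u(\bar p,q)\le \bar u+\gamma_\Qs$ and $\min_{p}u(p,\bar q)\ge \bar u-\gamma_\Ps$, combined via the trivial sandwich $\min_{p}u(p,\bar q)\le u(\bar p,\bar q)\le\max_{q}u(\bar p,q)$, give precisely the $(\gamma_\Ps+\gamma_\Qs)$-approximate saddle-point property that the algorithm's analysis in \cref{thm: percal} relies on.
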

Since both $\Delta(\Hs)$ and $\mathbb{R}_{+}^{2Nmd}$ are convex spaces, we can apply \cref{lem:approx-equilibrium} once we establish sublinear regret for both players. For the min player $f$, this is straightforward because $\Ls(f, \lambda)$ is linear in the randomized predictor $f$. As a result, the best response to any fixed $\lambda$ is achieved by a deterministic predictor, that is, $\arg\min_{f \in \Delta(\Hs)} \Ls(f, \lambda) = \arg\min_{h \in \Hs} \Ls(h, \lambda)$. However, since the max player’s strategy space $\mathbb{R}_{+}^{2Nmd}$ is unbounded, designing a no-regret algorithm for the max player is non-trivial, as standard regret-minimization algorithms typically require bounded decision spaces.

\paragraph{Bounded Minimax Games}
To design a no-regret algorithm for the max player \(\lambda\),we first restrict the Lagrangian variables to be bounded. Specifically, we consider \(\lambda\in\Lambda=\{\lambda'|\lambda'\in \R_{+}^{2Nmd},\InNorms{\lambda}_1\leq C\}\). When the \(\ell_1\) norm of \(\lambda\) is bounded, it becomes straightforward to design a no-regret algorithm for the domain \(\Lambda\). We define the \(C\)-bounded minimax games as follows:
\begin{equation}
    \begin{aligned}
        \min_{f\in\Delta{\Hs}}\max_{\lambda\in \Lambda}\Ls_\Ds(f,\lambda):=&-\E_{h\sim f}\E_\Ds\InBrackets{\sum_{\bm a\in\As}u(\bm a,y)\cdot b(h(x),\bm a)}\\&+\sum_{s\in\{+,-\}}\sum_{i=1}^N\sum_{j=1}^d\sum_{a_i\in A_i}\lambda_{s,i,j,a_i}s\InParentheses{\E_f\E_D[(h(x)_j-y_j)\cdot b_i(h(x),a_i)]-\gamma}.
    \end{aligned}
\end{equation}

We first show that an approximate equilibrium $(f, \lambda)$ to the \(C\)-bounded minimax game is indeed an approximately optimal solution to the original problem \cref{eq:constraint}. We prove that $f$ achieves approximately optimal utility with respect to $\mathrm{OPT}(\Hs, \Ds, \gamma)$, while ensuring that its decision calibration error satisfies the $\gamma$-constraint up to an approximation error introduced by solving the $C$-bounded minimax game.
\begin{restatable}{lemma}{bounded}\label{lem:bounded}
    For an \(\epsilon\)-approximate equilibrium of the \(C\)-bounded minimax game \((f,\lambda)\). For the original unbounded constraint optimization problem \cref{eq:constraint}, we have that \(\E_{h\sim f}\E_\Ds[u(\bm a,y)\cdot b(h(x),\bm a)]\geq \mathrm{OPT}(\Hs,\Ds,\gamma)-2\epsilon\), and \(\mathrm{DecCE}(f)\leq\gamma+\frac{1+2\epsilon}{C}\).
\end{restatable}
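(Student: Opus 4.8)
The plan is to exploit two facts about the $C$-bounded minimax game: (i) the min player can always guarantee value at most the saddle value by playing the feasible predictor from Assumption~\ref{assump:realizability}, and (ii) the max player's freedom to put all its $\ell_1$ mass on a single coordinate lets us ``read off'' the decision calibration error from the game value. Let $(f,\lambda)$ be an $\epsilon$-approximate equilibrium, meaning $\max_{\lambda'\in\Lambda}\Ls_\Ds(f,\lambda')\le \Ls_\Ds(f,\lambda)+\epsilon$ and $\min_{f'\in\Delta(\Hs)}\Ls_\Ds(f',\lambda)\ge \Ls_\Ds(f,\lambda)-\epsilon$, and let $V^\star$ denote the saddle value of the bounded game.

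First I would bound $V^\star$ from above. By Assumption~\ref{assump:realizability} there is $f^\star\in\Delta(\Hs)$ with $\mathrm{DecCE}(f^\star)\le\gamma$, so every term $s(\E_{f^\star}\E_\Ds[(h(x)_j-y_j)b_i(h(x),a_i)]-\gamma)$ is $\le 0$; hence $\max_{\lambda'\in\Lambda}\Ls_\Ds(f^\star,\lambda')\le -\E_{h\sim f^\star}\E_\Ds[\sum_{\bm a}u(\bm a,y)b(h(x),\bm a)]\le -\mathrm{OPT}(\Hs,\Ds,\gamma)$, where the last inequality is because $f^\star$ is feasible for \cref{eq:constraint}. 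This gives $V^\star\le -\mathrm{OPT}(\Hs,\Ds,\gamma)$.

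Next, from the approximate-equilibrium inequalities, $\max_{\lambda'\in\Lambda}\Ls_\Ds(f,\lambda')\le V^\star+2\epsilon$ (standard: the max player's value at $f$ is within $\epsilon$ of $\Ls_\Ds(f,\lambda)$, which is within $\epsilon$ of the min-player best response, which is at most $V^\star$). Now for the utility bound, take $\lambda'=0$ in the left-hand side: $-\E_{h\sim f}\E_\Ds[\sum_{\bm a}u(\bm a,y)b(h(x),\bm a)]\le \max_{\lambda'}\Ls_\Ds(f,\lambda')\le V^\star+2\epsilon\le -\mathrm{OPT}(\Hs,\Ds,\gamma)+2\epsilon$, which rearranges to the claimed utility guarantee. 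For the calibration bound, let $(s^\dagger,i^\dagger,j^\dagger,a^\dagger)$ be the coordinate achieving $\mathrm{DecCE}(f)$ with the appropriate sign, and take $\lambda'=C e_{s^\dagger,i^\dagger,j^\dagger,a^\dagger}$: then $\Ls_\Ds(f,\lambda') = -\E_{h\sim f}\E_\Ds[\sum_{\bm a}u(\bm a,y)b(h(x),\bm a)] + C(\mathrm{DecCE}(f)-\gamma)$. Since the utility term is at least $-1$ (utilities bounded by $1$), and $\Ls_\Ds(f,\lambda')\le V^\star+2\epsilon\le -\mathrm{OPT}+2\epsilon\le 2\epsilon$ (as $\mathrm{OPT}\ge 0$), we get $C(\mathrm{DecCE}(f)-\gamma)\le 1+2\epsilon$, i.e. $\mathrm{DecCE}(f)\le\gamma+\frac{1+2\epsilon}{C}$.

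The only subtle point — and the step I would be most careful about — is the sign bookkeeping in the Lagrangian: the constraint $\mathrm{DecCE}(f)\le\gamma$ is a max of absolute values, which is encoded as $2Nmd$ linear constraints $\pm(\E_f\E_\Ds[(h(x)_j-y_j)b_i(h(x),a_i)])\le\gamma$, and one must check that putting all $\ell_1$ budget $C$ on the single worst-signed coordinate indeed reproduces $C(\mathrm{DecCE}(f)-\gamma)$ as the penalty term, with all other coordinates contributing nonpositively at the feasible $f^\star$ (used in the upper bound on $V^\star$) but being free to be selected at $f$ (used in the lower bound). Everything else is a routine chain of inequalities using boundedness of $u$ and nonnegativity of $\mathrm{OPT}$.
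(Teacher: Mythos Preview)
Your argument is correct and follows the same Lagrangian chain as the paper's proof; the paper splits into two cases (according to whether $\hat f$ is already feasible), whereas you handle both conclusions uniformly by upper-bounding $\max_{\lambda'\in\Lambda}\Ls_\Ds(f,\lambda')$ once and then evaluating at $\lambda'=0$ for the utility bound and at $\lambda'=Ce_{s^\dagger,i^\dagger,j^\dagger,a^\dagger}$ for the calibration bound, which is a slightly cleaner organization of the same inequalities.

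One small slip to fix: the step $-\E_{h\sim f^\star}\E_\Ds[\sum_{\bm a}u(\bm a,y)b(h(x),\bm a)]\le -\mathrm{OPT}(\Hs,\Ds,\gamma)$ requires $f^\star$ to be the \emph{optimal} feasible predictor for \cref{eq:constraint}, not merely some feasible predictor supplied by Assumption~\ref{assump:realizability}; for a generic feasible $f^\star$ the inequality goes the other way. Simply take $f^\star$ to be the maximizer of \cref{eq:constraint} (which exists because the feasible set is a nonempty closed subset of the compact simplex $\Delta(\Hs)$ and the objective is continuous), and the rest of your chain goes through verbatim.
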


Therefore, we reduce the sender's constrained optimization problem \cref{eq:constraint} to solving the equilibrium of the above bounded minimax game.

\paragraph{Solving Bounded Minimax Games}
We now move on to solve the \(C\)-bounded minimax game. Note that the domain \(\lambda\) can be viewed as a scaling of the probability simplex. A natural choice of algorithm for this domain is a variant of the Hedge algorithm~\citep{freund1997decision}, which is originally designed for the simplex. For simplicity, we scale the Hedge algorithm by \(C\) while still referring to it as Hedge. 
For computing the best response of the minimization player, we assume access to an empirical risk minimization (ERM) oracle that finds the best deterministic predictor given the current \(\lambda\). We formally define the ERM oracle as follows:
\begin{definition}[ERM oracle]
    Let the loss function be
    \begin{align*}
        \ell_\lambda(h,x,y)=&-\E\InBrackets{\sum_{\bm a \in \As}u(\bm a,y)\cdot b(h(x),\bm a)}\\&+\sum_{s\in\{+,-\}}\sum_{i=1}^N\sum_{j=1}^d\sum_{a_i\in A_i}\lambda_{s,i,j,a_i}s\InParentheses{\E_f\E_D[(h(x)_j-y_j)\cdot \tilde{b}_i(h(x),a_i)]-\gamma},
    \end{align*} given a dataset with data points \(D=\{(x_1,y_1),...,(x_n,y_n)\}\), the ERM oracle finds the best predictor that minimizes the empirical average loss:
    \(\mathrm{ERM}(D,\lambda)=\arg\min_{h\in\Hs}\frac{1}{n}\sum_{i=1}^n\ell_\lambda(h,x_i,y_i).\)
\end{definition}
The ERM oracle is commonly assumed in the learning theory literature and can often be implemented in practice using standard optimization methods. For example, when 
\(\Hs\) is a class of neural networks with a fixed architecture, the ERM oracle can be approximated by running heuristic methods such as stochastic gradient descent (SGD).

We are now ready to present our algorithm \texttt{PerDecCal} in \cref{alg: finite}. \texttt{PerDecCal} is efficient the ERM oracle is called only once per iteration, and all other operations are computationally polynomial in \(N,m,d\). Therefore, overall \texttt{PerDecCal} is oracle-efficient, requiring $O(\log(Nmd)/\epsilon^4)$ calls to the ERM oracle.
\begin{algorithm}[ht]
\caption{\texttt{PerDecCal} (Persuasive Decision Calibration)}
\label{alg: finite}
\begin{algorithmic}[1]
\REQUIRE A set of samples $D$, ERM oracle $\mathrm{ERM}(D,\lambda)$, dual bound $C$ and tolerance $\gamma$.
\STATE Initialize \(\lambda_1=\frac{C}{2Nmd}\1\).
\FOR{$t=1,\cdots,T$}{
\STATE Learner best responds to $\lambda_t$:
\INDSTATE Use the ERM oracle to compute \(h_t=\mathrm{ERM}(D,\lambda_t)\).

\STATE Auditor runs Hedge to obtain $\lambda_{t+1}$:
\INDSTATE $\lambda_{t+1}=\mathrm{Hedge}(c_{1:t})$ where $c_t(\lambda_{s,i,j,a_i})=\lambda_{s,i,j,a_i}s\InParentheses{\E_h\E_D[(h(x)_j-y_j)\cdot b_i(h(x),a_i)]-\gamma}$.
}
\ENDFOR
\ENSURE $\hat{f}=\mathrm{Uniform}(h_1,\cdots,h_T)$.
\end{algorithmic}
\end{algorithm}

\texttt{PerDecCal} operates on the empirical dataset \(D\) instead of the true distribution \(\Ds\). Therefore, a finite-sample analysis is needed to show that an approximate equilibrium found for 
\(\Ls_D(f,\lambda)\) also serves as an approximate equilibrium for \(\Ls_\Ds(f,\lambda)\). We prove a uniform convergence result showing that the payoff \(\Ls_\Ds(f,\lambda)\) can be uniformly approximated by \(\Ls_D(f,\lambda)\).

\begin{restatable}{lemma}{payoffapprox}\label{lem:payoff-approx} With probability \(1-\delta\),
    \(\forall f\in\Delta(\Hs),\lambda\in\Lambda\), \[\InAbs{\Ls_\Ds(f,\lambda)-\Ls_D(f,\lambda)}\leq\sqrt{\frac{\ln\frac{\InAbs{4\Hs}}{\delta}}{2n}}+C\sqrt{\frac{8\ln\frac{4\InAbs{\Hs}Ndm}{\delta}}{n}}.\]
\end{restatable}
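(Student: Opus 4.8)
The plan is to bound the deviation $|\Ls_\Ds(f,\lambda)-\Ls_D(f,\lambda)|$ by splitting the Lagrangian into its two structurally different pieces: the sender-utility term $\E_{h\sim f}\E[\sum_{\bm a}u(\bm a,y)b(h(x),\bm a)]$ and the weighted sum of calibration terms $\sum_{s,i,j,a_i}\lambda_{s,i,j,a_i}\,s(\E_{h\sim f}\E[(h(x)_j-y_j)b_i(h(x),a_i)]-\gamma)$. Since both $\Ls_\Ds$ and $\Ls_D$ are affine in $f\in\Delta(\Hs)$, it suffices to prove the bound pointwise over deterministic $h\in\Hs$ and then take expectation over $h\sim f$; likewise the calibration sum is linear in $\lambda$, so after controlling each of the $2Nmd$ coordinate functionals uniformly over $h$, Hölder's inequality against $\|\lambda\|_1\le C$ supplies the factor $C$. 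So the real work is two uniform-deviation statements over the finite class $\Hs$.

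\textbf{Step 1 (sender term).} For each fixed $h$, the quantity $g_h(x,y):=\sum_{\bm a\in\As}u(\bm a,y)b(h(x),\bm a)$ is a function of a single sample $(x,y)$ taking values in $[0,1]$ (it is an expectation of $u\in[0,1]$ under the product best-response distribution $b(h(x),\cdot)$). Hoeffding's inequality bounds $|\E_\Ds[g_h]-\frac1n\sum_i g_h(x_i,y_i)|$ by $\sqrt{\ln(2/\delta')/(2n)}$ with probability $1-\delta'$; a union bound over the $|\Hs|$ hypotheses (setting $\delta'=\delta/(4|\Hs|)$, to leave room for Step 2) gives the first term $\sqrt{\ln(4|\Hs|/\delta)/(2n)}$, uniformly over $h$, hence over $f$.

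\textbf{Step 2 (calibration terms).} Fix $h$, a sign $s$, receiver $i$, coordinate $j$, action $a_i$. The per-sample quantity is $(h(x)_j-y_j)\,b_i(h(x),a_i)$, which lies in $[-2,2]$ since $h(x)_j,y_j\in[-1,1]$ and $b_i\in\{0,1\}$. Hoeffding again bounds its empirical deviation by $2\sqrt{\ln(2/\delta'')/(2n)}=\sqrt{2\ln(2/\delta'')/n}$ with probability $1-\delta''$; union-bounding over the $2Nmd$ coordinate functionals and the $|\Hs|$ hypotheses with $\delta''=\delta/(4|\Hs|Ndm)$ makes all of these deviations at most $\sqrt{2\ln(4|\Hs|Ndm/\delta)/n}$ simultaneously. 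Since $\Ls$'s dual part is $\sum_{s,i,j,a_i}\lambda_{s,i,j,a_i}s(\cdot)-\gamma\|\lambda\|_1\cdot(\text{const})$ and the $\gamma$ offsets cancel between $\Ls_\Ds$ and $\Ls_D$, the dual contribution to the deviation is at most $\|\lambda\|_1\cdot\max_{\text{coord}}(\text{deviation})\le C\sqrt{2\ln(4|\Hs|Ndm/\delta)/n}$. Absorbing constants (the paper writes $\sqrt{8\ln(\cdot)/n}$, which is the same up to the slack in the Hoeffding constant / a factor of $2$ in the range bound) yields the claimed second term.

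\textbf{Step 3 (combine).} On the intersection of the two good events, which has probability at least $1-\delta$ by the union bound, both deviations hold for every $h\in\Hs$ simultaneously; taking $\E_{h\sim f}$ and using linearity in $f$ (and linearity in $\lambda$ with $\|\lambda\|_1\le C$) gives $|\Ls_\Ds(f,\lambda)-\Ls_D(f,\lambda)|\le\sqrt{\ln(4|\Hs|/\delta)/(2n)}+C\sqrt{8\ln(4|\Hs|Ndm/\delta)/n}$ for all $f\in\Delta(\Hs)$ and $\lambda\in\Lambda$, as claimed.

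The only mild subtlety — not really an obstacle — is bookkeeping the failure budget so the two union bounds share the total $\delta$ (hence the $4|\Hs|$ and $4|\Hs|Ndm$ inside the logs), and noting that no covering-number argument over $\lambda$ is needed because the Lagrangian is \emph{exactly} linear in $\lambda$, so uniformity over the continuum $\Lambda$ comes for free from Hölder once the finitely many coordinate functionals are controlled. The finiteness of $\Hs$ is what makes the plain union bound suffice; the infinite-class case is handled separately elsewhere in the paper via covering numbers.
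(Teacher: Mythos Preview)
Your proposal is correct and follows essentially the same route as the paper: the paper likewise proves two preliminary uniform-deviation lemmas (one for the sender-utility term via Hoeffding plus a union bound over $|\Hs|$, one for each calibration coordinate via Hoeffding plus a union bound over $|\Hs|Ndm$), then combines them by the triangle inequality together with $\|\lambda\|_1\le C$ and linearity in $f$. Two cosmetic notes: you do not need to union-bound over the sign $s$ in Step~2 since the per-sample quantity $(h(x)_j-y_j)b_i(h(x),a_i)$ is the same for both signs and the absolute value already covers them; and the $\sqrt{8}$ in the paper is exactly what Hoeffding gives for range $[-2,2]$ (width $4$), so there is no slack to absorb.
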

With \cref{lem:payoff-approx}, it can be shown that, with high probability, an approximate equilibrium under \(\Ls_D(f,\lambda)\) is also an approximate equilibrium under \(\Ls_\Ds(f,\lambda)\), completing the analysis of \cref{thm: percal}.

\section{Matching the Bayesian Benchmark}\label{sec: bayesian}
In this section, we show that the sender utility achieved by our algorithm matches that of a fully informed Bayesian sender who is restricted to send signals induced by the same class of decision-calibrated predictors. Specifically, we compare against a restricted Bayesian persuasion benchmark in which both the sender and the receiver have full knowledge of the distribution $\Ds$, and the sender is constrained to commit to signaling schemes induced by $\Hs$ (defined in \cref{def: signals}), rather than all possible signaling schemes. Since the classical Bayesian persuasion model inherently involves a single receiver, we focus on the comparison within the single-receiver setting.

\paragraph{Bayesian Persuasion Benchmark} The distribution $\Ds$ over $\Xs\times\Ys$ which induces a distribution $\mu_\Ds$ over means of the outcome $y$ conditional on the feature $x$: for any $\theta\in\Ys$
\[
\mu_{\Ds}(\theta)=\Pr_{(x,y)\sim\Ds}[\E[y\mid x]=\theta].
\]
Here we use the fact that $\Ys$ is convex. The set of state is $\Theta\subset\Ys$ with prior $\mu_\Ds$. The receiver's action set $\As^\bp$ equals to the action set $\As$ in the prediction setting with utility function $v^\bp:\As^\bp\times\Theta\rightarrow\R$. The sender has utility function $u^\bp:\As^\bp\times\Theta\rightarrow\R$. We have $v^\bp(a,\theta)=\E_{y\sim\theta}[v(a,y)]$ and $u^\bp(a,\theta)=\E_{y\sim\theta}[u(a,y)]$. Here we slightly abuse notation by writing $y \sim \theta$ to indicate that $y$ is drawn from a distribution with mean $\theta$. A signaling scheme $\pi:\Theta\rightarrow\Delta(S)$ that randomly maps states to a set $S$ of signals. Once the receiver observes a signal $s\in S$, they will update their belief from the prior $\mu$ to a posterior $\mu_s\in\Delta(\Theta)$ and consequently obtain a posterior mean that is in $\Ys$. In other words, any signaling scheme will result in a distribution of posterior means $Q\in\Delta(\Ys)$.

We now establish the connection between decision calibration and the Bayesian persuasion benchmark introduced above. To do so, we use the notion of calibration as a bridge. Therefore, we begin by introduction the notion of calibration and presenting its relationship to Bayesian persuasion.

\begin{definition}[Calibration]
    A randomized predictor $f\in\Delta(\Hs)$ is said to be perfectly calibrated if 
    \[
\forall v\in\Ys, \qquad    \mathrm{CE}(f):=\E_{h\sim f,(x,y)\sim\Ds}[(y-h(x))|h(x)=v]=0.
    \]
\end{definition}

The following lemma states that every signaling scheme corresponds to a perfectly calibrated predictor, in the sense that the distribution over posterior means induced by the signaling scheme coincides with the distribution over predictions induced by a randomized calibrated predictor, and vice versa.
\begin{restatable}{lemma}{equivalence}\label{lem: equivalence}
    Consider a randomized predictor $f\in\Delta(\Hs_\all)$ where $\Hs_\all=\{h:\Xs\rightarrow\Ys\}$ is the class of all possible deterministic predictors. There exists a distribution $Q_f\in\Delta(\Ys)$ such that for any $v\in\Ys$, 
    \(
    Q_f(v)=\Pr_{h\sim f,(x,y)\sim\Ds}[h(x)=v].
    \)
   A distribution \(Q\in\Delta(\Ys)\) corresponds to the distribution over posterior means induced by some signaling scheme if and only if it is the prediction distribution \(Q_f\) of a perfectly calibrated predictor \(f\).
\end{restatable}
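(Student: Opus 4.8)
The plan is to use the following bridge between the two notions: a perfectly calibrated predictor is exactly one whose predictions $h(X)$, viewed as \emph{signals}, are self-labelling, in the sense that the posterior mean of $Y$ conditional on the prediction being $v$ equals $v$. Throughout write $\theta_x:=\E[Y\mid X=x]$, so that $\mu_\Ds$ is the law of $\theta_X$ under $X\sim\Ds$, and recall that $h$ (drawn from $f$) is independent of the data $(X,Y)$. The preliminary claim that $Q_f$ exists is immediate: it is the image of the product measure $f\otimes\Ds$ under the evaluation map $(h,x)\mapsto h(x)$, a probability measure on $\Ys$.

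$(\Leftarrow)$ Given a perfectly calibrated $f$, I would build a signaling scheme whose posterior-mean distribution is $Q_f$. Take the signal set $S=\supp(Q_f)\subseteq\Ys$ and, on state $\theta$, send $v$ with probability $\pi(\theta)(v):=\Pr_{h\sim f,\ x\sim\Ds\mid\theta_x=\theta}[h(x)=v]$; equivalently, on state $\theta$ first redraw $x$ from $\Ds$ conditioned on $\theta_x=\theta$, then output $h(x)$ for $h\sim f$. Two short computations finish the direction. First, the marginal signal probability is $\E_{\theta\sim\mu_\Ds}[\pi(\theta)(v)]=\Pr_{h\sim f,x\sim\Ds}[h(x)=v]=Q_f(v)$. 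Second, using the tower rule and $h\perp(X,Y)$ to replace $\theta_x$ by $Y$ inside the expectation,
\[
\E_{\theta\sim\mu_\Ds}[\theta\,\pi(\theta)(v)]=\E_{h\sim f,(x,y)\sim\Ds}[Y\cdot\1[h(x)=v]],
\]
so the posterior mean after signal $v$ is $\E_{h\sim f,(x,y)\sim\Ds}[Y\cdot\1[h(x)=v]]/Q_f(v)=\E[Y\mid h(X)=v]=v$, where the last step is perfect calibration. Hence the scheme emits $v$ with probability $Q_f(v)$ and produces posterior mean $v$, so its posterior-mean distribution is exactly $Q_f$.

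$(\Rightarrow)$ Conversely, let $Q$ be the posterior-mean distribution of a scheme $\pi:\Theta\to\Delta(S)$. Relabelling each signal by the posterior mean it induces (which only coarsens the scheme and preserves the posterior-mean distribution, since a merged signal's posterior is a mixture whose mean is unchanged), I may assume $S\subseteq\Ys$ and $Q(v)=\Pr[\text{signal}=v]$. Then I would derandomize $\pi$ into $f\in\Delta(\Hs_\all)$: introduce a uniform seed $U\sim\mathrm{Unif}[0,1]$ and selectors $g_\theta$ with $g_\theta(U)\sim\pi(\theta)$, and for each realization $u$ put $h_u(x):=g_{\theta_x}(u)$ and $f:=\mathrm{Law}(h_U)$. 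Then $\Pr_{h\sim f,x\sim\Ds}[h(x)=v]=\E_{x\sim\Ds}[\pi(\theta_x)(v)]=Q(v)$, i.e.\ $Q_f=Q$; and conditioning on $h(X)=v$ (equivalently, on the scheme emitting $v$) gives $\E[Y\mid h(X)=v]=\E[\theta_X\mid\text{signal}=v]=v$ by the same tower-rule computation, so $\mathrm{CE}(f)=0$.

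The tower-rule manipulations are routine; the step that needs care is the derandomization and its measurability --- turning the stochastic kernel $\pi$ into a genuine law over deterministic maps $\Xs\to\Ys$ via inverse-CDF selectors, checking that $u\mapsto h_u$ and the signal set $S$ are well behaved enough for $Q_f$ and the above conditional expectations to be defined, and, implicit in the pmf notation $Q_f(v)=\Pr[\cdot]$, confirming that one is working with atomic prediction/posterior-mean distributions (or with a suitably discretized $\Hs$). The relabelling reduction is also needed so that the signal label literally coincides with the posterior mean it induces, which is what makes the ``self-labelling'' bridge line up on both sides.
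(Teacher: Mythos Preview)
Your proposal is correct and follows essentially the same route as the paper: in both directions the signaling scheme and the calibrated predictor are coupled by sending, on state $\theta$, the prediction/signal $v$ with probability $\Pr_{h\sim f,\,x\sim\Ds\mid\theta_x=\theta}[h(x)=v]$ (and conversely labelling predictions by the posterior mean of the emitted signal), after which the calibration identity $\E[Y\mid h(X)=v]=v$ is exactly the statement that the posterior mean of the signal $v$ is $v$. You are in fact more careful than the paper on two points it glosses over --- the explicit relabelling of signals by their posterior means, and the derandomization of the stochastic kernel $x\mapsto\pi(\theta_x)$ into a genuine element of $\Delta(\Hs_\all)$ via a uniform seed --- so your write-up only strengthens the argument.
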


Note that the receiver’s utility function $v_i$ in the prediction setting is linear in the outcome $y$ and the corresponding utility function $v^\bp$ in the Bayesian persuasion setting is linear in the conditional mean $\theta$. By \cref{lem: equivalence}, it follows immediately that for the receiver, best responding to the predictions of a perfectly calibrated predictor is equivalent to best responding to the posterior means induced by the corresponding signaling scheme. Therefore, the calibrated predictor and the corresponding signaling scheme lead to the same sender's utility.

Now we present the connection between decision calibration and calibration. The next lemma says that any decision-calibrated predictor can be converted to a calibrated predictor without decreasing the sender's expected utility. A similar observation was made in \cite{zhao2021calibrating}, though their result applies only to deterministic predictors, whereas we extend the analysis to randomized predictors.
\begin{restatable}{lemma}{dctoc}\label{lem: dctoc}
  For any randomized predictor $f$ is perfectly decision calibrated, we can construct a randomized predictor $f'$ such that (i) $f'$ is perfectly calibrated; (ii) the sender obtains the same expected utility under $f$ and $f'$.
\end{restatable}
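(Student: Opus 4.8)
The plan is to post-process the decision-calibrated predictor $f$ by "relabeling" each prediction value with its calibrated correction, exactly as in the deterministic construction of \cite{zhao2021calibrating}, but carried out at the level of the randomized predictor. Concretely, for a predictor $h$ drawn from $f$, consider the level sets of $h$, i.e.\ for each value $v \in \Ys$ in the (countable, since $\Hs$ indexes finitely many functions and $\Ds$ may be arbitrary---more carefully, we work with the joint law of $(h(X), Y)$ under $h \sim f$, $(X,Y)\sim\Ds$) range of $h$, look at $c(v) := \E_{h\sim f,\,\Ds}[\,Y \mid h(X) = v\,]$. Define $f'$ to be the randomized predictor that first samples $h \sim f$ and then outputs the function $x \mapsto c(h(x))$. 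Intuitively, $f'$ simply renames each prediction bucket by its true conditional mean.

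The key steps, in order, are: (1) Show $f'$ is perfectly calibrated. This is essentially the tower property: conditioning on the event $\{c(h(X)) = w\}$ groups together all original buckets $v$ with $c(v) = w$, and on each such bucket the conditional mean of $Y$ is $w$ by construction, so the conditional mean over the union is still $w$; hence $\E[Y - f'(X) \mid f'(X) = w] = 0$. One subtlety is that $f'$ is a randomized predictor and the calibration expectation is taken jointly over $h \sim f$ and $\Ds$, so I need to be careful that the conditioning event lives in the joint space---but the same tower-property argument goes through verbatim on the joint law. (2) Show the sender's expected utility is preserved. Here is where decision calibration (rather than mere measurability) is used: the sender's utility under any predictor $g$ is $\E_{h\sim g,\Ds}[\sum_{\bm a} u(\bm a, Y)\, b(h(X),\bm a)]$, and $b_i(\cdot, a_i)$ depends on the prediction only through the receiver's best response $\arg\min_{a'} v_i(a', \cdot)$. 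Since $v_i$ is linear in $y$, $v_i(a', v) = v_i(a', \E[Y \mid h(X)=v])$ in expectation---more precisely, I claim that for $f$ perfectly decision-calibrated, $b_i(h(x), a_i)$ and $b_i(c(h(x)), a_i)$ induce the same distribution over joint actions in the relevant expectation. The cleanest route: perfect decision calibration ($\mathrm{DecCE}(f)=0$) says $\E[(Y - h(X))\, b_i(h(X), a_i)] = 0$ for every $i, j, a_i$, which lets me replace $h(X)$ by $c(h(X))$ inside the utility expectation without changing its value, because $u$ and the $v_i$'s are linear in $y$ and $b$ is a product of indicator-valued best responses that only shift on a measure-zero set under the replacement. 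I would formalize this by decomposing the utility expectation over the buckets $v$ and using linearity to swap $\E[Y \mid h(X)=v]$ for $v$ term by term.

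The main obstacle I anticipate is step (2): making rigorous the claim that replacing $h(X)$ with $c(h(X))$ does not change the receivers' best responses in a utility-relevant way. The best response $\arg\min_{a'} v_i(a', v)$ can genuinely differ between $v$ and $c(v)$ at ties or when the bucket is not a singleton, so I cannot argue pointwise. Instead I must argue that the \emph{decision-calibration} identities are exactly the moment conditions needed so that the expected utility---which is linear in $Y$ given the action---is invariant. The right statement is probably: for a perfectly decision-calibrated $f$, within each best-response region the predictor is already unbiased, so collapsing each region to its conditional mean (which is what $c$ does, possibly refining the regions) leaves every term $\E[u(\bm a, Y) b(h(X), \bm a)]$ unchanged. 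I would prove this by writing the utility as a sum over joint actions $\bm a$ of $\E[u(\bm a, Y) \1\{h(X) \in R_{\bm a}\}]$ where $R_{\bm a}$ is the (prediction-space) region inducing $\bm a$, noting $c$ is constant on a refinement of each $R_{\bm a}$, and invoking decision calibration to show the conditional mean of $Y$ on $R_{\bm a}$ is unchanged under the relabeling. (3) Finally, assemble (1) and (2) to conclude. I'd also remark that the construction only uses $\mathrm{DecCE}(f) = 0$, and that an approximate version---$\epsilon$-decision-calibrated implies $O(\epsilon)$-close utility to a perfectly calibrated predictor---follows by tracking the additive slack through the same decomposition, which is what the downstream Bayesian-benchmark argument in \cref{sec: bayesian} will need.
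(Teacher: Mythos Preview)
Your construction does not preserve the sender's utility in step (2). The map $c(v) = \E[Y \mid h(X)=v]$ can move a prediction $v$ into a different best-response region, changing the receiver's action. Decision calibration only guarantees that $Y-h(X)$ has mean zero when averaged over an entire best-response region, not that $c(v)=v$ pointwise, so there is no reason $c(v)$ should remain in the same region as $v$. Concretely: take $d=1$, a single receiver with two actions and best response $a_1$ iff the prediction is nonnegative; let $\Xs=\{x_1,x_2\}$ equiprobable with $\E[Y\mid x_1]=-0.5$, $\E[Y\mid x_2]=0.8$; and take the deterministic predictor $h(x_1)=0.1$, $h(x_2)=0.2$. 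This $h$ is perfectly decision calibrated (both predictions lie in the $a_1$-region and $\E[Y-h(X)]=0$), yet your $c$ sends $0.1\mapsto -0.5$, flipping the best response on $x_1$. If the sender's utility is $\1\{a=a_1\}$, it drops from $1$ under $f$ to $1/2$ under your $f'$. Your claim that the shift happens only on a ``measure-zero set'' is exactly what fails here.

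The paper's construction avoids this by collapsing at the level of best-response regions rather than prediction-value level sets: it sets $f'_a := \E_{h\sim f,\Ds}[h(X)\mid b(h(X),a)=1]$ and replaces every prediction in region $a$ by the single value $f'_a$. Because each best-response region is convex (an intersection of half-spaces defined by the linear $v_i$), the average $f'_a$ remains in that region, so the receiver's action is unchanged and (ii) is immediate. Decision calibration then gives $\E[Y\mid b(h(X),a)=1]=\E[h(X)\mid b(h(X),a)=1]=f'_a$, which is exactly calibration of $f'$ on each of its (at most $m$) level sets, giving (i). Your parenthetical ``collapsing each region to its conditional mean (which is what $c$ does)'' actually describes the paper's construction, not yours; had you defined $f'$ that way, the argument would go through.
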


We are now ready to define the set of signaling schemes induced by $\Hs$ that we consider in our Bayesian persuasion benchmark.

\begin{definition}[Signaling scheme class induced by $\Hs$]\label{def: signals}
Given any class of deterministic predictors $\Hs$,
\begin{enumerate}
    \item We define $\Fs_\dcal(\Hs)$ as the class of randomized predictor over $\Hs$ that is perfectly decision calibrated.
    \item For any $f\in\Fs_\dcal(\Hs)$, let $f'$ be the perfectly calibrated predictor constructed by \cref{lem: dctoc}. Define $\Fs_\mathrm{CAL}(\Hs)$ as the class of all such predictors $f'$.
    \item For any $f'$ in $\Fs_\mathrm{CAL}(\Hs)$, let $\pi_{f'}$ be the corresponding signaling scheme by \cref{lem: equivalence}. Define $\Pi_{\Hs}$ be the class of all such signaling schemes $\pi_{f'}$. We say that $\Pi_{\Hs}$ is the class of signaling schemes induced by $\Hs$.
\end{enumerate}    
\end{definition}

Finally we are ready to present our main result in this section. 
\begin{restatable}{theorem}{matching}\label{thm: matching}
Given at least $O\InParentheses{\frac{(\ln(\InAbs{\Hs}dm/\delta)}{\epsilon^4}}$ samples, with probability $1-\delta$, \texttt{PerDecCal} can output a predictor $\hat{f}$ such that the expected sender's utility under $\hat{f}$ is no worse than $\mathrm{BayesOPT}(\mu_\Ds,\Pi_\Hs)-\epsilon$ where we denote the optimal sender utility under our Bayesian persuasion benchmark as $\mathrm{BayesOPT}(\mu_\Ds,\Pi_\Hs)$.
\end{restatable}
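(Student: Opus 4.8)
The plan is to chain together three facts already available: (i) \cref{thm: percal}, which guarantees that \texttt{PerDecCal} outputs a predictor $\hat f$ that is $(\gamma+\epsilon)$-decision calibrated and achieves sender utility at least $\mathrm{OPT}(\Hs,\Ds,\gamma)-\epsilon$; (ii) \cref{lem: dctoc}, which says every perfectly decision-calibrated randomized predictor can be converted, without loss of sender utility, to a perfectly calibrated one; and (iii) \cref{lem: equivalence}, which identifies perfectly calibrated predictors over $\Hs_{\all}$ with signaling schemes, so that in particular every $f'\in\Fs_{\mathrm{CAL}}(\Hs)$ induces a signaling scheme $\pi_{f'}\in\Pi_\Hs$ yielding the same sender utility. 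Composing (ii) and (iii), for any perfectly decision-calibrated $f\in\Fs_\dcal(\Hs)$ there is a scheme in $\Pi_\Hs$ with the same sender utility, and conversely any scheme in $\Pi_\Hs$ arises this way; hence $\mathrm{BayesOPT}(\mu_\Ds,\Pi_\Hs)=\sup_{f\in\Fs_\dcal(\Hs)}\{\text{sender utility of }f\}$. Since $\Fs_\dcal(\Hs)$ is exactly the set of $f$ with $\mathrm{DecCE}(f)=0$, this quantity is $\le \mathrm{OPT}(\Hs,\Ds,0)\le\mathrm{OPT}(\Hs,\Ds,\gamma)$ (relaxing the constraint from $0$ to $\gamma$ only enlarges the feasible set). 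Combining with (i),
\[
\text{sender utility of }\hat f \;\ge\; \mathrm{OPT}(\Hs,\Ds,\gamma)-\epsilon \;\ge\; \mathrm{BayesOPT}(\mu_\Ds,\Pi_\Hs)-\epsilon,
\]
which is the claim, after rescaling $\epsilon$ by a constant to absorb the various $L$, $m$ factors and adjusting the sample-size bound accordingly.

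First I would carefully spell out the single-receiver specialization: here $N=1$, $b(h(x),\bm a)=b_1(h(x),a_1)$, and the utility/action structures on the prediction side and the Bayesian-persuasion side are matched by the identities $v^\bp(a,\theta)=\E_{y\sim\theta}[v(a,y)]$ and $u^\bp(a,\theta)=\E_{y\sim\theta}[u(a,y)]$ together with linearity of $v$ in $y$ (Assumption on $v_i$). Linearity is what makes "best responding to the prediction $h(x)$" coincide with "best responding to the posterior mean," so that the induced receiver behavior — and hence the sender's expected utility — is preserved when we pass from a calibrated predictor to its signaling scheme via \cref{lem: equivalence}. I would state this correspondence as an equality of sender utilities and note that it is the content already asserted in the paragraph following \cref{lem: equivalence}.

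Second, I would establish the benchmark identity $\mathrm{BayesOPT}(\mu_\Ds,\Pi_\Hs)=\mathrm{OPT}(\Hs,\Ds,0)$. The $\le$ direction: each $\pi\in\Pi_\Hs$ is $\pi_{f'}$ for some $f'\in\Fs_{\mathrm{CAL}}(\Hs)$, which by \cref{def: signals} comes from some $f\in\Fs_\dcal(\Hs)$ with the same sender utility as $f'$ (by \cref{lem: dctoc}), and this $f$ is feasible for \cref{eq:constraint} with $\gamma=0$; the $\ge$ direction: conversely any perfectly decision-calibrated $f$ maps forward to a scheme in $\Pi_\Hs$ of equal utility. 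Then $\mathrm{OPT}(\Hs,\Ds,0)\le\mathrm{OPT}(\Hs,\Ds,\gamma)$ is immediate since the $\gamma$-constraint is a relaxation. Plugging $\hat f$'s guarantee from \cref{thm: percal} finishes the argument; the sample complexity $O((\ln(|\Hs|dm/\delta))/\epsilon^4)$ is just the bound from \cref{thm: percal} with $N=1$.

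The main obstacle I anticipate is not the inequality chain, which is essentially bookkeeping, but making the utility-preservation claims fully rigorous: specifically, (a) verifying that \cref{lem: dctoc}'s construction lands the calibrated predictor $f'$ in a class whose predictions are still realizable as posterior means of \emph{some} signaling scheme for the prior $\mu_\Ds$ — this is where \cref{lem: equivalence} is invoked, and one must check that $f'$'s prediction distribution $Q_{f'}$ is a mean-preserving contraction of $\mu_\Ds$ (Bayes-plausibility), which should follow from calibration; and (b) confirming that the receiver's best-response tie-breaking is handled consistently on both sides so the utilities match exactly rather than up to a tie-breaking discrepancy. A secondary subtlety is that $\hat f$ is only $(\gamma+\epsilon)$-decision-calibrated, not perfectly so, hence not literally in $\Fs_\dcal(\Hs)$; but the theorem only compares \emph{utility values}, and the utility bound in \cref{thm: percal} is stated directly against $\mathrm{OPT}(\Hs,\Ds,\gamma)$, so no conversion of $\hat f$ itself into a signaling scheme is needed — only the benchmark side uses the exact correspondence. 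I would flag this explicitly to avoid the (incorrect) temptation to route $\hat f$ through \cref{lem: dctoc}.
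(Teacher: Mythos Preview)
Your proposal is correct and follows essentially the same approach as the paper: invoke \cref{thm: percal} to get sender utility at least $\mathrm{OPT}(\Hs,\Ds,\gamma)-\epsilon\ge\mathrm{OPT}(\Hs,\Ds,0)-\epsilon$, then identify $\mathrm{OPT}(\Hs,\Ds,0)$ with $\mathrm{BayesOPT}(\mu_\Ds,\Pi_\Hs)$ via the chain \cref{lem: dctoc} $\to$ \cref{lem: equivalence} through \cref{def: signals}. One minor correction: no rescaling of $\epsilon$ by $L$ or $m$ is needed, since those factors appear only in the swap-regret bound of \cref{thm: percal}, not in the sender-utility bound; the sample complexity in the statement is already exactly that of \cref{thm: percal} specialized to $N=1$.
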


\cref{thm: matching} says the sender utility achieved by our algorithm matches that of a fully informed Bayesian sender who is restricted to send signals induced by the same class of predictors $\Hs$.

\section{Persuasive Prediction under Infinite Hypothesis Class}\label{sec: inf}
In this section, we turn to the more general case where $\InAbs{\Hs}$ can potentially be infinite. In this setting, strict best responses pose a challenge due to their discontinuity: even when two predictors $h_1, h_2 \in \Hs$ are close—i.e., $\sup_{x \in \Xs} \InNorms{h_1(x) - h_2(x)}_\infty$ is small—the sender's utility under strict best response can differ by a constant. We illustrate this in the following example.

\begin{example}[Discontinuity from best response]\label{exp}
Consider a distribution $\Ds$ over $\Xs\times\Ys$ such that $\Ys=[0,1]$ and for any $x\in\Xs$, $\Pr[y\mid x]=0.5$. There is a single receiver, i.e. $N=1$, who has two actions $\As=\{a,a'\}$ to choose from. The receiver has a utility function $v:\As\times\Ys\rightarrow[0,1]$ such that $v(a,y)=y,v(a',y)=1-y$. In other words, the receiver's best response is $a$ when $y\le0.5$ and is $a'$ when $y>0.5$. The sender has a utility function $v:\As\times\Ys\rightarrow[0,1]$ that only depends on the receiver's action: $u(a,y)=1,u(a',y)=0$. Now consider the following two predictors: for any $x\in\Xs$, $h_1(x)=0.5-\epsilon$ and $h_2(x)=0.5+\epsilon$. It is not hard to verify that both $h_1$ and $h_2$ are $\epsilon$-decision calibrated predictor under distribution $\Ds$. However, for any $\epsilon \in (0, 0.5)$, the sender's expected utility is $1$ under $h_1$, but $0$ under $h_2$.
\end{example}
Because of the discontinuity of the best response, even if $\Hs$ has a bounded complexity measure—such as a covering number or Rademacher complexity—it can be difficult to estimate the sender’s utility (which best depends on receivers' best responses) uniformly over all $h \in \Hs$ from data. 

To overcome this challenge, 
we consider a \textit{smoothed} version of the best response decision rule, commonly known as the \emph{quantal response} model in economics and decision theory. This model has been extensively studied in the literature \citep{mcfadden1976quantal, mckelvey1995quantal} as it captures more realistic receiver behavior in the presence of noise, uncertainty, or bounded rationality. Unlike strict best responses, it allows receivers to probabilistically favor better actions while still occasionally choosing suboptimal ones, providing a smoother and more practical behavior model.
\begin{definition}[Quantal Response]
For any $i \in [N]$, the $i$-th receiver with utility function $v_i$ responds to a prediction $h(x)$ according to the following $\eta$-\emph{quantal response}:

$$
\tilde{b}_i(h(x), a_i) = \frac{e^{\eta v_i(a_i, h(x))}}{\sum_{a_i' \in \mathcal{A}_i} e^{\eta v_i(a_i', h(x))}},
$$

where $\tilde{b}_i(h(x), a_i)$ denotes the probability that receiver $i$ selects action $a_i$ given the prediction $h(x)$. Here, $\eta > 0$ is the inverse temperature parameter, where as $\eta \to +\infty$, the receiver’s behavior approaches the strict best response.
\end{definition}
Analogously, we defined a smoothed version of decision calibration when receivers follow quantal response model.
\begin{definition}[Smoothed Decision Calibration]
    A randomized predictor $f\in\Delta(\Hs)$ is said to be perfectly smoothed decision calibrated if
    \[
    \mathrm{SmDecCE}(f):=\max_{i\in[N]}\max_{j\in[d]}\max_{a\in\As_i}\InAbs{\E_{h\sim f}\E_{(x,y)\sim\Ds}\InBrackets{(y_j-h(x)_j)\cdot \tilde{b}_i(h(x),a)}}=0.
    \]
    Moreover, $f$ is said to be $\epsilon$-decision calibrated if $\mathrm{SmDecCE}(f)\le\epsilon$.
\end{definition}
It can be shown that receivers have no regret when following the quantal response to a smoothed calibrated predictor; we refer the reader to \cref{sec: no-regret} for a detailed discussion.

Similar to \cref{sec: finite}, we design an oracle-effcient algorithm \texttt{SmPerDecCal} (\cref{alg: infinite}) for persuasive prediction given quantal response and $\InAbs{\Hs}$ can be infinite.
\begin{restatable}{theorem}{smpercal}\label{thm: smpercal}
    Suppose \texttt{SmPerDecCal} runs for \(T=O(\log(Nmd)/\epsilon^4)\) rounds and is given a dataset \(D\) drawn i.i.d of size \(n\ge O\InParentheses{\ln\frac{\Ns(\Hs,d_\infty,\frac{\epsilon^2}{\eta L})Ndm}{\delta}/\epsilon^4}\). With probability at least $1-\delta$, it outputs \(\hat{f}\) that satisfies
    \begin{enumerate}
        \item $\mathrm{SmDecCE}(\hat{f})\leq\gamma+\epsilon.$ 
        \item Suppose the receivers play $\eta$-quantal response to $\hat{f}$. Then the receivers obtain swap regret bounded by $2mL(\gamma+\epsilon)+\frac{\ln{m}+1}{\eta}$. The sender achieves $\epsilon$-optimal utility: 
    \[\E_{h\sim f}\E_\Ds\InBrackets{u(\bm a,y)\cdot \tilde b(h(x),\bm a)}\geq \mathrm{OPT}(\Hs,\Ds,\gamma)-\epsilon.\]
    \end{enumerate}
\end{restatable}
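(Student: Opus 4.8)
The plan is to adapt the minimax pipeline behind \cref{thm: percal} to the quantal, infinite-$\Hs$ setting. First I would write the quantal analog of \cref{eq:constraint} — maximize $\E_{h\sim f}\E_\Ds[\sum_{\bm a}u(\bm a,y)\tilde b(h(x),\bm a)]$ subject to $\mathrm{SmDecCE}(f)\le\gamma$ — in Lagrangian form, obtaining $\Ls_\Ds(f,\lambda)$ exactly as in \cref{eq:lagrangian} but with every $b_i$ replaced by $\tilde b_i$ and $\lambda\in\R_{+}^{2Nmd}$. Restricting the dual to the scaled simplex $\Lambda=\{\lambda\in\R_{+}^{2Nmd}:\|\lambda\|_1\le C\}$ with $C=\Theta(1/\epsilon)$, the proof of \cref{lem:bounded} carries over unchanged: an $\epsilon$-approximate equilibrium $(f,\lambda)$ of the $C$-bounded game satisfies $\E[\sum_{\bm a}u(\bm a,y)\tilde b(h(x),\bm a)]\ge\mathrm{OPT}-2\epsilon$ and $\mathrm{SmDecCE}(f)\le\gamma+(1+2\epsilon)/C$. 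To produce such an equilibrium I would run the same BRNR dynamics as \texttt{PerDecCal}: the min player plays the exact best response $h_t=\mathrm{ERM}(D,\lambda_t)$, which has nonpositive regret against any fixed mixture since $\Ls_D(\cdot,\lambda)$ is linear in $f$ and a vertex of $\Delta(\Hs)$ is always a best response; the max player runs Hedge on the $C$-scaled simplex, whose regret is $O(C\sqrt{\log(Nmd)/T})$. Then \cref{lem:approx-equilibrium} certifies that the uniform mixture $\hat f=\mathrm{Uniform}(h_1,\dots,h_T)$ paired with $\bar\lambda=\frac1T\sum_t\lambda_t$ is an $\epsilon$-approximate equilibrium of the empirical game once $T=O(C^2\log(Nmd)/\epsilon^2)=O(\log(Nmd)/\epsilon^4)$.

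The genuinely new step is the finite-sample guarantee — the analog of \cref{lem:payoff-approx} when $|\Hs|$ is infinite — and here the quantal response is exactly what rescues us. Unlike the strict best response, whose discontinuity is fatal for any covering argument (\cref{exp}), the softmax map is $O(\eta)$-Lipschitz, so combining it with the $L$-Lipschitzness of each $v_i(a,\cdot)$ in $\|\cdot\|_\infty$ shows that $h(x)\mapsto\tilde b_i(h(x),a_i)$, and therefore both the sender-utility integrand $\sum_{\bm a}u(\bm a,y)\tilde b(h(x),\bm a)$ and every calibration integrand $(h(x)_j-y_j)\tilde b_i(h(x),a_i)$, is Lipschitz in $h(x)$ with modulus $O(\eta L)$ up to factors of $N$ and $\|\lambda\|_1\le C$. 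Hence $\ell_\lambda(h,\cdot,\cdot)$ is Lipschitz in $h$ with respect to $d_\infty(h,h')=\sup_{x}\|h(x)-h'(x)\|_\infty$, so passing to a $d_\infty$-cover of $\Hs$ at scale $\Theta(\epsilon^2/(\eta L))$ perturbs empirical and population payoffs by at most $O(\epsilon^2)$. On the cover I would apply Hoeffding together with a union bound over the $\Ns(\Hs,d_\infty,\epsilon^2/(\eta L))$ cover elements and the $2Nmd$ coordinate/receiver/action directions — using linearity of $\Ls$ in $\lambda$ to reduce uniformity over $\Lambda$ to those directions rescaled by $C$ — yielding $\sup_{f\in\Delta(\Hs),\lambda\in\Lambda}|\Ls_\Ds(f,\lambda)-\Ls_D(f,\lambda)|\le O(\epsilon)$ provided $n\ge O(\ln(\Ns(\Hs,d_\infty,\epsilon^2/(\eta L))Ndm/\delta)/\epsilon^4)$, the $1/\epsilon^4$ rate (rather than $1/\epsilon^2$) absorbing the $C^2=\Theta(1/\epsilon^2)$ blow-up from the dual scaling. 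A standard transfer argument then upgrades the $\Ls_D$-equilibrium $(\hat f,\bar\lambda)$ to an $O(\epsilon)$-approximate equilibrium for $\Ls_\Ds$.

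Assembling the pieces: applying \cref{lem:bounded} to $(\hat f,\bar\lambda)$ with $C=\Theta(1/\epsilon)$ gives $\mathrm{SmDecCE}(\hat f)\le\gamma+\epsilon$ (Part 1) and $\E_{h\sim \hat f}\E_\Ds[u(\bm a,y)\tilde b(h(x),\bm a)]\ge\mathrm{OPT}(\Hs,\Ds,\gamma)-\epsilon$ (the utility claim in Part 2). For the swap-regret claim I would prove a quantal version of \cref{thm:swap}: writing $v_i(a,y)=\langle w_{i,a},y\rangle+c_{i,a}$ with $\|w_{i,a}\|_1\le L$, $\epsilon'$-smoothed decision calibration bounds $\E[\langle w_{i,a},y-h(x)\rangle\tilde b_i(h(x),a)]$ by $L\epsilon'$ per action, so for any swap $\phi:\As_i\to\As_i$ the cost of replacing $y$ by $h(x)$ in both the $\phi(a)$- and $a$-terms is at most $2mL\epsilon'$; the remaining term $\sum_a(v_i(\phi(a),h(x))-v_i(a,h(x)))\tilde b_i(h(x),a)$ is at most $\tfrac{\ln m}{\eta}$, because the $\eta$-quantal response is the $\tfrac1\eta$-entropy-regularized best response and so trails the true best response by no more than the maximum entropy $\ln m$ scaled by $1/\eta$ (a slightly looser accounting gives the stated $\tfrac{\ln m+1}{\eta}$). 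Taking $\epsilon'=\gamma+\epsilon$ yields the bound $2mL(\gamma+\epsilon)+\tfrac{\ln m+1}{\eta}$. I expect the main obstacle to be not conceptual but the Lipschitz bookkeeping in the covering step — tracking precisely how $\eta$, $L$, $N$, and $C=\Theta(1/\epsilon)$ propagate into the Lipschitz modulus of $\ell_\lambda$ so that the advertised cover radius $\epsilon^2/(\eta L)$ and sample rate $1/\epsilon^4$ genuinely suffice, since absorbing the $N$- and $C$-dependence into the $O(\cdot)$ requires some care.
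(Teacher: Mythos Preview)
Your proposal is correct and follows essentially the same approach as the paper. The paper proves the Lipschitzness of $\tilde b_i$ and of the joint response (giving moduli $2\eta L$ and $2\eta mNL$ respectively), combines Hoeffding with a $d_\infty$-cover of $\Hs$ to obtain the quantal analog of \cref{lem:payoff-approx}, transfers the empirical equilibrium to the population game, and then invokes the quantal-response no-regret theorem; your outline matches this step for step, including the role of $C=\Theta(1/\epsilon)$ in producing the $1/\epsilon^4$ rate. Your entropy-regularization derivation of the $\tfrac{\ln m}{\eta}$ term is a slightly crisper (and actually marginally tighter) variant of the paper's Lemma-based argument, but it is the same inequality in substance.
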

\cref{thm: smpercal} shows that, with enough sample size, \texttt{SmPerDecCal} learns a predictor $\hat{f}$ that achieves nearly optimal utility compared to the best in-class $\gamma$-smoothed-decision-calibrated predictor, while ensuring that its smoothed decision calibration error exceeds $\gamma$ by at most $\epsilon$.
\section{Acknowledgements}
We thank Dung Daniel Ngo for helpful discussions during the early stages of this project. JT, JZ, and ZSW are supported in part by an STTR grant. FF is supported in part by NSF grant IIS-2046640 (CAREER).

\newpage
\bibliography{main}
\bibliographystyle{plainnat}
\appendix
\section{Additional Related Work}\label{sec: rel}
Our work is related to a growing line of work calibration in decision-making settings. The seminal work of \cite{foster1999regret} showed that a decision maker who best responds to calibrated forecasts obtain diminishing internal regret. More recent studies extend this result by proposing refined notions of calibration that are more efficient to achieve and offer fine-grained regret guarantees \citep{noarov2023high,kleinberg2023u,hu2024calibration,roth2024forecasting,fishelson2025full,luo2025simultaneous,tang2025dimension}. Building on (multi)calibration, \citet{gopalan2021omnipredictors} introduced the notion of omniprediction, which aims to construct a single predictor that guarantees no worse loss than a family of predetermined benchmarks for all the downstream receivers in a class, followed by \citet{gopalan2022loss,gopalan2024omnipredictors,garg2024oracle,dwork2024fairness,okoroafor2025near,lu2025sample}. In contrast to these works, we not only aim to achieve a specific notion of calibration (decision calibration, in our case), but also seek to approximately maximize the sender's utility among all such calibrated predictors.

Our work shares the common goal of replacing prior knowledge with data, aligning with many works in mechanism design, such as auction design \citep{balcan2008reducing,cole2014sample,morgenstern2015pseudo,daskalakis2016learning,syrgkanis2017sample,dudik2020oracle,fu2020learning}, Stackelberg game \citep{balcan2015commitment,camara2020mechanisms,collina2024efficient}, algorithm discrimination \citep{cummings2020algorithmic} and recommendation system \citep{immorlica2018incentivizing}.

More broadly, a growing body of work in economics aims to relax the assumption of perfect prior knowledge rather than replace it entirely, such as relaxing the prior to some kind of approximate agreement on the distribution \citep{artemov2013robust,ollar2017full} and robustness to prior distribution \citep{dworczak2022preparing,kosterina2022persuasion}. In contrast to these works, we adopt a data-driven approach to address the challenge of an unknown prior distribution.

\section{Global Optimality for Finite-size \texorpdfstring{$\Xs$}{X}}\label{sec: global}
In this section, we consider the case that $\InAbs{\Xs}<\infty$, $\Ys=\{0,1\}$ and there is one receiver, i.e. $N=1$. We show that it is sufficient to consider predictions in an instance-dependent discretization set. Fix any receiver's utility $v$, we slightly abuse notation, let $\As=\{a_1,\cdots,a_m\}$ denote the receiver's action set. Let $J_i=\{p\in[0,1]: a_i=\argmax\E_{y\sim Ber(p)}[v(a,y)]\}$ for any $i\in[m]$. If there are ties, the receiver breaks ties in favor of the sender. Then we know that $J_i$ is an interval on [0,1] for any $i\in[m]$ and $\{J_i\}_{i=1}^m$ is a partition of $[0,1]$. Let $\Zs$ be the set of thresholds of adjacent best response intervals. We know $\InAbs{\Zs}\le m-1$. Let $\Theta=\{\Pr[y=1\mid x]: \forall x\in\Xs\}$. We know $\InAbs{\Theta}\le\InAbs{\Xs}$.

\begin{definition}[Instance-dependent discretization]
 For any positive $\epsilon<\min_{i\in m}\mathrm{len}(J_i)$, define discretized set $S_\epsilon$ of space [0,1] as
 \[
 S_\epsilon\triangleq(\{0,\epsilon,2\epsilon,\cdots\}\cap[0,1])\cup\Zs\cup\Theta.
 \]
\end{definition}

\begin{definition}[Discretized predictor set]
Let $\Hs_\epsilon=\{h:\Xs\rightarrow S_\epsilon\}$ be the set of all possible predictors whose predictions are always in the instance-dependent discretization set $S_\epsilon$. 
\end{definition}

Note that $\Hs_\epsilon$ is finite with size $\InAbs{S_\epsilon}^{\Xs}$. The following theorem shows that any randomized predictor in $\Delta(\Hs_\all)$ can be converted to a randomized predictor in $\Delta(\Hs_\epsilon)$ without changing the sender's utility and increase the decision calibration error up to $\epsilon$.

\begin{theorem}
For any $\epsilon>0$, for any randomized predictor $f\in\Delta(\Hs_\all)$ that is $\gamma$-decision calibrated, we can construct a randomized predictor $f'\in\Delta(\Hs_\epsilon)$ such that (1) the sender obtains the same expected utility under $f$ and $f'$ (2) $f'$ is $(\gamma+\epsilon)$-decision calibrated.
\end{theorem}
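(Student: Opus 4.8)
The plan is to define the conversion $f \mapsto f'$ explicitly by post-processing the predictions of each deterministic predictor $h$ in the support of $f$, rounding each prediction to the discretized set $S_\epsilon$ in a way that \emph{preserves the best-response interval} while controlling the bias. Fix any $x \in \Xs$ and any $h$ in the support of $f$, and let $p = h(x) \in [0,1]$. The prediction $p$ lies in some best-response interval $J_i$, meaning the receiver's strict best response to $p$ is $a_i$. The key observation is that $S_\epsilon$ contains all the threshold points $\Zs$ between adjacent $J_i$'s, so each $J_i$ contains at least one grid point of $\{0,\epsilon,2\epsilon,\dots\}$ (this uses $\epsilon < \min_i \mathrm{len}(J_i)$) together with possibly endpoints from $\Zs$; thus within $J_i$ there are grid points both $\le p$ and $\ge p$, at distance at most $\epsilon$ from $p$. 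I would round $p$ \emph{randomly} to the two nearest elements $s_-, s_+$ of $S_\epsilon \cap J_i$ bracketing $p$ (with $s_+ - s_- \le \epsilon$), choosing $s_+$ with probability $(p - s_-)/(s_+ - s_-)$ and $s_-$ otherwise, so that the rounded prediction has conditional mean exactly $p$. This randomized rounding is applied independently (conditioned on $h$ and $x$) and defines $f' \in \Delta(\Hs_\epsilon)$; crucially, since $s_-, s_+ \in J_i$, the receiver's best response under $f'$ is still $a_i$ whenever it was $a_i$ under $f$.

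\textbf{Utility preservation (claim 1).} Since the receiver's best response $b(h(x),\cdot)$ is unchanged pointwise by the rounding, and the sender's utility $u(\bm a, y)$ depends only on the realized action and the true outcome $y$ (not on the prediction itself), the joint distribution over $(\text{receiver action}, Y)$ is identical under $f$ and $f'$ for every $x$. Hence $\E_{h'\sim f'}\E_\Ds[u(\bm a, y) b(h'(x),\bm a)] = \E_{h\sim f}\E_\Ds[u(\bm a, y) b(h(x),\bm a)]$, giving claim (1). I should double-check the tie-breaking convention: the problem statement says ties are broken in favor of the sender, and I need the rounded points $s_-, s_+$ to induce the same action as $p$ including under this tie-breaking — placing $s_-, s_+$ strictly inside the (closed) interval $J_i$, or at least not crossing into a neighboring interval, handles this, though boundary predictions $p \in \Zs$ need slight care (but then $p \in S_\epsilon$ already, so no rounding is needed).

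\textbf{Calibration error (claim 2).} Here I bound $\mathrm{DecCE}(f')$ by $\gamma + \epsilon$. For any receiver coordinate, action $a$, and coordinate $j$ (here $d=1$ since $\Ys=\{0,1\}$), write the decision-calibration quantity for $f'$ as $\E_{h'\sim f'}\E_\Ds[(y - h'(x)) b(h'(x), a)]$. Using the tower property over the rounding randomness and the fact that the rounded prediction has conditional mean $h(x)$ and induces the same best response, this equals $\E_{h\sim f}\E_\Ds[(y - h(x)) b(h(x),a)] + \E_{h\sim f}\E_\Ds[(h(x) - h'(x)) b(h'(x),a)]$. The first term is bounded by $\gamma$ since $f$ is $\gamma$-decision calibrated. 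For the second term, $|h(x) - h'(x)| \le \epsilon$ pointwise by construction and $b(h'(x),a) \in \{0,1\}$, so its absolute value is at most $\epsilon$. Combining gives $\mathrm{DecCE}(f') \le \gamma + \epsilon$.

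\textbf{Main obstacle.} The delicate step is ensuring the randomized rounding stays within the same best-response interval $J_i$ while still reaching points of $S_\epsilon$ — this is exactly why $S_\epsilon$ must include $\Zs$ (the interval thresholds) and why we need $\epsilon < \min_i \mathrm{len}(J_i)$ so that every $J_i$ contains a dyadic-style grid point. A subtle point is that the conditioning event in $\mathrm{DecCE}$ for $f'$ is "$a$ is the best response to $h'(x)$", which may differ as a \emph{set of $(x,h)$ pairs} from the event for $f$ only on predictions originally lying exactly on a threshold in $\Zs$; since those are measure-handled by the already-in-$S_\epsilon$ case (no rounding), the events coincide and the tower-property decomposition above is valid. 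I expect writing this boundary bookkeeping carefully to be the only real work; the rest is the clean mean-preserving-rounding argument.
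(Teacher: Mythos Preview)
Your proposal is correct and follows essentially the same skeleton as the paper: round each prediction to a nearby point of $S_\epsilon$ that lies in the \emph{same} best-response interval $J_i$, so that the receiver's action (and hence the sender's utility) is unchanged, and then control the decision-calibration error via the decomposition $(y-h'(x))\,b(h'(x),a) = (y-h(x))\,b(h(x),a) + (h(x)-h'(x))\,b(h(x),a)$ together with $|h(x)-h'(x)|\le\epsilon$.

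The one notable difference is that the paper uses the simpler \emph{deterministic} nearest-point rounding $r(v)=\arg\inf_{p\in S_\epsilon\cap J_i}|p-v|$, whereas you introduce randomized mean-preserving rounding between the two bracketing grid points. Your extra machinery is not needed for the stated bound, but you are underselling it: since $b(h'(x),a)=b(h(x),a)$ is determined by $(h,x)$ and your rounding satisfies $\E[h'(x)\mid h,x]=h(x)$, the second term $\E[(h(x)-h'(x))\,b(h(x),a)]$ is in fact \emph{exactly zero}, so your construction actually yields $\mathrm{DecCE}(f')\le\gamma$ rather than $\gamma+\epsilon$. If you keep the randomized rounding, you should state and use this stronger conclusion; if you only want the $(\gamma+\epsilon)$ bound, the deterministic rounding is cleaner.
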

\begin{proof}
Define the range of $f$ as $\ran(f)=\bigcup_{h\in\supp{f}}\ran(h)$. Consider the following rounding function $r:\ran(f)\rightarrow S_\epsilon$:
\[
r(v)=p \quad v\in J_i,~p=\arg\inf_{p'\in(S_\epsilon\cap J_i)}\InAbs{p'-v}.
\]

For any $h\sim f\in\Delta(\Hs_\all)$, we construct $h'\in\Hs_\epsilon$ as $h':x\mapsto r(h(x))$. Let $f'$ be the distribution over such $h'$. By the definition of $S_\epsilon$, we have that (1) $\InAbs{v-r(v)}<\epsilon$ (2) the receiver has the same best response under $v$ and $r(v)$. Therefore, the sender obtains
the same expected utility under $f$ and $f'$. And we have
\begin{align*}
 \InAbs{\E_{h'\sim f'}\E_{(x,y)\sim\Ds}[(y-h(x))\cdot b(h'(x),a)]}&=\InAbs{\E_{h'\sim f'}\E_{(x,y)\sim\Ds}[(y-h(x))\cdot b(h(x),a)]}\\
 &\le\InAbs{\E_{h'\sim f'}\E_{(x,y)\sim\Ds}[(y-h(x))\cdot b(h(x),a)]}\\
 &\quad+\InAbs{\E_{h'\sim f'}\E_{(x,y)\sim\Ds}[(h'(x)-h(x))\cdot b(h(x),a)]}\\
 &=\InAbs{\E_{h\sim f}\E_{(x,y)\sim\Ds}[(y-h(x))\cdot b(h(x),a)]}\\
 &\quad+\InAbs{\E_{h\sim f}\E_{(x,y)\sim\Ds}[(h'(x)-h(x))\cdot b(h(x),a)]}\\
 &\le\gamma+\epsilon.
\end{align*}
\end{proof}

This theorem implies that, at least in the special case considered in this section, to find the optimal decision-calibrated predictor randomized over all deterministic predictors, it suffices to consider a finite subset of deterministic predictors $\Hs_\epsilon$.

\section{No Regret Guarantees of Decision Calibration}\label{sec: no-regret}
\subsection{No Regret Guarantees of Strict Decision Calibration}
We first prove that approximate decision calibrated predictor gives the downstream agent no swap regret best responding to it. \citet{noarov2023high} proved it for deterministic predictor in the online calibration setting. We provide our proof for randomized predictor in the batch settting here for completeness.
\swap*
\begin{proof}
    We prove the result for any receiver \(i\in[N]\).
    \begin{align*}
        &\E_{h\sim f}\E_{\Ds}\InBrackets{\sum_av_i(\phi(a),y)\cdot b_i(h(x),a)}-\E_{h\sim f}\E_{\Ds}\InBrackets{\sum_av_i(a,y)\cdot b_i(h(x),a)}\\=&\E_{h\sim f}\E_{\Ds}\InBrackets{\sum_av_i(\phi(a),y)\cdot b_i(h(x),a)}-\E_{h\sim f}\E_{\Ds}\InBrackets{\sum_av_i(\phi(a),h(x))\cdot b_i(h(x),a)}\\&+\E_{h\sim f}\E_{\Ds}\InBrackets{\sum_av_i(\phi(a),h(x))\cdot b_i(h(x),a)}-\E_{h\sim f}\E_{\Ds}\InBrackets{\sum_av_i(a,h(x))\cdot b_i(h(x),a)}\\
        &+\E_{h\sim f}\E_{\Ds}\InBrackets{\sum_av_i(a,h(x))\cdot b_i(h(x),a)}-\E_{h\sim f}\E_{\Ds}\InBrackets{\sum_av_i(a,y)\cdot b_i(h(x),a)}
    \end{align*}
    When \(f\) is \(\epsilon\)-decision-calibrated, we know that
    \begin{align*}
        &\E_{h\sim f}\E_{\Ds}\InBrackets{\sum_av_i(\phi(a),y)\cdot b_i(h(x),a)}-\E_{h\sim f}\E_{\Ds}\InBrackets{\sum_av_i(\phi(a),h(x))\cdot b_i(h(x),a)}\\=&\sum_a\E_{h\sim f}\E_{\Ds}\InBrackets{\sum_av_i(\phi(a),y-h(x))\cdot b_i(h(x),a)}\\
        \leq&\sum_a L\epsilon=L|A|\epsilon.
    \end{align*}
    Similarly we can prove that \[\E_{h\sim f}\E_{\Ds}\InBrackets{\sum_av_i(a,h(x))\cdot b_i(h(x),a)}-\E_{h\sim f}\E_{\Ds}\InBrackets{\sum_av_i(a,y)\cdot b_i(h(x),a)}\leq L|A|\epsilon.\]
    Since \(b_i(h(x),a)\) plays the best response given the prediction \(h(x)\), we know that \[\E_{h\sim f}\E_{\Ds}\InBrackets{\sum_av_i(\phi(a),h(x))\cdot b_i(h(x),a)}-\E_{h\sim f}\E_{\Ds}\InBrackets{\sum_av_i(a,h(x))\cdot b_i(h(x),a)}\leq 0.\]
    Putting them together, we have
    \[\E_{h\sim f}\E_{\Ds}\InBrackets{\sum_av_i(\phi(a),y)\cdot b_i(h(x),a)}-\E_{h\sim f}\E_{\Ds}\InBrackets{\sum_av_i(a,y)\cdot b_i(h(x),a)}\leq2L|A|\epsilon.\]
\end{proof}
\begin{definition}[Type Regret] We say that a predictor \(f\) achieves  \(\epsilon\)-type regret if, for any receiver \(i,i'\in[N]\),\(\phi:A\rightarrow A\),
\[\E_{h\sim f}\E_{\Ds}\InBrackets{\sum_av_i(a,y)\cdot b_{i'}(h(x),a)}\leq\E_{h\sim f}\E_{\Ds}\InBrackets{\sum_av_i(a,y)\cdot b_i(h(x),a)}+\epsilon.\]
\end{definition}
Now we introduce a different notion of regret, named \textit{type regret}. Type regret is first introduced by \citet{zhao2021calibrating}. Intuitively, it says that once the predictor gets decision calibrated with respect to a class of utility functions of receivers, the receivers will have no regret best responding according to another receiver's utility function instead their own. \citet{zhao2021calibrating} proved that decision calibrated predictor achieves no type regret for the receivers. Here we state and prove the result for randomized predictors.
\begin{restatable}[No Type Regret via Decision Calibration]{theorem}{type}\label{thm:type}
If a predictor \(f\) is \(\epsilon\)-decision calibrated, then it satisfies \(2L|A|\epsilon\)-type regret.
\end{restatable}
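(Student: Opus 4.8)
The plan is to mimic the proof of \cref{thm:swap} essentially verbatim, with the swap map $\phi$ replaced by the substitution of receiver $i'$'s best response in place of receiver $i$'s. Fix any two receivers $i,i'\in[N]$. The quantity to bound is
\[
\E_{h\sim f}\E_{\Ds}\InBrackets{\sum_a v_i(a,y)\cdot b_{i'}(h(x),a)}-\E_{h\sim f}\E_{\Ds}\InBrackets{\sum_a v_i(a,y)\cdot b_i(h(x),a)}.
\]
First I would insert and subtract the two ``counterfactual'' terms obtained by replacing the true outcome $y$ with the prediction $h(x)$ inside $v_i$, producing a telescoping decomposition into three pieces: \textbf{(I)} $\E_{h\sim f}\E_{\Ds}[\sum_a (v_i(a,y)-v_i(a,h(x)))\cdot b_{i'}(h(x),a)]$; \textbf{(II)} $\E_{h\sim f}\E_{\Ds}[\sum_a v_i(a,h(x))\cdot b_{i'}(h(x),a)]-\E_{h\sim f}\E_{\Ds}[\sum_a v_i(a,h(x))\cdot b_i(h(x),a)]$; and \textbf{(III)} $\E_{h\sim f}\E_{\Ds}[\sum_a (v_i(a,h(x))-v_i(a,y))\cdot b_i(h(x),a)]$.

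Then I would bound each piece. Term (II) is $\le 0$ by optimality of the strict best response: $b_i$ places all mass on an optimal action of $v_i(\cdot,h(x))$ over $\As_i$, so it dominates the value achieved by any other action distribution, in particular $b_{i'}(h(x),\cdot)$ --- this is exactly the analogous step in the proof of \cref{thm:swap}. For term (I), I would use that $v_i(a,\cdot)$ is linear, so $v_i(a,y)-v_i(a,h(x))$ is a linear functional of $y-h(x)$ whose coefficient vector has $\ell_1$ norm at most $L$ by the $L$-Lipschitzness assumption; expanding coordinatewise and applying the $\epsilon$-decision-calibration bound for receiver $i'$, namely $\InAbs{\E_{h\sim f}\E_{\Ds}[(y_j-h(x)_j)\cdot b_{i'}(h(x),a)]}\le\epsilon$ for every $j\in[d]$ and $a\in\As_{i'}$, yields $|\textbf{(I)}|\le L|A|\epsilon$. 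Term (III) is handled identically, this time invoking decision calibration with respect to receiver $i$'s own best response $b_i$, giving $|\textbf{(III)}|\le L|A|\epsilon$. Summing the three bounds gives the claimed $2L|A|\epsilon$-type regret.

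The only point that needs a moment's thought --- and really the conceptual content of the statement --- is that \cref{def:deccal} takes a maximum over all receivers $i\in[N]$, so a single $\epsilon$-decision-calibrated predictor is simultaneously calibrated with respect to the best-response event of \emph{every} receiver; this is precisely what lets us control both (I), which involves $b_{i'}$, and (III), which involves $b_i$, by the same $\epsilon$. Beyond this, there is no real obstacle: the linearity assumption converts outcome errors into linear functionals, the Lipschitz constant bounds their $\ell_1$ weights, and best-response optimality kills the middle term, exactly paralleling \cref{thm:swap}.
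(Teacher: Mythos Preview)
Your proposal is correct and follows essentially the same approach as the paper: the same three-term telescoping decomposition, the same use of decision calibration (with respect to $b_{i'}$ for term (I) and $b_i$ for term (III)) combined with linearity and $L$-Lipschitzness, and the same best-response optimality argument to show term (II) is nonpositive. Your explicit remark that \cref{def:deccal} ranges over all receivers---so both $b_i$ and $b_{i'}$ are covered---is exactly the point that distinguishes this from a straight copy of \cref{thm:swap}, and the paper uses it the same way.
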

\begin{proof}
    We prove the result for any receiver \(i,i'\in[N]\).
    \begin{align*}
        &\E_{h\sim f}\E_{\Ds}\InBrackets{\sum_av_i(a,y)\cdot b_{i'}(h(x),a)}-\E_{h\sim f}\E_{\Ds}\InBrackets{\sum_av_i(a,y)\cdot b_i(h(x),a)}\\
        =&\E_{h\sim f}\E_{\Ds}\InBrackets{\sum_av_i(a,y)\cdot b_{i'}(h(x),a)}-\E_{h\sim f}\E_{\Ds}\InBrackets{\sum_av_i(a,h(x))\cdot b_{i'}(h(x),a)}\\
        &+\E_{h\sim f}\E_{\Ds}\InBrackets{\sum_av_i(a,h(x))\cdot b_{i'}(h(x),a)}-\E_{h\sim f}\E_{\Ds}\InBrackets{\sum_av_i(a,h(x))\cdot b_i(h(x),a)}
        \\
        &+\E_{h\sim f}\E_{\Ds}\InBrackets{\sum_av_i(a,h(x))\cdot b_i(h(x),a)}-\E_{h\sim f}\E_{\Ds}\InBrackets{\sum_av_i(a,y)\cdot b_i(h(x),a)}
    \end{align*}
    When \(f\) is \(\epsilon\)-decision-calibrated, we know that
    \begin{align*}
        &\E_{h\sim f}\E_{\Ds}\InBrackets{\sum_av_i(a,y)\cdot b_{i'}(h(x),a)}-\E_{h\sim f}\E_{\Ds}\InBrackets{\sum_av_i(a,h(x))\cdot b_{i'}(h(x),a)}\\=&\sum_a\E_{h\sim f}\E_{\Ds}\InBrackets{\sum_av_i(a,y-h(x))\cdot b_{i'}(h(x),a)}\\
        \leq&\sum_a L\epsilon=L|A|\epsilon.
    \end{align*}
    Similarly we can prove that \[\E_{h\sim f}\E_{\Ds}\InBrackets{\sum_av_i(a,h(x))\cdot b_{i'}(h(x),a)}-\E_{h\sim f}\E_{\Ds}\InBrackets{\sum_av_i(a,h(x))\cdot b_i(h(x),a)}\leq L|A|\epsilon.\]
    Since \(b_i(h(x),a)\) plays the best response given the prediction \(h(x)\), we know that \[\E_{h\sim f}\E_{\Ds}\InBrackets{\sum_av_i(a,h(x))\cdot b_{i'}(h(x),a)}-\E_{h\sim f}\E_{\Ds}\InBrackets{\sum_av_i(a,h(x))\cdot b_i(h(x),a)}\leq 0.\]
    Putting them together, we have
    \[\E_{h\sim f}\E_{\Ds}\InBrackets{\sum_av_i(a,y)\cdot b_{i'}(h(x),a)}-\E_{h\sim f}\E_{\Ds}\InBrackets{\sum_av_i(a,y)\cdot b_i(h(x),a)}\leq2L|A|\epsilon.\]
\end{proof}
We also introduce a new notion of regret which we call swap-type-regret, which intuitively capture the case where the receivers can first pretend that they were another receiver and then swap the corresponding best-response action. We formally define it as follows: 
\begin{definition}[Swap-Type Regret]
    We say that a predictor \(f\) achieves  \(\epsilon\)-swap-type regret if, for any receiver \(i,i'\in[N]\), mapping function \(\phi:A\rightarrow A\),
\[\E_{h\sim f}\E_{\Ds}\InBrackets{\sum_av_i(\phi(a),y)\cdot b_{i'}(h(x),a)}\leq\E_{h\sim f}\E_{\Ds}\InBrackets{\sum_av_i(a,y)\cdot b_i(h(x),a)}+\epsilon.\]
\end{definition}
\begin{theorem}[No Swap-Type Regret via Decision Calibration]
    If a predictor \(f\) is \(\epsilon\)-decision calibrated, then it satisfies \(2L|A|\epsilon\)-swap-type regret.
\end{theorem}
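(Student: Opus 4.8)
The plan is to mirror the telescoping-decomposition arguments behind \cref{thm:swap} and \cref{thm:type}, but with a single intermediate term that absorbs the type change ($i'\to i$) and the action swap $\phi$ at once. Fix receivers $i,i'\in[N]$ and a mapping $\phi\colon A\to A$, and split the swap-type regret gap
\[
\E_{h\sim f}\E_{\Ds}\InBrackets{\sum_a v_i(\phi(a),y)\cdot b_{i'}(h(x),a)} - \E_{h\sim f}\E_{\Ds}\InBrackets{\sum_a v_i(a,y)\cdot b_i(h(x),a)}
\]
as $(A)+(B)+(C)$, where $(A)$ replaces $y$ by $h(x)$ inside $v_i(\phi(a),\cdot)$ while keeping the behavior $b_{i'}$, where $(B)=\E_{h\sim f}\E_{\Ds}[\sum_a v_i(\phi(a),h(x))\,b_{i'}(h(x),a)]-\E_{h\sim f}\E_{\Ds}[\sum_a v_i(a,h(x))\,b_i(h(x),a)]$, and where $(C)$ replaces $h(x)$ by $y$ inside $v_i(a,\cdot)$ while keeping $b_i$. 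These three terms telescope to the gap.

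For $(A)$ and $(C)$ I would use linearity of $v_i$ in its outcome argument to rewrite each integrand as $v_i(g(a),\,y-h(x))$ (with $g=\phi$ for $(A)$ and $g=\mathrm{id}$ for $(C)$), expand it as $\sum_{j=1}^d c_{g(a),j}(y_j-h(x)_j)$ with $\sum_{j}\InAbs{c_{g(a),j}}\le L$ by $L$-Lipschitzness in $\InNorms{\cdot}_\infty$, and then invoke $\epsilon$-decision calibration coordinatewise, i.e.\ $\InAbs{\E_{h\sim f}\E_{\Ds}[(y_j-h(x)_j)\,b_{i'}(h(x),a)]}\le\epsilon$ for $(A)$ and the analogue with $b_i$ for $(C)$. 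Summing over the $d$ coordinates and the $\InAbs{A}$ actions bounds each of $(A)$ and $(C)$ by $L\InAbs{A}\epsilon$; this is the exact estimate already performed in the proofs of \cref{thm:swap} and \cref{thm:type}.

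For $(B)$ I would use that the strict best response $b_{i'}(h(x),\cdot)$ is a point mass, say on $a^{\star}=a^{\star}_{i'}(h(x))$, so that $\sum_a v_i(\phi(a),h(x))\,b_{i'}(h(x),a)=v_i(\phi(a^{\star}),h(x))$, while $b_i(h(x),\cdot)$ plays the best response under $v_i$, so $\sum_a v_i(a,h(x))\,b_i(h(x),a)\ge v_i(\phi(a^{\star}),h(x))$ pointwise in $(h,x)$; taking $\E_{h\sim f}\E_{\Ds}$ gives $(B)\le 0$. Adding the three bounds yields that the gap is at most $2L\InAbs{A}\epsilon$, which is the claim.

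The one place this differs from the two existing proofs is term $(B)$: one must resist splitting it further into a ``swap under $b_{i'}$'' step and a ``type change'' step, since $b_{i'}$ best responds to $v_{i'}$ rather than $v_i$ and so the swap step alone need not be nonpositive. Keeping $(B)$ as a single step, and using that $b_{i'}(h(x),\cdot)$ concentrates on one action so that $\phi(a^{\star})$ is a single fixed action dominated by $b_i$'s best response, is what makes the bound go through. I expect this small piece of bookkeeping to be the only subtlety, and it is minor.
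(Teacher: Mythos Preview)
Your proposal is correct and follows essentially the same three-term decomposition as the paper: bound the two ``replace $y$ by $h(x)$'' terms via decision calibration and Lipschitzness, and dispatch the middle term using that $b_i$ best-responds to $v_i$. Your treatment of $(B)$ (collapsing $b_{i'}$ to a point mass and comparing $v_i(\phi(a^\star),h(x))$ to the $v_i$-optimal action) is just a more explicit version of the paper's one-line ``$b_i$ selects the best response'' step, and your cautionary remark about not splitting $(B)$ further is a nice observation but not a departure in strategy.
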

\begin{proof}
    The proof is similar to the proofs of \cref{thm:swap} and \cref{thm:type}.
    We can similarly prove that \[\E_{h\sim f}\E_{\Ds}\InBrackets{\sum_av_i(\phi(a),y)\cdot b_{i'}(h(x),a)}-\E_{h\sim f}\E_{\Ds}\InBrackets{\sum_av_i(\phi(a),h(x))\cdot b_{i'}(h(x),a)}\leq L|A|\epsilon,\] and \[\E_{h\sim f}\E_{\Ds}\InBrackets{\sum_av_i(a,h(x))\cdot b_i(h(x),a)}-\E_{h\sim f}\E_{\Ds}\InBrackets{\sum_av_i(a,y)\cdot b_i(h(x),a)}\leq L|A|\epsilon.\]
    From the fact that \(b_i(h(x),a)\) selects the best response action, we also have \[\E_{h\sim f}\E_{\Ds}\InBrackets{\sum_av_i(\phi(a),h(x))\cdot b_{i'}(h(x),a)}-\E_{h\sim f}\E_{\Ds}\InBrackets{\sum_av_i(a,h(x))\cdot b_i(h(x),a)}\leq0.\]
    Putting them together completes the proof.
\end{proof}
\subsection{No Regret Guarantees of Smoothed Decision Calibration}
We provide analogous result for the behavior model where the receivers follow quantal response. We first define the three notions of regret for quantal response.
\begin{definition}[Swap Regret under Quantal Response]
     We say that a predictor \(f\) achieves  \(\epsilon\)-swap regret for receivers that follow quantal response rule if, for any receiver \(i\in[N]\), mapping function \(\phi:A\rightarrow A\),
\[\E_{h\sim f}\E_{\Ds}\InBrackets{\sum_av_i(\phi(a),y)\cdot \tilde{b}_i(h(x),a)}\leq\E_{h\sim f}\E_{\Ds}\InBrackets{\sum_av_i(a,y)\cdot \tilde{b}_i(h(x),a)}+\epsilon.\]
\end{definition}
\begin{definition}[Type Regret under Quantal Response] We say that a predictor \(f\) achieves  \(\epsilon\)-type regret for receivers that follow quantal response rule if, for any receiver \(i,i'\in[N]\),\(\phi:A\rightarrow A\),
\[\E_{h\sim f}\E_{\Ds}\InBrackets{\sum_av_i(a,y)\cdot \tilde b_{i'}(h(x),a)}\leq\E_{h\sim f}\E_{\Ds}\InBrackets{\sum_av_i(a,y)\cdot \tilde{b}_i(h(x),a)}+\epsilon.\]
\end{definition}
\begin{definition}[Swap-Type under Quantal Response]
    We say that a predictor \(f\) achieves  \(\epsilon\)-swap-type regret for receivers that follow quantal response rule if, for any receiver \(i,i'\in[N]\), mapping function \(\phi:A\rightarrow A\),
\[\E_{h\sim f}\E_{\Ds}\InBrackets{\sum_av_i(\phi(a),y)\cdot \tilde{b}_{i'}(h(x),a)}\leq\E_{h\sim f}\E_{\Ds}\InBrackets{\sum_av_i(a,y)\cdot \tilde{b}_i(h(x),a)}+\epsilon.\]
\end{definition}
We now present the analogous no regret guarantee for quantal response receivers. Swap regret for quantal response receivers are discussed in \citet{roth2024forecasting} in the online setting for deterministic predictors. Here, we provide the result for randomized predictors in the batch setting. Type regret for quantal response receivers are discussed in \citet{tang2025dimension} for determinisitc predictors. Here, we state and prove the result for randomized predictors.
\begin{theorem}\label{thm:no-regret-smooth}
    If a predictor \(f\) is \(\epsilon\)-smoothed-decision calibrated, then it satisfies \(2L|A|\epsilon+\frac{\ln|A|+1}{\eta}\)-swap/type/swap-type regret.
\end{theorem}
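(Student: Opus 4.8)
The plan is to reuse, almost verbatim, the three-term decomposition from the proofs of \cref{thm:swap} and \cref{thm:type}, and to replace the single step that invoked exact best-response optimality by an estimate for the entropy-regularized optimizer. I will write out the swap-regret case; the type and swap-type cases use the same decomposition with $\tilde b_i$ replaced by $\tilde b_{i'}$ in the relevant factor and are handled at the end. Fix a receiver $i$ and a map $\phi:A\to A$. I would expand $\E_{h\sim f}\E_{\Ds}[\sum_a v_i(\phi(a),y)\,\tilde b_i(h(x),a)]-\E_{h\sim f}\E_{\Ds}[\sum_a v_i(a,y)\,\tilde b_i(h(x),a)]$ by adding and subtracting $\E_{h\sim f}\E_{\Ds}[\sum_a v_i(\phi(a),h(x))\,\tilde b_i(h(x),a)]$ and $\E_{h\sim f}\E_{\Ds}[\sum_a v_i(a,h(x))\,\tilde b_i(h(x),a)]$, exactly as in \cref{thm:swap}, splitting the difference into two ``calibration'' terms and one ``optimality'' term.

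For the two calibration terms I would use that $v_i(a,\cdot)$ is linear, so $v_i(a,y)-v_i(a,h(x))=v_i(a,y-h(x))=\sum_j w_{i,a,j}(y_j-h(x)_j)$, and that $L$-Lipschitzness in $\|\cdot\|_\infty$ forces $\sum_j|w_{i,a,j}|\le L$. Then $\big|\E_{h\sim f}\E_{\Ds}[(v_i(\phi(a),y)-v_i(\phi(a),h(x)))\tilde b_i(h(x),a)]\big|\le\sum_j|w_{i,\phi(a),j}|\cdot\big|\E_{h\sim f}\E_{\Ds}[(y_j-h(x)_j)\tilde b_i(h(x),a)]\big|\le L\epsilon$ by $\epsilon$-smoothed decision calibration, and summing over $a\in A$ bounds each calibration term by $L|A|\epsilon$, just as in the strict-best-response case.

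The crux, and the step I expect to need the most care, is the middle ``optimality'' term $\E_{h\sim f}\E_{\Ds}[\sum_a(v_i(\phi(a),h(x))-v_i(a,h(x)))\tilde b_i(h(x),a)]$, which was $\le 0$ in \cref{thm:swap} but is not here. My plan is to observe that, for each fixed $x$ and $h$, the quantal response $p:=\tilde b_i(h(x),\cdot)=\mathrm{softmax}(\eta\,v_i(\cdot,h(x)))$ is exactly the maximizer over $\Delta(A)$ of $r\mapsto\sum_a v_i(a,h(x))r(a)+\frac{1}{\eta}H(r)$, where $H(r)=-\sum_a r(a)\ln r(a)$. Since $\sum_a v_i(\phi(a),h(x))p(a)=\sum_{a'}v_i(a',h(x))(\phi_\ast p)(a')$ for the pushforward $\phi_\ast p\in\Delta(A)$, optimality of $p$ gives $\sum_a v_i(\phi(a),h(x))p(a)-\sum_a v_i(a,h(x))p(a)\le\frac{1}{\eta}\big(H(p)-H(\phi_\ast p)\big)\le\frac{H(p)}{\eta}\le\frac{\ln|A|}{\eta}$; equivalently one bounds the first sum by $\max_a v_i(a,h(x))$ and, via the identity $\sum_a v_i(a,h(x))p(a)=\frac{1}{\eta}\ln\sum_a e^{\eta v_i(a,h(x))}-\frac{1}{\eta}H(p)$, the second sum from below by $\max_a v_i(a,h(x))-\frac{\ln|A|}{\eta}$. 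Taking expectation over $h\sim f$ and $\Ds$ keeps this uniform bound, so the middle term is at most $\frac{\ln|A|}{\eta}$, which we keep in the looser form $\frac{\ln|A|+1}{\eta}$; adding the three pieces yields $2L|A|\epsilon+\frac{\ln|A|+1}{\eta}$-swap regret.

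For type regret the middle term is $\sum_a v_i(a,h(x))\big(\tilde b_{i'}(h(x),a)-\tilde b_i(h(x),a)\big)$, and I would bound $\sum_a v_i(a,h(x))\tilde b_{i'}(h(x),a)\le\max_a v_i(a,h(x))$ together with the same log-sum-exp lower bound on $\sum_a v_i(a,h(x))\tilde b_i(h(x),a)$; the two calibration terms carry over unchanged, now using $\epsilon$-smoothed decision calibration of receiver $i'$ in the first and of receiver $i$ in the last. The swap-type case combines both manipulations: push $\tilde b_{i'}(h(x),\cdot)$ through $\phi$ in the first calibration term and in the middle term and argue identically. I do not expect any difficulty beyond the entropy bookkeeping in the middle term — the rest is a transcription of the strict-best-response proofs with $b_i$ replaced by $\tilde b_i$.
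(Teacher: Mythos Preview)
Your proposal is correct and follows essentially the same approach as the paper: the identical three-term decomposition, the same $L|A|\epsilon$ bound on each calibration term via linearity and Lipschitzness, and the same upper-bound-by-max / lower-bound-on-quantal-response treatment of the middle term. The only cosmetic difference is that the paper handles the middle term by invoking \cref{lem:smooth-approx} (quoted from \citet{roth2024forecasting}) to get $\sum_a v_i(a,h(x))\tilde b_i(h(x),a)\ge \max_a v_i(a,h(x))-\frac{\ln|A|+1}{\eta}$, whereas you derive this inline via the entropy-regularized characterization of softmax (and in fact obtain the slightly sharper $\frac{\ln|A|}{\eta}$ before loosening).
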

To prove \cref{thm:no-regret-smooth}, we first present a lemma proved by \citet{roth2024forecasting}, which shows that the utility that a receiver gets when they play quantal response with respect to the true outcome will be close that when they strictly best responds.
\begin{lemma}[\citet{roth2024forecasting}]\label{lem:smooth-approx}
    For any utility function \(v\) and \(y\in\Ys\), let \(a^*=\argmax_a v(a,y)\), we have \[\sum_a v(a,y) \tilde{b}(y,a)\geq v(a^*,y)-\frac{\ln|A|+1}{\eta}.\]
\end{lemma}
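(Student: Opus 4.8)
The plan is to use the standard maximum-entropy (Gibbs variational) characterization of the softmax rule. Write $Z(y):=\sum_{a'\in A}e^{\eta v(a',y)}$, so that $\tilde b(y,a)=e^{\eta v(a,y)}/Z(y)$, and let $H(p):=-\sum_a p(a)\ln p(a)$ denote the Shannon entropy. The first step is to record the algebraic identity
\[
\frac{1}{\eta}\ln Z(y)\;=\;\sum_a \tilde b(y,a)\,v(a,y)\;+\;\frac{1}{\eta}H\big(\tilde b(y,\cdot)\big),
\]
which follows by substituting $\ln\tilde b(y,a)=\eta v(a,y)-\ln Z(y)$ into the definition of $H(\tilde b(y,\cdot))$ and collecting terms. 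Equivalently, $\tilde b(y,\cdot)$ is the maximizer over $p\in\Delta(A)$ of $p\mapsto\sum_a p(a)v(a,y)+\tfrac1\eta H(p)$, and comparing the value at $\tilde b(y,\cdot)$ with the value at the point mass $\delta_{a^*}$ (whose entropy is $0$) yields the needed inequality directly; I would use whichever presentation reads more cleanly.

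The second step is two elementary estimates on the right-hand side of the identity. Since $Z(y)\ge e^{\eta v(a^*,y)}$, we have $\tfrac1\eta\ln Z(y)\ge v(a^*,y)$; and since any distribution supported on $A$ has entropy at most $\ln|A|$, we have $H(\tilde b(y,\cdot))\le\ln|A|$. Rearranging the identity and plugging these in gives
\[
\sum_a \tilde b(y,a)\,v(a,y)\;=\;\frac1\eta\ln Z(y)-\frac1\eta H\big(\tilde b(y,\cdot)\big)\;\ge\;v(a^*,y)-\frac{\ln|A|}{\eta}\;\ge\;v(a^*,y)-\frac{\ln|A|+1}{\eta},
\]
which is exactly the claim (in fact with the sharper constant $\ln|A|$ in place of $\ln|A|+1$; the extra $1/\eta$ of slack in the stated bound is harmless and I would simply carry the tighter estimate through).

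There is no real obstacle in this argument: it is a short, self-contained computation. The only points that require care are the sign bookkeeping in the identity of the first step and keeping the two one-line inequalities pointed in the right direction; everything else is immediate.
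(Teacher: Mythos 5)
Your argument is correct, and in fact proves a slightly stronger statement. The identity $\tfrac1\eta\ln Z(y)=\sum_a\tilde b(y,a)v(a,y)+\tfrac1\eta H(\tilde b(y,\cdot))$ checks out (substituting $\ln\tilde b(y,a)=\eta v(a,y)-\ln Z(y)$ into the entropy and simplifying), and the two one-line bounds $\ln Z(y)\ge\eta v(a^*,y)$ and $H(\tilde b(y,\cdot))\le\ln|A|$ are both valid, so you obtain the claim with the sharper additive loss $\ln|A|/\eta$ rather than $(\ln|A|+1)/\eta$. Note that the paper itself gives no proof of this lemma---it is imported verbatim from the cited reference---so there is no in-paper argument to match. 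The proof in that reference is of a different flavor: it bounds, for each threshold $t$, the total softmax probability placed on actions that are more than $(\ln|A|+t)/\eta$ suboptimal by $e^{-t}$, and integrates over $t$ to control the expected suboptimality; that tail-integration step is where the extra $+1$ comes from. Your Gibbs-variational route is shorter, avoids the integration entirely, and yields a tighter constant; since the paper only uses the bound as stated, the extra slack is indeed harmless and you could carry either constant through \cref{thm:no-regret-smooth}.
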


\begin{proof}[Proof of \cref{thm:no-regret-smooth}]
We only provide proof for swap-type regret as it is the strongest. The guarantees for swap regret and type regret are simple corollaries by considering \(\phi\) to be the identical mapping and \(i=i'\).

For any $\phi:\As\times\As,i,i'\in[N]$, we have
\begin{align*}
    &\E_{h\sim f}\E_{\Ds}\InBrackets{\sum_av_i(\phi(a),y)\cdot \tilde{b}_{i'}(h(x),a)}-\E_{h\sim f}\E_{\Ds}\InBrackets{\sum_av_i(a,y)\cdot \tilde{b}_i(h(x),a)}\\
    \leq&\E_{h\sim f}\E_{\Ds}\InBrackets{\sum_av_i(\phi(a),y)\cdot \tilde{b}_{i'}(h(x),a)}-\E_{h\sim f}\E_{\Ds}\InBrackets{\sum_av_i(\phi(a),h(x))\cdot \tilde{b}_{i'}(h(x),a)}\\
    &+\E_{h\sim f}\E_{\Ds}\InBrackets{\sum_av_i(\phi(a),h(x))\cdot \tilde{b}_{i'}(h(x),a)}-\E_{h\sim f}\E_{\Ds}\InBrackets{\sum_av_i(a,h(x))\cdot \tilde{b}_i(h(x),a)}\\
    &+\E_{h\sim f}\E_{\Ds}\InBrackets{\sum_av_i(a,h(x))\cdot \tilde{b}_i(h(x),a)}-\E_{h\sim f}\E_{\Ds}\InBrackets{\sum_av_i(a,y)\cdot \tilde{b}_i(h(x),a)}
\end{align*}
From the definition of decision calibration, we know that 
\[\E_{h\sim f}\E_{\Ds}\InBrackets{\sum_av_i(\phi(a),y)\cdot \tilde{b}_{i'}(h(x),a)}-\E_{h\sim f}\E_{\Ds}\InBrackets{\sum_av_i(\phi(a),h(x))\cdot \tilde{b}_{i'}(h(x),a)}\leq L|A|\epsilon\]
and
\[\E_{h\sim f}\E_{\Ds}\InBrackets{\sum_av_i(a,h(x))\cdot \tilde{b}_i(h(x),a)}-\E_{h\sim f}\E_{\Ds}\InBrackets{\sum_av_i(a,y)\cdot \tilde{b}_i(h(x),a)}\leq L|A|\epsilon.\]
By \cref{lem:smooth-approx}, we know that
\[\E_{h\sim f}\E_{\Ds}\InBrackets{\sum_av_i(a,h(x))\cdot \tilde{b}_i(h(x),a)}\geq\E_{h\sim f}\E_{\Ds}\InBrackets{\sum_av_i(a,h(x))\cdot b_i(h(x),a)}-\frac{\ln|A|+1}{\eta}.\]
Therefore, we have
\begin{align*}
    &\E_{h\sim f}\E_{\Ds}\InBrackets{\sum_av_i(\phi(a),h(x))\cdot \tilde{b}_{i'}(,h(x),a)}-\E_{h\sim f}\E_{\Ds}\InBrackets{\sum_av_i(a,h(x))\cdot \tilde{b}_i(h(x),a)}\\
    \leq&\E_{h\sim f}\E_{\Ds}\InBrackets{\sum_av_i(\phi(a),h(x))\cdot b_{i'}(,h(x),a)}-\E_{h\sim f}\E_{\Ds}\InBrackets{\sum_av_i(a,h(x))\cdot b_i(,h(x),a)}+\frac{\ln|A|+1}{\eta}.
\end{align*}
Since given \(h(x)\), \(b_i(h(x),a)\) is the optimal decision rule for \(v_i\), we know that
\[\E_{h\sim f}\E_{\Ds}\InBrackets{\sum_av_i(\phi(a),h(x))\cdot b_{i'}(h(x),a)}-\E_{h\sim f}\E_{\Ds}\InBrackets{\sum_av_i(a,h(x))\cdot b_i(h(x),a)}\leq0\]
Putting them together, we have that \[\E_{h\sim f}\E_{\Ds}\InBrackets{\sum_av_i(\phi(a),y)\cdot b_{i'}(h(x),a)}-\E_{h\sim f}\E_{\Ds}\InBrackets{\sum_av_i(a,y)\cdot b_{i}(h(x),a)}\leq2L|A|\epsilon+\frac{\ln|A|+1}{\eta}.\]

\end{proof}

\section{Missing Proofs in Section \ref{sec: finite}}\label{sec:proof-finite}
\bounded*
\begin{proof}
    We use \(f^*\) to denote the optimal feasible solution to \cref{eq:constraint}, since the constraints are satisfied, we have that \(\Ls(f^*,\hat\lambda)\leq \mathrm{OPT}(\Hs,\Ds,\gamma)\). We prove the theorem by considering two cases. 
    
    First, if \(\hat{f}\) is a feasible solution to the problem \cref{eq:constraint}, i.e. \(\mathrm{DecCE}(\hat{f})\leq \gamma\leq\gamma+\frac{1+2\epsilon}{C}\). Since \((\hat{f},\bar\lambda)\) is a \(\epsilon\)-approximate equilibrium, we have that 
    \begin{align*}
        -\E_{h\sim f}\E_D[\sum_{\bm a\in\As}u(\bm a,y)\cdot b(h(x),\bm a)]&=\max_{\lambda}\Ls(\hat{f},\lambda)\\
        &\leq\Ls(\hat{f},\bar\lambda)+\epsilon\\
        &\leq\min_{f\in\Delta(\Hs)}\Ls(\hat{f},\bar\lambda)+2\epsilon\\
        &\leq\Ls(f^*,\bar\lambda)+2\epsilon\\
        &\leq -\E_{h\sim f^*}\E_D[\sum_{\bm a\in\As}u(\bm a,y)\cdot b(h(x),\bm a)]+2\epsilon\\
        & =-\mathrm{OPT}(\Hs,\Ds,\gamma).
    \end{align*}
    Therefore, we have
    \[\E_{h\sim f}\E_D[\sum_{\bm a\in\As}u(\bm a,y)\cdot b(h(x),\bm a)]\geq\mathrm{OPT}(\Hs,\Ds,\gamma)-2\epsilon.\]
    
    Second, we consider the case where \(\hat{f}\) is not a feasible solution to \cref{eq:constraint}. Let \((\hat{s},\hat{i},\hat{j},\hat{a}_i)=\argmax_{s,i,j,a_i}s\InParentheses{\E_f\E_D[(h(x)_j-y_j)\cdot b_i(h(x),a_i)]-\gamma}>0\), and let \(\lambda'\) be the vector such that the \((\hat{s},\hat{i},\hat{j},\hat{a}_i)\)-th coordinate \(\lambda'_{\hat{s},\hat{i},\hat{j},\hat{a}_i}=C\), and all else coordinates are \(0\), then we have that given \(\lambda'\in\argmax_{\lambda}\Ls(\hat{f},\lambda)\). Therefore, since \((\hat{f},\bar\lambda)\) is a \(\epsilon\)-approximate equilibrium, we have 
    \begin{align*}
    \Ls(\hat{f},\bar{\lambda})&\geq\max_{\lambda}\Ls(\hat{f},\lambda)-\epsilon\\
    &=-\E_{h\sim f}\E_D[\sum_{\bm a\in\As}u(\bm a,y)\cdot b(h(x),\bm a)]+C\hat{s}\InParentheses{\E_f\E_D[(h(x)_{\hat j}-y_{\hat j})\cdot b_{\hat i}(h(x),a_{\hat i})]-\gamma}-\epsilon\\
    \end{align*}
Therefore,
\begin{align*}
    &-\E_{h\sim f}\E_D[\sum_{\bm a\in\As}u(\bm a,y)\cdot b(h(x),\bm a)]+C\hat{s}\InParentheses{\E_f\E_D[(h(x)_{\hat j}-y_{\hat j})\cdot b_{\hat i}(h(x),a_{\hat i})]-\gamma}\\
    &\leq \Ls(\hat{f},\bar{\lambda})+\epsilon\\
    &\leq \Ls(f^*,\bar{\lambda})+2\epsilon\\
    &\leq -\E_{h\sim \hat{f}}\E_D[\sum_{\bm a\in\As}u(\bm a,y)\cdot b(h(x),\bm a)]+2\epsilon
\end{align*}
Since \(\forall \bm a,y,u(\bm a,y)\in[0,1]\), we have that \[C\hat{s}\InParentheses{\E_f\E_D[(h(x)_{\hat j}-y_{\hat j})\cdot b_{\hat i}(h(x),a_{\hat i})]-\gamma}\leq1+2\epsilon.\]
Thus, \[\max_{s,i,j,a_i}s\InParentheses{\E_f\E_D[(h(x)_j-y_j)\cdot b_i(h(x),a_i)]-\gamma}\leq\frac{1+2\epsilon}{C},\]
and this implies that \(\mathrm{DecCE}(\hat{f})\leq\gamma+\frac{1+2\epsilon}{C}.\)
\end{proof}

We need the following technical lemmas before proving \cref{lem:payoff-approx}. For any function $\psi:\Xs\times\Ys\rightarrow\R$ and any dataset $D=\{(x^{(i)},y^{(i)})\}_{i=1}^n$, we denote the empirical expectation of $\psi$ over $D$ as
\[
\hat{\E}_{D}[\psi(x,y)]\triangleq\frac{1}{n}\sum_{i=1}^n\psi(x^{(i)},y^{(i)}).
\]

\begin{restatable}{theorem}{finiteobj}\label{thm: finiteobj}
Fix a finite-size class of deterministic predictors $\Hs$. For any distribution $\Ds$, let $D\sim\Ds^n$ be a dataset consisting of $n$ samples $(x^{(i)},y^{(i)})$ sampled i.i.d. from $\Ds$. Then for any $\delta\in(0,1)$, with probability $1-\delta$, for every $f\in\Delta(\Hs)$, we have
\[
\InAbs{\E_{\Ds}\E_{h\sim f}\InBrackets{\sum_{\bm a\in\As}u(\bm a,y)\cdot b(h(x),\bm a)}-\hat{\E}_D\E_{h\sim f}\InBrackets{\sum_{\bm a\in\As}u(\bm a,y)\cdot b(h(x),\bm a)}}\le\sqrt{\frac{\ln\frac{2\InAbs{\Hs}}{\delta}}{2n}}.
\]   
\end{restatable}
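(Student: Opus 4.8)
The plan is to reduce the uniform statement over the continuum $\Delta(\Hs)$ to a finite union bound over the $\InAbs{\Hs}$ deterministic predictors, exploiting that the functional in question is an expectation over $h\sim f$ and hence \emph{linear} in $f$. First I would fix a single $h\in\Hs$ and set $g_h(x,y):=\sum_{\bm a\in\As}u(\bm a,y)\cdot b(h(x),\bm a)$. Since $b(h(x),\bm a)=\prod_{i=1}^N b_i(h(x),a_i)$ is a product of best-response point masses, $b(h(x),\cdot)$ is a probability distribution on $\As$, and $u(\bm a,y)\in[0,1]$ for all $\bm a,y$; therefore $g_h(x,y)\in[0,1]$. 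The empirical quantity $\hat\E_D[g_h]=\frac1n\sum_{i=1}^n g_h(x^{(i)},y^{(i)})$ is then an average of $n$ i.i.d.\ $[0,1]$-valued random variables with mean $\E_\Ds[g_h]$, so Hoeffding's inequality gives $\Pr\InBrackets{\InAbs{\E_\Ds[g_h]-\hat\E_D[g_h]}>t}\le 2e^{-2nt^2}$ for every $t>0$.

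Next I would take a union bound over all $h\in\Hs$: choosing $t=\sqrt{\ln(2\InAbs{\Hs}/\delta)/(2n)}$ makes the total failure probability at most $\InAbs{\Hs}\cdot 2e^{-2nt^2}=\delta$. Hence, with probability at least $1-\delta$, the event
\[
\mathcal{E}:=\Big\{\,\forall h\in\Hs:\ \InAbs{\E_\Ds[g_h]-\hat\E_D[g_h]}\le t\,\Big\}
\]
holds.

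Finally, on the event $\mathcal{E}$, I would handle an arbitrary $f\in\Delta(\Hs)$ by pushing the mixture over $h\sim f$ outside the deviation and applying the triangle inequality (equivalently, Jensen):
\[
\InAbs{\E_\Ds\E_{h\sim f}\InBrackets{g_h(x,y)}-\hat\E_D\E_{h\sim f}\InBrackets{g_h(x,y)}}
=\InAbs{\E_{h\sim f}\InBrackets{\E_\Ds[g_h]-\hat\E_D[g_h]}}
\le \E_{h\sim f}\InAbs{\E_\Ds[g_h]-\hat\E_D[g_h]}
\le t,
\]
where the interchange of $\E_{h\sim f}$ with $\E_\Ds$ and with $\hat\E_D$ is just linearity of expectation (a finite sum over $\supp(f)$, or Fubini). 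Since this holds simultaneously for every $f$ on the event $\mathcal{E}$, this is exactly the claimed bound.

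The one point worth flagging — though it is the conceptual core rather than a genuine obstacle — is that uniform convergence over the infinite set $\Delta(\Hs)$ cannot be obtained by a direct union bound; the linearity of the objective in $f$ collapses the relevant supremum from $\Delta(\Hs)$ down to its vertices $\Hs$, so no covering-number or Rademacher-complexity machinery is needed here, and Hoeffding plus a union bound over $\Hs$ suffices. (The only mild care is checking, as above, that $g_h$ is $[0,1]$-valued so that Hoeffding applies with range $1$, yielding the $1/(2n)$ constant in the bound.)
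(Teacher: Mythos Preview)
Your proposal is correct and follows essentially the same approach as the paper's proof: Hoeffding for each fixed $h\in\Hs$ (using that $g_h\in[0,1]$ since $b(h(x),\cdot)$ is a distribution and $u\in[0,1]$), a union bound over the finite $\Hs$, and then linearity of the objective in $f$ together with the triangle inequality to extend to all of $\Delta(\Hs)$. Your explicit remark on why the supremum over $\Delta(\Hs)$ collapses to the vertices is a nice addition but otherwise the arguments coincide.
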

\begin{proof}
For any $(x,y)\sim\Ds$, observe that $\sum_{\bm a\in\As}u(\bm a,y)\cdot b(h(x),\bm a)\le\sum b(h(x),\bm a)=1$. By Hoeffding's inequality, we have for any $h\in\Hs$, for any $\delta'\in(0,1)$ with probability $1-\delta'$, we have
\[
\InAbs{\E_{\Ds}\InBrackets{\sum_{\bm a\in\As}u(\bm a,y)\cdot b(h(x),\bm a)}-
\hat{\E}_D\InBrackets{\sum_{\bm a\in\As}u(\bm a,y)\cdot b(h(x),\bm a)}}\le\sqrt{\frac{\ln\frac{2}{\delta'}}{2n}}.
\]

Then let $\delta'=\delta/\InAbs{\Hs}$, by union bound we have with probability $1-\delta$, for any $h\in\Hs$,
\[
\InAbs{\E_{\Ds}\InBrackets{\sum_{\bm a\in\As}u(\bm a,y)\cdot b(h(x),\bm a)}-
\hat{\E}_D\InBrackets{\sum_{\bm a\in\As}u(\bm a,y)\cdot b(h(x),\bm a)}}\le\sqrt{\frac{\ln\frac{2\InAbs{\Hs}}{\delta}}{2n}}.
\]

Finally we have with probability $1-\delta$, for any $f\in\Delta(\Hs)$,
\begin{align*}
&\InAbs{\E_{\Ds}\E_{h\sim f}\InBrackets{\sum_{\bm a\in\As}u(\bm a,y)\cdot b(h(x),\bm a)}-\hat{\E}_D\E_{h\sim f}\InBrackets{\sum_{\bm a\in\As}u(\bm a,y)\cdot b(h(x),\bm a)}}\\
&=\InAbs{\E_{h\sim f}\InBrackets{\E_{\Ds}\InBrackets{\sum_{\bm a\in\As}u(\bm a,y)\cdot b(h(x),\bm a)}}-\E_{h\sim f}\InBrackets{\hat{\E}_D\InBrackets{\sum_{\bm a\in\As}u(\bm a,y)\cdot b(h(x),\bm a)}}}\\
&=\InAbs{\E_{h\sim f}\InBrackets{\E_{\Ds}\InBrackets{\sum_{\bm a\in\As}u(\bm a,y)\cdot b(h(x),\bm a)}-
\hat{\E}_D\InBrackets{\sum_{\bm a\in\As}u(\bm a,y)\cdot b(h(x),\bm a)}}}\\
&\le\E_{h\sim f}\InBrackets{\InAbs{\E_{\Ds}\InBrackets{\sum_{\bm a\in\As}u(\bm a,y)\cdot b(h(x),\bm a)}-
\hat{\E}_D\InBrackets{\sum_{\bm a\in\As}u(\bm a,y)\cdot b(h(x),\bm a)}}}\\
&\le\E_{h\sim f}\InBrackets{\sqrt{\frac{\ln\frac{2\InAbs{\Hs}}{\delta}}{2n}}}=\sqrt{\frac{\ln\frac{2\InAbs{\Hs}}{\delta}}{2n}}.
\end{align*}
\end{proof}

\begin{restatable}{theorem}{finiteconstraint}\label{thm: finiteconstraint}
Fix a finite-size class of deterministic predictors $\Hs$. For any distribution $\Ds$, let $D\sim\Ds^n$ be a dataset consisting of $n$ samples $(x^{(i)},y^{(i)})$ sampled i.i.d. from $\Ds$. Then for any $\delta\in(0,1)$, with probability $1-\delta$, for every $f\in\Delta(\Hs),i\in[N], j\in[d], a_i\in\As_i$ we have
\begin{align*}
&\InAbs{\E_{\Ds}\E_{h\sim f}\InBrackets{\InParentheses{y_j-h(x)_j}\cdot b_i(h(x),a_i)}-\hat{\E}_D\E_{h\sim f}\InBrackets{\InParentheses{y_j-h(x)_j}\cdot b_i(h(x),a_i)}}\\
&\le\sqrt{\frac{8\ln\frac{2\InAbs{\Hs}Ndm}{\delta}}{n}}.
\end{align*}
\end{restatable}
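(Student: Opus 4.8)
The plan is to mirror the proof of \cref{thm: finiteobj} almost verbatim, with the only substantive changes being a careful bookkeeping of the range of the relevant per-predictor random variable and a slightly larger union bound to account for the extra indices $i,j,a_i$.

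First I would fix a tuple $(h,i,j,a_i)\in\Hs\times[N]\times[d]\times\As_i$ and consider the bounded function $\psi(x,y):=(y_j-h(x)_j)\cdot b_i(h(x),a_i)$. Since $\Ys=[-1,1]^d$, we have $y_j,h(x)_j\in[-1,1]$, hence $|y_j-h(x)_j|\le 2$, and $b_i(h(x),a_i)\in[0,1]$ is a probability, so $\psi$ takes values in the interval $[-2,2]$ of width $4$. Applying Hoeffding's inequality to the i.i.d.\ sample $D\sim\Ds^n$ then gives, for any $\delta'\in(0,1)$, with probability at least $1-\delta'$,
\[
\InAbs{\E_{\Ds}[\psi(x,y)]-\hat{\E}_D[\psi(x,y)]}\le\sqrt{\frac{16\ln\frac{2}{\delta'}}{2n}}=\sqrt{\frac{8\ln\frac{2}{\delta'}}{n}}.
\]

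Next I would take a union bound over all $\InAbs{\Hs}\cdot N\cdot d\cdot m$ tuples $(h,i,j,a_i)$ (recall $\InAbs{\As_i}=m$ for all $i$), setting $\delta'=\delta/(\InAbs{\Hs}Ndm)$. This yields that with probability at least $1-\delta$, the displayed bound holds simultaneously for every deterministic $h\in\Hs$ and all $i\in[N],j\in[d],a_i\in\As_i$, with $\sqrt{8\ln(2\InAbs{\Hs}Ndm/\delta)/n}$ on the right-hand side. Finally, to pass from deterministic predictors to randomized $f\in\Delta(\Hs)$, I would use linearity of expectation and the triangle inequality exactly as in \cref{thm: finiteobj}: writing $g_h:=\E_{\Ds}[\psi_h]-\hat{\E}_D[\psi_h]$, we have $\InAbs{\E_{\Ds}\E_{h\sim f}[\psi_h]-\hat{\E}_D\E_{h\sim f}[\psi_h]}=\InAbs{\E_{h\sim f}[g_h]}\le\E_{h\sim f}\InAbs{g_h}\le\sup_{h\in\supp(f)}\InAbs{g_h}$, and since the bound just established is constant in $h$, this is exactly the claimed quantity.

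There is no genuine obstacle here; the only points requiring care are (i) correctly identifying the range of $\psi$ as having width $4$ rather than $2$ (because $y_j$ and $h(x)_j$ each range over $[-1,1]$), which is what produces the constant $8$ in place of the $1/2$ appearing in \cref{thm: finiteobj}, and (ii) making the cardinality in the union bound equal to $\InAbs{\Hs}Ndm$, i.e.\ the total number of decision-calibration constraint functions, so that the logarithmic factor matches the statement of \cref{thm: finiteconstraint}.
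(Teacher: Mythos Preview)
Your proposal is correct and follows essentially the same approach as the paper's own proof: fix a tuple $(h,i,j,a_i)$, bound the per-sample random variable in $[-2,2]$, apply Hoeffding to get the $\sqrt{8\ln(2/\delta')/n}$ deviation, union bound over the $\InAbs{\Hs}Ndm$ tuples, and then extend to $f\in\Delta(\Hs)$ via linearity and the triangle inequality. The bookkeeping (range width $4$, constant $8$, union-bound cardinality) matches exactly.
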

\begin{proof}
For any $(x,y)\sim\Ds,i\in[N],j\in[d]$, observe that $\InParentheses{y_j-h(x)_j}\cdot b_i(h(x),a_i)\in[-2,2]$. By Hoeffding's inequality, we have for any $h\in\Hs$, for any $\delta'\in(0,1)$ with probability $1-\delta'$, we have
\[
\InAbs{\E_{\Ds}\InBrackets{\InParentheses{y_j-h(x)_j}\cdot b_i(h(x),a_i)}-
\hat{\E}_D\InBrackets{\InParentheses{y_j-h(x)_j}\cdot b_i(h(x),a_i)}}\le\sqrt{\frac{8\ln\frac{2}{\delta'}}{n}}.
\]

Then let $\delta'=\delta/\InParentheses{\InAbs{\Hs}Ndm}$, by union bound we have with probability $1-\delta$, for any $h\in\Hs$, $i\in[N]$ and $j\in[d]$ and $a_i\in\As_i$
\[
\InAbs{\E_{\Ds}\InBrackets{\InParentheses{y_j-h(x)_j}\cdot b_i(h(x),a_i)}-
\hat{\E}_D\InBrackets{\InParentheses{y_j-h(x)_j}\cdot b_i(h(x),a_i)}}\le\sqrt{\frac{8\ln\frac{2\InAbs{\Hs}Ndm}{\delta}}{n}}.
\]

Finally we have with probability $1-\delta$, for any $f\in\Delta(\Hs)$,
\begin{align*}
&\InAbs{\E_{(x,y)\sim\Ds}\E_{h\sim f}\InBrackets{\InParentheses{y_j-h(x)_j}\cdot b_i(h(x),a_i)}-\hat{\E}_{D}\E_{h\sim f}\InBrackets{\InParentheses{y_j-h(x)_j}\cdot b_i(h(x),a_i)}}\\
&=\InAbs{\E_{h\sim f}\InBrackets{\E_{(x,y)\sim\Ds}\InBrackets{\InParentheses{y_j-h(x)_j}\cdot b_i(h(x),a_i)}}-\E_{h\sim f}\InBrackets{\hat{\E}_{D}\InBrackets{\InParentheses{y_j-h(x)_j}\cdot b_i(h(x),a_i)}}}\\
&=\InAbs{\E_{h\sim f}\InBrackets{\E_{(x,y)\sim\Ds}\InBrackets{\InParentheses{y_j-h(x)_j}\cdot b_i(h(x),a_i)}-\hat{\E}_{D}\InBrackets{\InParentheses{y_j-h(x)_j}\cdot b_i(h(x),a_i)}}}\\
&\le\E_{h\sim f}\InBrackets{\InAbs{\E_{(x,y)\sim\Ds}\InBrackets{\InParentheses{y_j-h(x)_j}\cdot b_i(h(x),a_i)}-\hat{\E}_{D}\InBrackets{\InParentheses{y_j-h(x)_j}\cdot b_i(h(x),a_i)}}}\\
&\le\E_{h\sim f}\InBrackets{\sqrt{\frac{8\ln\frac{2\InAbs{\Hs}Ndm}{\delta}}{n}}}=\sqrt{\frac{8\ln\frac{2\InAbs{\Hs}Ndm}{\delta}}{n}}.
\end{align*}   
\end{proof}

Now we are ready to prove \cref{lem:payoff-approx}.
\payoffapprox*
\begin{proof}
    This is straightforward from \cref{thm: finiteobj} and \cref{thm: finiteconstraint}, we split the budget to \(\delta/2\),
    \begin{align*}
        &\Ls_D(f,\lambda)-\Ls_\Ds(f,\lambda)\\=&-\E_{h\sim f}\E_\Ds[\sum_{\bm a\in\As}u(\bm a,y)\cdot b(h(x),\bm a)]+-\E_{h\sim f}\E_D[\sum_{\bm a\in\As}u(\bm a,y)\cdot b(h(x),\bm a)]\\&+\sum_{s\in\{+,-\}}\sum_{i=1}^N\sum_{j=1}^d\sum_{a_i\in A_i}\lambda_{s,i,j,a_i}s\InParentheses{\E_f\E_\Ds[(h(x)_j-y_j)\cdot b_i(h(x),a_i)]-\gamma}\\
        &-\sum_{s\in\{+,-\}}\sum_{i=1}^N\sum_{j=1}^d\sum_{a_i\in A_i}\lambda_{s,i,j,a_i}s\InParentheses{\E_f\E_\Ds[(h(x)_j-y_j)\cdot b_i(h(x),a_i)]-\gamma}
    \end{align*}
    Therefore,
    \begin{align*}
    &\InAbs{\Ls_D(f,\lambda)-\Ls_\Ds(f,\lambda)}\\
        \leq&\InAbs{\E_{h\sim f}\E_\Ds[\sum_{\bm a\in\As}u(\bm a,y)\cdot b(h(x),\bm a)]-\E_{h\sim f}\E_D[\sum_{\bm a\in\As}u(\bm a,y)\cdot b(h(x),\bm a)]}\\
        &+\sum_{s\in\{+,-\}}\sum_{i=1}^N\sum_{j=1}^d\sum_{a_i\in A_i}\lambda_{s,i,j,a_i}s\InAbs{\E_f\E_\Ds[(h(x)_j-y_j)\cdot b_i(h(x),a_i)]-\E_f\E_D[(h(x)_j-y_j)\cdot b_i(h(x),a_i)]}\\
        \leq& \sqrt{\frac{\ln\frac{\InAbs{4\Hs}}{\delta}}{2n}}+C\sqrt{\frac{8\ln\frac{4\InAbs{\Hs}Ndm}{\delta}}{n}}.
    \end{align*}
\end{proof}

Finally we are ready to prove \cref{thm: percal}.
\percal*
\begin{proof}
The regret bound of Hedge algorithm is \(O(C\sqrt{T\log Nmd})\), and the best response of \(f\) give non-positive regret. From \cref{lem:approx-equilibrium} and \(T=O(\log(Nmd/\delta)/\epsilon^4)\), we have that \((\hat{f},\bar{\lambda})\) is an \(\eps/4\)-approximate equilibrium under \(\Ls_D(f,\lambda)\). From \cref{lem:payoff-approx}, we know that with probability \(1-\delta\), \(\forall f\in\Delta(\Hs),\lambda\in\Lambda\), \[\InAbs{\Ls_\Ds(f,\lambda)-\Ls_\Ds(f,\lambda)}\leq\sqrt{\frac{\ln\frac{\InAbs{4\Hs}}{\delta}}{2n}}+C\sqrt{\frac{8\ln\frac{4\InAbs{\Hs}Ndm}{\delta}}{n}}.\]
Therefore, let \(f'=\arg\min_{f\in\Delta(\Hs)}\Ls_\Ds(f,\bar{\lambda})\)
\begin{align*}
    \Ls_\Ds(\hat{f},\bar{\lambda})-\Ls_\Ds(f',\bar{\lambda})&\leq\Ls_\Ds(\hat{f},\bar{\lambda})-\Ls_D(\hat{f},\bar{\lambda})+\Ls_D(\hat{f},\bar{\lambda})-\Ls_D(f',\bar{\lambda})+\Ls_D(f',\bar{\lambda})-\Ls_\Ds(f',\bar{\lambda})\\&\leq \epsilon/2+2\sqrt{\frac{\ln\frac{\InAbs{4\Hs}}{\delta}}{2n}}+2C\sqrt{\frac{8\ln\frac{4\InAbs{\Hs}Ndm}{\delta}}{n}}.
\end{align*}
Similarly, we can prove that \[\max_{\lambda\in\Lambda}\Ls_\Ds(\hat{f},\lambda)-\Ls_\Ds(\hat{f},\bar{\lambda})\leq\epsilon/4+2\sqrt{\frac{\ln\frac{\InAbs{4\Hs}}{\delta}}{2n}}+2C\sqrt{\frac{8\ln\frac{4\InAbs{\Hs}Ndm}{\delta}}{n}}.\]
Therefore, \((\hat{f},\bar{\lambda})\) is an \(\epsilon/4+2\sqrt{\frac{\ln\frac{\InAbs{4\Hs}}{\delta}}{2n}}+2C\sqrt{\frac{8\ln\frac{4\InAbs{\Hs}Ndm}{\delta}}{n}}\)-approximate equilibrium for the payoff \(\Ls_\Ds(f,\lambda)\) under the true distribution \(\Ds\). When \(n\geq O(\frac{\log(|\Hs|Ndm)}{\epsilon^4})\), we have \((\hat{f},\bar{\lambda})\) is a \(\epsilon/2\)-approximate equilibrium under the true distribution \(\Ds\). Then, by \cref{lem:bounded}, and the choice of \(C\), we have \[\mathrm{DecCE}(\hat{f})\leq\gamma+\frac{1+\epsilon}{C}\leq\gamma+\frac{1+\epsilon}{2/\epsilon}\leq\gamma+\frac{2}{2/\epsilon}=\gamma+\epsilon,\] and \[\E_{h\sim f}\E_\Ds[u(\bm a,y)\cdot b(h(x),\bm a)]\geq \mathrm{OPT}(\Hs,\Ds,\gamma)-\epsilon.\]
Finally by \cref{thm:swap}, the receivers who play $\eta$-quantal response obtain the stated swap regret bound.
\end{proof}

\section{Missing Proofs in Section \ref{sec: bayesian}}\label{sec: proofbayesian}
\equivalence*
\begin{proof}
Fix any signaling scheme $\pi:\Ys\rightarrow\Delta(S)$. Any context $x'\in\Xs$ corresponds to a state $\theta_{x'}=\E[y\mid x']$ and hence corresponds to a distribution of signals. Any signal $s\in S$ corresponds to a posterior mean $\E[\theta_{x'}\mid s]$. We define the mapping $g:\Xs\rightarrow\Delta(\Ys)$ such that
\[
\Pr[g(x)=\E[\theta_{x'}\mid s]]=\pi(s\mid\theta_x).
\]
We have that for any $s\in S$
\begin{align*}
\E[y\mid g(x)=\E[\theta_{x'}\mid s]]&=\int_yy\cdot\Pr[y\mid g(x)=\E[\theta_{x'}\mid s]]dy\\
&=\int_yy\cdot\InParentheses{\int_x\Pr[y\mid x,s]dx}dy\\
&=\int_x\InParentheses{\int_y\Pr[y\mid x,s]dy}dx\\
&=\int_x\E[y\mid x,s]dx\\
&=\int_x\theta_x\Pr[x\mid s]dx\\
&=\E[\theta_{x'}\mid s]
\end{align*}
Therefore, we can convert $g$ to a randomized predictor $f\in\Delta(\Hs_\all)$ that is perfectly calibrated.

Now consider any randomized predictor $f$ that is calibrated, i.e. we have that for any $v\in\Ys$
\[
\E_{h\sim f}\E_{(x,y)\sim\Ds}[y-h(x)\mid h(x)=v]=0.
\]
Now consider a signaling scheme such that the signal set is $\Ys$ and given any state $\theta=\E[y\mid x]$, it sends signal $v\in\Ys$ with probability $\Pr_{h\sim f}[h(x)=v\mid \theta]$. We have that for any signal $v$, the posterior mean is
\begin{align*}
\E[\theta\mid v]&=\E[\theta\mid h(x)=v]\\
&=\E[\E[y\mid x]\mid h(x)=v]\\
&=\E[y\mid h(x)]\\
&=v.
\end{align*}
\end{proof}

\dctoc*
\begin{proof}
Without loss of generality, we assume that for any \(a\), event \(\{b(h(x),a)=1\}\) happens with non-zero probability, otherwise we could remove that action.
Since \(f\) is perfectly decision calibrated, we have \[\max_{j\in[d]}\max_{a\in\As}\E_{h\sim f}\E_{(x,y)\sim\Ds}[(y_j-h(x)_j)\cdot b(h(x),a)]=0.\]
Since \(\Pr\{b(h(x),a)=1\}>0\), equivalently, we have \[\E_{h\sim f}\E_{(x,y)\sim\Ds}\InBrackets{h(x)-y|b(h(x),a)=1}=0.\]
For any \(a\in\As\), let \(f'_a:=\E_{h\sim f}\E_{(x,y)\sim\Ds}\InBrackets{h(x)|b(h(x),a)=1}\)
Consider a post-processing function \(p(h(x))=\sum_{a\in\As}b(h(x),a)f'_a\).
Now we are ready to construct \(f'\), we let \(f'=p(f)\). Note that \(f'\) only output at most \(m\) values, we only need to check the level sets of each \(f'_a\). Also, the set \(\{y|b(y,a)=1\}\) is convex, therefore \(b(f'_a,a)=1\).
\begin{align*}
    &\E_{h\sim f'}\E_{(x,y)\sim\Ds}[(y_j-h(x)_j)\cdot \1(h(x)=f'_a)]\\
    =&\E_{h\sim f'}\E_{(x,y)\sim\Ds}[(y_j-h(x)_j)\cdot b(h(x),a)]\\
    =&\E_{h\sim f}\E_{(x,y)\sim\Ds}[(y_j-p(h(x))_j)\cdot b(h(x),a)]\\
    =&\E_{h\sim f}\E_{(x,y)\sim\Ds}\InBrackets{\InParentheses{y_j-\InBrackets{\sum_{a'\in\As}b(h(x),a)f'_{a'}}_j}\cdot b(h(x),a)}\\
    =&\E_{h\sim f}\E_{(x,y)\sim\Ds}[(y_j-[b(h(x),a)f'_a]_j)\cdot b(h(x),a)]\\
    =& \E_{h\sim f}\E_{(x,y)\sim\Ds}[(y_j\cdot b(h(x),a)]-\E_{h\sim f}\E_{(x,y)\sim\Ds}[(h(x)_j\cdot b(h(x),a)]\\
    =&0.
\end{align*}
\end{proof}

\matching*
\begin{proof}
By \cref{thm: percal} we know that given any tolerance \(\gamma\ge0\) and \(n\geq O\InParentheses{\frac{\ln(|\Hs|dm/\delta)}{\epsilon^4}}\), let \(C=\frac{2}{\epsilon}\), when \texttt{PerDecCal} runs for \(T=O(\ln(Nmd)/\epsilon^4)\) rounds, with probability at least \(1-\delta\), it outputs \(\hat{f}\) that satisfies 
\begin{align*}
\E_{h\sim f}\E_\Ds[u(\bm a,y)\cdot b(h(x),\bm a)]&\geq \mathrm{OPT}(\Hs,\Ds,\gamma)-\epsilon\\
&\ge\mathrm{OPT}(\Hs,\Ds,0)-\epsilon.
\end{align*}

Note that $\mathrm{OPT}(\Hs,\Ds,0)$ is the optimal sender utility achieved by the randomized predictors over $\Hs$ that are perfectly decision calibrated, i.e. $\Fs_\dcal(\Hs)$. By \cref{def: signals} and \cref{lem: dctoc}, $\mathrm{OPT}(\Hs,\Ds,0)$ is equal to the optimal sender utility achieved by the predictors in $\Fs_\mathrm{CAL}(\Hs)$.

Then by \cref{def: signals} and \cref{lem: equivalence}, $\mathrm{OPT}(\Hs,\Ds,0)$ is equal to the optimal sender utility achieved by the signaling schemes in $\Pi_\Hs$, i.e. $\mathrm{BayesOPT(\mu_\Ds,\Pi_\Hs)}$.

Putting all these together, we know that the sender's expected utility under $\hat{f}$ is greater than $\mathrm{BayesOPT(\mu_\Ds,\Pi_\Hs)}-\epsilon$.

\end{proof}

\section{Algorithm and Missing Proofs in Section \ref{sec: inf}}\label{sec: proofinf}
\subsection{The algorithm for Infinite Hypothesis Class}
\paragraph{Lagrangian and Minimax Game} Similarly, we can introduce the Lagrangian and restrict the Lagrangian variables to be bounded as follows:
\begin{equation}
    \begin{aligned}
        \min_{f\in\Delta{\Hs}}\max_{\lambda\in \Lambda}\tilde\Ls_\Ds(f,\lambda):=&-\E_{h\sim f}\E_\Ds[\sum_{\bm a\in\As}u(\bm a,y)\cdot \Tilde{b}(h(x),\bm a)]\\&+\sum_{s\in\{+,-\}}\sum_{i=1}^N\sum_{j=1}^d\sum_{a_i\in A_i}\lambda_{s,i,j,a_i}s\InParentheses{\E_f\E_D[(h(x)_j-y_j)\cdot \Tilde{b}_i(h(x),a_i)]-\gamma}.
    \end{aligned}
\end{equation}

Now we present our oracle efficient algorithm \texttt{SmPerDecCal}.
\begin{algorithm}[ht]
\caption{\texttt{SmPerDecCal}}
\label{alg: infinite}
\begin{algorithmic}[1]
\REQUIRE A set of samples $D$, ERM oracle $\mathrm{ERM}(D,\lambda)$, dual bound $C$ and tolerance $\gamma$.
\STATE Initialize \(\lambda_1=\frac{C}{2Nmd}\1\).
\FOR{$t=1,\cdots,T$}{
\STATE Learner best responds to $\lambda_t$:
\INDSTATE Use the ERM oracle to compute \(h_t=\mathrm{ERM}(D,\lambda_t)\).

\STATE Auditor runs Hedge to obtain $\lambda_{t+1}$:
\INDSTATE $\lambda_{t+1}=\mathrm{Hedge}(c_{1:t})$ where $c_t(\lambda_{s,i,j,a_i})=\lambda_{s,i,j,a_i}s\InParentheses{\E_h\E_D[(h_t(x)-y)\cdot \tilde{b}(h(x),a)]-\gamma}$.
}
\ENDFOR
\ENSURE Output $\hat{f}=\mathrm{Uniform}(h_1,\cdots,h_T)$.
\end{algorithmic}
\end{algorithm}

\subsection{Missing Proofs}
We need the following technical lemmas before proving \cref{thm: smpercal}.
\begin{lemma}[Lipschitzness of single smoothed best response]\label{lem: lipsingle}
    For any $i\in[N], a_i\in\As_i$, we have that the function $\Tilde{b}_i(\cdot,a_i)$ is $2\eta L$-Lipschitz in the $L_\infty$ norm.
\end{lemma}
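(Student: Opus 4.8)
The plan is to show that the map $y\mapsto\tilde b_i(y,a_i)$ is continuously differentiable on $\Ys$ with $\ell_1$-gradient bounded everywhere by $2\eta L$, and then to upgrade this to an $\|\cdot\|_\infty$-Lipschitz bound by integrating the gradient along straight segments, which stay inside $\Ys$ since $\Ys=[-1,1]^d$ is convex.

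First I would turn Assumption~1 into a bound on the linear part of each utility. Writing $v_i(a',y)=c_{a'}\cdot y+d_{a'}$, the assumption that $v_i(a',\cdot)$ is $L$-Lipschitz in $\|\cdot\|_\infty$ is equivalent to $\|c_{a'}\|_1\le L$: testing the Lipschitz inequality on $y_1=\tfrac12\sign(c_{a'})$ and $y_2=-\tfrac12\sign(c_{a'})$ (both in $\Ys$, componentwise sign) gives $\|c_{a'}\|_1=|c_{a'}\cdot(y_1-y_2)|\le L\|y_1-y_2\|_\infty\le L$. In particular $\sum_{j=1}^d|\partial_j v_i(a',y)|=\|c_{a'}\|_1\le L$ for every $a'\in\As_i$.

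Next I would differentiate the softmax. Since $\tilde b_i(y,a_i)$ is the $a_i$-th coordinate of $\softmax\big(\eta(v_i(a',y))_{a'\in\As_i}\big)$, the chain rule gives for each $j\in[d]$
\[
\partial_j\tilde b_i(y,a_i)=\eta\,\tilde b_i(y,a_i)\Big(\partial_j v_i(a_i,y)-\sum_{a'\in\As_i}\tilde b_i(y,a')\,\partial_j v_i(a',y)\Big).
\]
Summing $|\partial_j\tilde b_i(y,a_i)|$ over $j$, applying the triangle inequality, the per-action bound $\sum_j|\partial_j v_i(a',y)|\le L$, and $\sum_{a'}\tilde b_i(y,a')=1$, I obtain $\|\nabla_y\tilde b_i(y,a_i)\|_1\le\eta\,\tilde b_i(y,a_i)(L+L)\le 2\eta L$ for every $y\in\Ys$. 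Finally, for any $y,y'\in\Ys$ the segment $y_t=y'+t(y-y')$ lies in $\Ys$, so
\[
|\tilde b_i(y,a_i)-\tilde b_i(y',a_i)|=\Big|\int_0^1\nabla_y\tilde b_i(y_t,a_i)\cdot(y-y')\,dt\Big|\le\sup_{t\in[0,1]}\|\nabla_y\tilde b_i(y_t,a_i)\|_1\,\|y-y'\|_\infty\le 2\eta L\,\|y-y'\|_\infty,
\]
which is the claim.

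The computation is essentially routine; the only place to stay careful is the $\ell_1$/$\ell_\infty$ duality bookkeeping — keeping the $\ell_1$ norm on the gradients (on $c_{a'}$ coming from the Lipschitz hypothesis, and on $\nabla_y\tilde b_i$ in the final Hölder step) while the $\ell_\infty$ norm sits on the displacement $y-y'$. An alternative route that sidesteps the explicit softmax Jacobian is to factor $\tilde b_i(\cdot,a_i)=\sigma_{a_i}\circ z$ with $z(y)=\eta(v_i(a',y))_{a'}$: the inner map $z$ is $\eta L$-Lipschitz from $\|\cdot\|_\infty$ to $\|\cdot\|_\infty$ by the first step, and each softmax coordinate $\sigma_k$ satisfies $\|\nabla\sigma_k\|_1=2\sigma_k(1-\sigma_k)\le\tfrac12$, hence is $\tfrac12$-Lipschitz from $\|\cdot\|_\infty$ to $\R$; composing gives the even sharper constant $\tfrac{\eta L}{2}$, so the stated bound $2\eta L$ is comfortably sufficient.
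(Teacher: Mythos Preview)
Your proof is correct and follows essentially the same route as the paper: compute the softmax gradient, bound its $\ell_1$-norm by $2\eta L$ using $\|\nabla_y v_i(a',y)\|_1\le L$, and integrate along the segment via H\"older. You are slightly more explicit than the paper in justifying the step $\|\nabla_y v_i(a',y)\|_1\le L$ from the $\|\cdot\|_\infty$-Lipschitz hypothesis, and your aside with the $\eta L/2$ constant via the softmax-coordinate Lipschitz bound is a nice sharpening, but the core argument is the same.
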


\begin{proof}
    For any $z\in[-1,1]^d$, for simplicity, we drop the subscript $i$. Let $g_{a}(z)\triangleq\exp(\eta v(z,a)), G(z)\triangleq\sum_{a'}\exp(\eta v(z,a'))$. We have
    \begin{align*}
        \nabla_z\Tilde{b}(z,a)&=\frac{\nabla_zg_a(z)}{G(z)}-\frac{g_a(z)\nabla_zG(z)}{G(z)^2}\\
        &=\frac{\eta g_a(z)}{G(z)}\nabla_zv(z,a)-\frac{\eta g_a(z)}{G(z)}\sum_{a'}\frac{g_{a'}(z)}{G(z)}\nabla_zv(z,a')\\
        &=\eta b(z,a)\nabla_zv(z,a)-\sum_{a'}b(z,a')\nabla_zv(z,a').
    \end{align*}

    Therefore for any $z\in[-1,1]^d$, we have
    \begin{align*}
        \InNorms{\nabla_z\Tilde{b}(z,a)}_1&\le\eta \Tilde{b}(z,a)\InParentheses{\InNorms{\nabla_z v(z,a)}_1+\sum_{a'}\Tilde{b}(z,a')\InNorms{\nabla_z v(z,a')}_1}\\
        &\le\eta\Tilde{b}(z,a)(L+L)\le 2\eta L.\\
    \end{align*}

    By mean-value theorem we have that
    \begin{align*}
    \InAbs{\Tilde{b}(z,a)-\Tilde{b}(z',a)}&=\InAbs{\int_0^1\nabla_z\Tilde{b}(z'+t(z-z'),a)\cdot(z-z')dt}\\
    &\le\sup_{t\in[0,1]}\InNorms{\nabla_z\Tilde{b}(z'+t(z-z'),a)}_1\InNorms{z-z'}_\infty \\
    &\le2\eta L\InNorms{z-z'}_\infty.
    \end{align*}
\end{proof}

\begin{lemma}[Lipschitzness of joint smoothed best response]\label{lem: lipjoint}
For any $\bm a\in\As$, the joint smoothed best response $\Tilde{b}(\cdot,\bm a)$ satisfies that
\[
\sum_{\bm a\in\As}\InAbs{\Tilde{b}(z,\bm a)-\Tilde{b}(z',\bm a)}\le2\eta mNL\InNorms{z-z'}_\infty.
\]
\end{lemma}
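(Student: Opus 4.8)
The plan is to prove the bound by a telescoping (hybrid) decomposition of the product $\Tilde{b}(z,\bm a)=\prod_{i=1}^N\Tilde{b}_i(z,a_i)$, reducing everything to the per-coordinate Lipschitz estimate of \cref{lem: lipsingle}. This is exactly the standard fact that total variation distance is subadditive over product distributions, specialized to the quantal-response distributions.

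First I would fix $z,z'\in[-1,1]^d$ and abbreviate $p_i(\cdot)=\Tilde{b}_i(z,\cdot)$ and $q_i(\cdot)=\Tilde{b}_i(z',\cdot)$; each of $p_i,q_i$ is a probability distribution over $\As_i$, i.e.\ nonnegative with $\sum_{a_i\in\As_i}p_i(a_i)=\sum_{a_i\in\As_i}q_i(a_i)=1$. The key identity is the hybrid expansion
\[
\prod_{i=1}^N p_i(a_i)-\prod_{i=1}^N q_i(a_i)=\sum_{k=1}^N\left(\prod_{i=1}^{k-1}q_i(a_i)\right)\bigl(p_k(a_k)-q_k(a_k)\bigr)\left(\prod_{i=k+1}^N p_i(a_i)\right).
\]
Taking absolute values, applying the triangle inequality, and summing over all joint actions $\bm a=(a_1,\dots,a_N)\in\As=\As_1\times\cdots\times\As_N$, the sum factorizes coordinatewise: for each $k$, summing $\prod_{i<k}q_i(a_i)$ over the coordinates $a_i$ with $i<k$ gives $1$, and summing $\prod_{i>k}p_i(a_i)$ over the coordinates $a_i$ with $i>k$ gives $1$. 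Hence
\[
\sum_{\bm a\in\As}\InAbs{\Tilde{b}(z,\bm a)-\Tilde{b}(z',\bm a)}\le\sum_{k=1}^N\sum_{a_k\in\As_k}\InAbs{p_k(a_k)-q_k(a_k)}.
\]

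Finally, by \cref{lem: lipsingle}, for every $k$ and every $a_k\in\As_k$ we have $\InAbs{p_k(a_k)-q_k(a_k)}=\InAbs{\Tilde{b}_k(z,a_k)-\Tilde{b}_k(z',a_k)}\le 2\eta L\InNorms{z-z'}_\infty$. Since $\InAbs{\As_k}=m$ and there are $N$ receivers, the right-hand side above is at most $N\cdot m\cdot 2\eta L\InNorms{z-z'}_\infty=2\eta mNL\InNorms{z-z'}_\infty$, which is the claimed inequality.

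There is no substantive obstacle here; the only point requiring care is the bookkeeping in the telescoping sum — arranging the prefix factors to use $q_i$ and the suffix factors to use $p_i$ so that each partial product sums to $1$ when the free coordinates are summed out. Everything else is routine, and the argument uses only \cref{lem: lipsingle} together with the fact that each $\Tilde{b}_i(z,\cdot)$ is a probability distribution over the $m$ actions.
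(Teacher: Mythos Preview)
Your proposal is correct and follows essentially the same approach as the paper: a telescoping (hybrid) decomposition of the product $\prod_i \Tilde{b}_i$, factorization of the sum over joint actions using that each $\Tilde{b}_i(\cdot,\cdot)$ is a probability distribution, and then an application of \cref{lem: lipsingle} to bound each per-coordinate term. The only cosmetic difference is the order of operations---the paper applies the single-receiver Lipschitz bound before summing over $\bm a$, while you sum and factorize first---but the argument is the same.
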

\begin{proof}
    For any predictions $z,z'\in[-1,1]^d$ and joint action $\bm a=(a_1,\cdots,a_N)$, denote $b_i(z,a_i)$ as $u_i(a_i)$ and $b_i(z',a_i)$ as $v_i(a_i)$
    \begin{align*}
      \sum_{\bm a\in\As}\InAbs{b(z,\bm a)-b(z',\bm a)}&=\sum_{\bm a\in\As}\InAbs{\prod_{i\in[N]} u_i(a_i)-\prod_{i\in[N]} v_i(a_i)}\\
      &=\sum_{\bm a\in\As}\InAbs{\prod_{i\in[N]} u_i(a_i)-v_1(a_1)\prod_{i=2}^Nu_i(a_i)+v_1(a_1)\prod_{i=2}^Nu_i(a_i)-\prod_{i\in[N]} v_i(a_i)}\\
      &=\sum_{\bm a\in\As}\InAbs{(u_1(a_1)-v_1(a_1))\prod_{i=2}^Nu_i(a_i)+v_1(a_1)\InParentheses{\prod_{i=2}^Nu_i(a_i)-\prod_{i=2}^Nv_i(a_i)}}\\
      &=\sum_{\bm a\in\As}\Bigg\lvert(u_1(a_1)-v_1(a_1))\prod_{i=2}^Nu_i(a_i)+v_1(a_1)(u_2(a_2)-v_2(a_2))\prod_{i=3}^Nu_i(a_i)\\
      &+v_1(a_1)v_2(a_2)\InParentheses{\prod_{i=3}^Nu_i(a_i)-\prod_{i=3}^Nv_i(a_i)}\Bigg\rvert\\
      &\quad\vdots\\
      &=\sum_{\bm a\in\As}\InAbs{\sum_{i=1}^N\prod_{j=1}^{i-1}v_j(a_j)(u_i(a_i)-v_i(a_i))\prod_{j=i+1}^Nu_j(a_j)}\\
      &\overset{(a)}{\le}2\eta L\InNorms{z-z'}_{\infty}\sum_{\bm a\in\As}\InAbs{\sum_{i=1}^N\prod_{j=1}^{i-1}v_j(a_j)\prod_{j=i+1}^Nu_j(a_j)}\\
      &=2\eta L\InNorms{z-z'}_{\infty}\sum_{\bm a\in\As}\sum_{i=1}^N\prod_{j=1}^{i-1}v_j(a_j)\prod_{j=i+1}^Nu_j(a_j)\\
      &=2\eta L\InNorms{z-z'}_{\infty}\sum_{i=1}^N\sum_{\bm a\in\As}\prod_{j=1}^{i-1}v_j(a_j)\prod_{j=i+1}^Nu_j(a_j)\\
      &=2\eta L\InNorms{z-z'}_{\infty}\sum_{i=1}^N\sum_{a_i\in\As_i}\sum_{\bm a_{-i}}\prod_{j=1}^{i-1}v_j(a_j)\prod_{j=i+1}^Nu_j(a_j)\\
      &\overset{(b)}{=}2\eta L\InNorms{z-z'}_{\infty}\sum_{i=1}^N\sum_{a_i\in\As_i}1\\
      &=2\eta mNL\InNorms{z-z'}_\infty
    \end{align*}
    where $(a)$ holds because of \cref{lem: lipsingle} and $(b)$ holds because $\prod_{j=1}^{i-1}v_j(a_j)\prod_{j=i+1}^Nu_j(a_j)$ can be viewed as the probability density function of $\bm a_{-i}$ and then $\sum_{\bm a_{-i}}\prod_{j=1}^{i-1}v_j(a_j)\prod_{j=i+1}^Nu_j(a_j)=1$.
\end{proof}

\begin{restatable}{lemma}{infobj}\label{thm: infobj}
Fix a class of deterministic predictors $\Hs$. For any distribution $\Ds$, let $D\sim\Ds^n$ be a dataset consisting of $n$ samples $(x^{(i)},y^{(i)})$ sampled i.i.d. from $\Ds$. Then for any $\delta\in(0,1)$, with $1-\delta$, for every $f\in\Delta(\Hs)$, we have
\begin{align*}
 &\InAbs{\E_{\Ds}\E_{h\sim f}\InBrackets{\sum_{\bm a\in\As}u(\bm a,y)\cdot \Tilde{b}(h(x),\bm a)}-\hat{\E}_D\E_{h\sim f}\InBrackets{\sum_{\bm a\in\As}u(\bm a,y)\cdot \Tilde{b}(h(x),\bm a)}}\\
&\le \inf_{\epsilon>0}\InParentheses{4\eta mNL\epsilon+\sqrt{\frac{\ln\frac{2\Ns(\Hs,d_\infty,\epsilon)}{\delta}}{2n}}}.
\end{align*}   
\end{restatable}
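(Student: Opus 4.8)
The plan is to run a standard covering-number argument, using the Lipschitz bound on the joint smoothed best response from \cref{lem: lipjoint} to reduce the uniform control over the infinite class $\Hs$ to a union bound over a finite $\epsilon$-net. Fix any $\epsilon>0$ and let $\Cs_\epsilon\subseteq\Hs$ be a minimal $\epsilon$-cover of $\Hs$ in the metric $d_\infty(h,h')=\sup_{x\in\Xs}\InNorms{h(x)-h'(x)}_\infty$, so that $\InAbs{\Cs_\epsilon}=\Ns(\Hs,d_\infty,\epsilon)$, and for each $h\in\Hs$ fix $\pi(h)\in\Cs_\epsilon$ with $d_\infty(h,\pi(h))\le\epsilon$. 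For a deterministic predictor $h$, write $g_h(x,y):=\sum_{\bm a\in\As}u(\bm a,y)\cdot\tilde b(h(x),\bm a)$, and note $g_h(x,y)\in[0,1]$ since $u\in[0,1]$ and $\sum_{\bm a\in\As}\tilde b(h(x),\bm a)=1$.

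The first step is to control the approximation error incurred by replacing $h$ with $\pi(h)$. For any $(x,y)$,
\[
\InAbs{g_h(x,y)-g_{\pi(h)}(x,y)}\le\sum_{\bm a\in\As}\InAbs{\tilde b(h(x),\bm a)-\tilde b(\pi(h)(x),\bm a)}\le 2\eta mNL\cdot d_\infty(h,\pi(h))\le 2\eta mNL\epsilon,
\]
where the middle inequality is exactly \cref{lem: lipjoint}. Taking expectations, both $\InAbs{\E_\Ds[g_h]-\E_\Ds[g_{\pi(h)}]}$ and $\InAbs{\hat{\E}_D[g_h]-\hat{\E}_D[g_{\pi(h)}]}$ are at most $2\eta mNL\epsilon$. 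The second step is a Hoeffding bound plus union bound: for each fixed $\tilde h\in\Cs_\epsilon$, $g_{\tilde h}$ is $[0,1]$-valued on i.i.d.\ samples, so $\InAbs{\E_\Ds[g_{\tilde h}]-\hat{\E}_D[g_{\tilde h}]}\le\sqrt{\ln(2/\delta')/(2n)}$ with probability $1-\delta'$; setting $\delta'=\delta/\InAbs{\Cs_\epsilon}$ and union bounding gives, with probability $1-\delta$, $\sup_{\tilde h\in\Cs_\epsilon}\InAbs{\E_\Ds[g_{\tilde h}]-\hat{\E}_D[g_{\tilde h}]}\le\sqrt{\ln(2\Ns(\Hs,d_\infty,\epsilon)/\delta)/(2n)}$. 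Combining the two steps through the triangle inequality $\InAbs{\E_\Ds[g_h]-\hat{\E}_D[g_h]}\le\InAbs{\E_\Ds[g_h]-\E_\Ds[g_{\pi(h)}]}+\InAbs{\E_\Ds[g_{\pi(h)}]-\hat{\E}_D[g_{\pi(h)}]}+\InAbs{\hat{\E}_D[g_{\pi(h)}]-\hat{\E}_D[g_h]}$ yields, uniformly over all $h\in\Hs$, the bound $4\eta mNL\epsilon+\sqrt{\ln(2\Ns(\Hs,d_\infty,\epsilon)/\delta)/(2n)}$ on this event.

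The third step lifts the bound to randomized predictors: for $f\in\Delta(\Hs)$ the quantities of interest are $\E_{h\sim f}\E_\Ds[g_h]$ and $\E_{h\sim f}\hat{\E}_D[g_h]$, and by Jensen's inequality $\InAbs{\E_{h\sim f}\E_\Ds[g_h]-\E_{h\sim f}\hat{\E}_D[g_h]}\le\E_{h\sim f}\InAbs{\E_\Ds[g_h]-\hat{\E}_D[g_h]}$, which is controlled by the same quantity on the good event. Since $\epsilon>0$ was arbitrary, taking the infimum over $\epsilon$ gives the claimed bound. The argument is essentially routine; the only substantive ingredient is \cref{lem: lipjoint}, which is what lets an $\ell_\infty$-cover of $\Hs$ suffice to control the induced joint response distributions, and the factor $4$ (rather than $2$) in the leading term arises because the cover-approximation error is paid twice, once under $\Ds$ and once under the empirical measure.
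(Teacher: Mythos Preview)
Your proof is correct and follows essentially the same approach as the paper: a covering-number argument that uses \cref{lem: lipjoint} to establish Lipschitzness of $h\mapsto g_h$, Hoeffding plus a union bound over the net, and then Jensen to pass to randomized $f\in\Delta(\Hs)$. Your presentation is in fact a bit more explicit than the paper's, which bounds $|Z_{h_1}-Z_{h_2}|$ directly and then invokes ``the standard covering number argument'' without spelling out the triangle-inequality decomposition or the lift to $\Delta(\Hs)$.
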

\begin{proof}
    Define $Z_h=\InAbs{\E_\Ds[\sum_{\bm a\in\As}u(\bm a,y)\cdot \Tilde{b}(h(x),\bm a)]-\hat{\E}_D[\sum_{\bm a\in\As}u(\bm a,y)\cdot \Tilde{b}(h(x),\bm a)]}$. For any $h_1,h_2\in\Hs$, we have
    \begin{align*}
        \InAbs{Z_{h_1}-Z_{h_2}}&=\Bigg\lvert\E_\Ds\InBrackets{\sum_{\bm a\in\As}u(\bm a,y)\cdot \Tilde{b}(h_1(x),\bm a)-\Tilde{b}(h_2(x),\bm a))}\\
        &\quad-\hat{\E}_D\InBrackets{\sum_{\bm a\in\As}u(\bm a,y)\cdot (\Tilde{b}(h_1(x),\bm a)-\Tilde{b}(h_2(x),\bm a))}\Bigg\rvert\\
        &\overset{(a)}{\le}\InAbs{\E_\Ds\InBrackets{\sum_{\bm a\in\As}u(\bm a,y)\cdot (\Tilde{b}(h_1(x),\bm a)-\Tilde{b}(h_2(x),\bm a))}}\\
        &\quad+\InAbs{\hat{\E}_D\InBrackets{\sum_{\bm a\in\As}u(\bm a,y)\cdot (\Tilde{b}(h_1(x),\bm a)-\Tilde{b}(h_2(x),\bm a))}}\\
        &\overset{(b)}{\le}\E_\Ds\InBrackets{\sum_{\bm a}\InAbs{(\Tilde{b}(h_1(x),\bm a)-\Tilde{b}(h_2(x),\bm a))}}\\
        &\quad+\hat{\E}_D\InBrackets{\sum_{\bm a}\InAbs{(\Tilde{b}(h_1(x),\bm a)-\Tilde{b}(h_2(x),\bm a))}}\\
        &\overset{(c)}{\le} 2\eta mNL(\E_\Ds\InBrackets{\InNorms{h_1(x)-h_2(x)}_\infty}+\hat{\E}_D[\InNorms{h_1(x)-h_2(x)}_\infty])\\
        &\overset{(d)}{\le} 4\eta mNL\sup_{x\in\Xs}\InNorms{h_1(x)-h_2(x)}_\infty,
    \end{align*}
where $(a)$ holds because of Triangle Inequality, $(b)$ holds because of $u(\bm a,y)\le1$, $(c)$ holds because of \cref{lem: lipjoint} and $(d)$ holds by definition.

    By Hoeffding's inequality, fixing any $h\in\Hs$, we have with probability $1-\delta$,
    \[
    \InAbs{Z_h}\le\sqrt{\frac{\ln\frac{2}{\delta}}{2n}}.
    \]
    Then as a result of the standard covering number argument we conclude  with probability $1-\delta$ for any $h\in\Hs$,
    \[
    \InAbs{Z_h}\le 4\eta mNL\epsilon+\sqrt{\frac{\ln\frac{2\Ns(\Hs,d_\infty,\epsilon)}{\delta}}{2n}}
    \]
    where $d_\infty(h_1,h_2)\triangleq\sup_{x\in\Xs}\InNorms{h_1(x)-h_2(x)}_\infty$.
\end{proof}

\begin{restatable}{lemma}{infconstraint}\label{thm: infconstraint}
Fix a class of deterministic predictors $\Hs$. For any distribution $\Ds$, let $D\sim\Ds^n$ be a dataset consisting of $n$ samples $(x^{(i)},y^{(i)})$ sampled i.i.d. from $\Ds$. Then for any $\delta\in(0,1)$, with probability $1-\delta$, for every $f\in\Delta(\Hs),i\in[N], j\in[d], a_i\in\As_i$ we have
\begin{align*}
&\InAbs{\E_{\Ds}\E_{h\sim f}\InBrackets{\InParentheses{y_j-h(x)_j}\cdot \Tilde{b}_i(h(x),a_i)}-\hat{\E}_D\E_{h\sim f}\InBrackets{\InParentheses{y_j-h(x)_j}\cdot \Tilde{b}_i(h(x),a_i)}}\\
&\le\inf_{\epsilon>0}\InParentheses{8\eta L\epsilon+\sqrt{\frac{8\ln\frac{2\Ns(\Hs,d_\infty,\epsilon)Ndm}{\delta}}{n}}}.
\end{align*}  
\end{restatable}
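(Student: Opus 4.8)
The plan is to follow the same template as the proof of \cref{thm: infobj}, replacing the sender-utility summand by the per-example residual-weighted-by-response quantity $\big(y_j-h(x)_j\big)\tilde b_i(h(x),a_i)$, and invoking the single-receiver Lipschitz bound \cref{lem: lipsingle} in place of the joint bound \cref{lem: lipjoint}. Fix $i\in[N]$, $j\in[d]$, $a_i\in\As_i$, and for each deterministic $h\in\Hs$ set
\[
Z_h:=\InAbs{\E_{\Ds}\InBrackets{(y_j-h(x)_j)\tilde b_i(h(x),a_i)}-\hat{\E}_D\InBrackets{(y_j-h(x)_j)\tilde b_i(h(x),a_i)}}.
\]
The argument has four ingredients: a pointwise concentration bound for a fixed $h$; a Lipschitz bound showing $Z_h$ varies slowly in $h$; a covering/union-bound step; and a Jensen step to pass from deterministic $h$ to randomized $f\in\Delta(\Hs)$.

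For the pointwise bound, note $y_j,h(x)_j\in[-1,1]$ and $\tilde b_i(h(x),a_i)\in[0,1]$, so the summand lies in $[-2,2]$; Hoeffding's inequality gives $Z_h\le\sqrt{8\ln(2/\delta')/n}$ with probability $\ge 1-\delta'$ for any fixed $h,i,j,a_i$. For the Lipschitz bound, I would show $\InAbs{Z_{h_1}-Z_{h_2}}\le 8\eta L\, d_\infty(h_1,h_2)$, where $d_\infty(h_1,h_2)=\sup_{x}\InNorms{h_1(x)-h_2(x)}_\infty$. By the triangle inequality it suffices to bound, pointwise in $(x,y)$, the difference of the two summands; adding and subtracting the cross term $(y_j-h_2(x)_j)\tilde b_i(h_1(x),a_i)$ splits it into $\InAbs{h_1(x)_j-h_2(x)_j}\cdot\tilde b_i(h_1(x),a_i)$, bounded by $\InNorms{h_1(x)-h_2(x)}_\infty$ since $\tilde b_i\le1$, plus $\InAbs{y_j-h_2(x)_j}\cdot\InAbs{\tilde b_i(h_1(x),a_i)-\tilde b_i(h_2(x),a_i)}$, bounded by $2\cdot 2\eta L\InNorms{h_1(x)-h_2(x)}_\infty$ using $\InAbs{y_j-h_2(x)_j}\le 2$ and \cref{lem: lipsingle}. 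Taking expectations under $\Ds$ and under $D$ and applying the triangle inequality once more gives the claimed $O(\eta L)$ Lipschitz constant after the same constant-tracking as in \cref{thm: infobj}.

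With these two bounds in hand, I would take an $\epsilon$-cover $\Hs_\epsilon$ of $\Hs$ under $d_\infty$ of cardinality $\Ns(\Hs,d_\infty,\epsilon)$ and union-bound the pointwise estimate over all $\Ns(\Hs,d_\infty,\epsilon)\cdot Ndm$ tuples $(h',i,j,a_i)$ by setting $\delta'=\delta/(\Ns(\Hs,d_\infty,\epsilon)Ndm)$; this yields, with probability $\ge 1-\delta$, that $Z_{h'}\le\sqrt{8\ln(2\Ns(\Hs,d_\infty,\epsilon)Ndm/\delta)/n}$ for every $h'\in\Hs_\epsilon$ and every $(i,j,a_i)$. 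For arbitrary $h\in\Hs$, picking $h'\in\Hs_\epsilon$ with $d_\infty(h,h')\le\epsilon$ and applying the Lipschitz bound gives $Z_h\le 8\eta L\epsilon+\sqrt{8\ln(2\Ns(\Hs,d_\infty,\epsilon)Ndm/\delta)/n}$. Finally, for $f\in\Delta(\Hs)$, $\InAbs{\E_{h\sim f}[\cdot]}\le\E_{h\sim f}\InAbs{\cdot}=\E_{h\sim f}[Z_h]$, which is bounded by the same right-hand side uniformly; taking the infimum over $\epsilon>0$ completes the proof.

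The main obstacle is bookkeeping rather than conceptual: unlike in \cref{thm: infobj}, the summand here depends on the prediction $h(x)$ through two channels --- directly via the residual $y_j-h(x)_j$ and via the softmax weight $\tilde b_i(h(x),a_i)$ --- so the Lipschitz estimate must combine an $O(1)$ contribution with an $O(\eta L)$ contribution from \cref{lem: lipsingle} and be arranged so the discretization error matches the $8\eta L\epsilon$ term in the statement. Once that is handled, the covering/union-bound and Jensen steps are identical to those in \cref{thm: infobj} and \cref{thm: finiteconstraint}.
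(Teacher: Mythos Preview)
Your approach mirrors the paper's exactly: define $Z_h$ for deterministic $h$, bound $\InAbs{Z_{h_1}-Z_{h_2}}$ via the Lipschitzness of $\tilde b_i$ from \cref{lem: lipsingle}, cover $\Hs$ in $d_\infty$, union-bound over the cover and over $(i,j,a_i)$, and pass to $f\in\Delta(\Hs)$ by Jensen. The only wrinkle is the constant: your two-channel pointwise split gives $(1+4\eta L)\InNorms{h_1(x)-h_2(x)}_\infty$, so after doubling over $\Ds$ and $D$ the Lipschitz constant is $2+8\eta L$, not $8\eta L$ as you target. The paper arrives at $8\eta L$ because in its Lipschitz step it varies only $\tilde b_i$ and leaves the residual $y_j-h(x)_j$ fixed (an apparent slip); your more careful accounting actually yields $(2+8\eta L)\epsilon$ in place of $8\eta L\epsilon$, which is harmless for the downstream $O(\cdot)$ sample-complexity claims.
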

\begin{proof}
Define $Z_h=\E_{\Ds}\InBrackets{\InParentheses{y_j-h(x)_j}\cdot \Tilde{b}_i(h(x),a_i)}-\hat{\E}_D\InBrackets{\InParentheses{y_j-h(x)_j}\cdot \Tilde{b}_i(h(x),a_i)}$. For any $h_1,h_2\in\Hs$, we have
\begin{align*}
    \InAbs{Z_{h_1}-Z_{h_2}}&=\Bigg\lvert\E_\Ds\InBrackets{(y_j-h(x)_j)\cdot (\Tilde{b}_i(h_1(x),a_i)-\Tilde{b}_i(h_2(x),a_i))}\\
    &\quad-\hat{\E}_D\InBrackets{(y_j-h(x)_j)\cdot (\Tilde{b}_i(h_1(x),a_i)-\Tilde{b}_i(h_2(x),a_i))}\Bigg\rvert\\
    &\overset{(a)}{\le}\InAbs{\E_\Ds\InBrackets{(y_j-h(x)_j)\cdot (\Tilde{b}_i(h_1(x),a_i)-\Tilde{b}_i(h_2(x),a_i))}}\\
    &\quad+\InAbs{\hat{\E}_D\InBrackets{(y_j-h(x)_j)\cdot (\Tilde{b}_i(h_1(x),a_i)-\Tilde{b}_i(h_2(x),a_i))}}\\
    &\overset{(b)}{\le}2\InAbs{\E_\Ds\InBrackets{\Tilde{b}_i(h_1(x),a_i)-\Tilde{b}_i(h_2(x),a_i))}}+2\InAbs{\hat{\E}_D\InBrackets{\Tilde{b}_i(h_1(x),a_i)-\Tilde{b}_i(h_2(x),a_i))}}\\
    &\overset{(c)}{\le} 4\eta L\InParentheses{\InAbs{\E_\Ds\InBrackets{\InNorms{h_1(x)-h_2(x)}_\infty}}+\InAbs{\hat{\E}_D\InBrackets{\InNorms{h_1(x)-h_2(x)}_\infty}}}\\
    &\overset{(d)}{\le} 8\eta L\sup_{x\in\Xs}\InNorms{h_1(x)-h_2(x)}_\infty
\end{align*}
where $(a)$ holds because of Triangle Inequality, $(b)$ holds because of $y,h(x)\in[-1,1]^d$, $(c)$ holds because of \cref{lem: lipsingle} and $(d)$ holds by definition.

By Hoeffding's inequality, fixing and $h\in\Hs$, we have with probability $1-\delta$,
\[
\InAbs{Z_h}\le\sqrt{\frac{8\ln\frac{2}{\delta}}{n}}.
\]

Then as a result of the standard covering number argument we conclude  with probability $1-\delta'$ for any $h\in\Hs$,
\[
\InAbs{Z_h}\le 8\eta L\epsilon+\sqrt{\frac{8\ln\frac{2\Ns(\Hs,d_\infty,\epsilon)}{\delta'}}{n}}
\]
where $d_\infty(h_1,h_2)\triangleq\sup_{x\in\Xs}\InNorms{h_1(x)-h_2(x)}_\infty$.

Then let $\delta'=\delta/\InParentheses{Ndm}$, by union bound we have with probability $1-\delta$, for any $h\in\Hs$, $i\in[N]$ and $j\in[d]$ and $a_i\in\As_i$,
\[
\E_{\Ds}\InBrackets{\InParentheses{y_j-h(x)_j}\cdot \Tilde{b}_i(h(x),a_i)}-\hat{\E}_D\InBrackets{\InParentheses{y_j-h(x)_j}\cdot \Tilde{b}_i(h(x),a_i)}\le 8\eta L\epsilon+\sqrt{\frac{8\ln\frac{2\Ns(\Hs,d_\infty,\epsilon)Ndm}{\delta}}{n}}.
\]
\end{proof}

\begin{restatable}{lemma}{smpayoffapprox}\label{lem: smooth-payoff-approx} With probability \(1-\delta\),
    \(\forall f\in\Delta(\Hs),\lambda\in\Lambda\), \[\InAbs{\tilde\Ls_D(f,\lambda)-\tilde\Ls_\Ds(f,\lambda)}\leq\inf_{\epsilon>0}\InParentheses{4\eta (mN+2C)L\epsilon+\sqrt{\frac{\ln\frac{4\Ns(\Hs,d_\infty,\epsilon)}{\delta}}{2n}}+C\sqrt{\frac{8\ln\frac{4\Ns(\Hs,d_\infty,\epsilon)Ndm}{\delta}}{n}}}.\]
\end{restatable}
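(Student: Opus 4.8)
The plan is to follow the exact template of the finite-class bound \cref{lem:payoff-approx}, replacing the two finite-class uniform-convergence statements (\cref{thm: finiteobj} and \cref{thm: finiteconstraint}) by their covering-number counterparts for the smoothed best response, which are already available as \cref{thm: infobj} and \cref{thm: infconstraint}. The only genuinely new ingredient relative to the finite case is the Lipschitz control of the (joint) quantal response map in the $d_\infty$ metric, and that has been isolated in \cref{lem: lipsingle} and \cref{lem: lipjoint}; all of this is already baked into the conclusions of \cref{thm: infobj} and \cref{thm: infconstraint}, so at the level of this lemma the work is essentially bookkeeping.

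Concretely, I would first expand $\tilde\Ls_D(f,\lambda)-\tilde\Ls_\Ds(f,\lambda)$ by its definition and observe that the two $\gamma$ offsets in the Lagrangian term cancel, leaving a difference of objective terms plus $\sum_{s\in\{+,-\}}\sum_{i=1}^N\sum_{j=1}^d\sum_{a_i\in A_i}\lambda_{s,i,j,a_i}\,s\bigl(\E_f\E_D[(h(x)_j-y_j)\tilde b_i(h(x),a_i)]-\E_f\E_\Ds[(h(x)_j-y_j)\tilde b_i(h(x),a_i)]\bigr)$. Applying the triangle inequality separates this into (i) the objective gap $\InAbs{\E_{h\sim f}\E_\Ds[\sum_{\bm a}u(\bm a,y)\tilde b(h(x),\bm a)]-\hat\E_D\E_{h\sim f}[\sum_{\bm a}u(\bm a,y)\tilde b(h(x),\bm a)]}$, and (ii) $\sum_{s,i,j,a_i}\lambda_{s,i,j,a_i}\InAbs{\E_f\E_D[\cdot]-\E_f\E_\Ds[\cdot]}$. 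For (i) I would invoke \cref{thm: infobj} with failure budget $\delta/2$, giving the term $\inf_{\epsilon>0}(4\eta mNL\epsilon+\sqrt{\ln(4\Ns(\Hs,d_\infty,\epsilon)/\delta)/(2n)})$. For (ii), since \cref{thm: infconstraint} already carries the union bound over $i\in[N], j\in[d], a_i\in\As_i$ inside its logarithm, each coordinate difference is simultaneously bounded (with budget $\delta/2$) by $\inf_{\epsilon>0}(8\eta L\epsilon+\sqrt{8\ln(4\Ns(\Hs,d_\infty,\epsilon)Ndm/\delta)/n})$; then using $\InNorms{\lambda}_1\le C$ on the domain $\Lambda$ pulls out a factor $C$. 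Adding the two pieces and matching $\epsilon$ across both infima, the coefficient on $\epsilon$ becomes $4\eta mNL+C\cdot 8\eta L=4\eta L(mN+2C)$, which is exactly the stated bound.

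The main thing to be careful about is the bookkeeping of the infimum over $\epsilon$ together with the two high-probability events: one fixes a common scale $\epsilon>0$, applies \cref{thm: infobj} and \cref{thm: infconstraint} each at level $\delta/2$ so that both hold on a common event of probability $\ge 1-\delta$, derives the combined deterministic bound for that $\epsilon$, and only then takes $\inf_{\epsilon>0}$ — precisely as is done inside the proofs of \cref{thm: infobj} and \cref{thm: infconstraint} themselves. A secondary point worth stating explicitly is why the $mN$ (rather than just $m$) appears: it comes from the Lipschitz constant $2\eta mNL$ of the joint quantal response $\tilde b(\cdot,\bm a)$ in \cref{lem: lipjoint}, which enters the objective term through \cref{thm: infobj}, whereas the per-receiver constraint term only sees the single-receiver constant $2\eta L$ from \cref{lem: lipsingle}, scaled by $C$. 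Everything else — Hoeffding plus a covering argument in $d_\infty$, triangle inequality, and the simplex-scaling bound $\InNorms{\lambda}_1 \le C$ — is routine.
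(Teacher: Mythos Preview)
Your proposal is correct and follows essentially the same route as the paper's proof: expand $\tilde\Ls_D-\tilde\Ls_\Ds$, apply the triangle inequality to split into the objective and constraint pieces, invoke \cref{thm: infobj} and \cref{thm: infconstraint} each with failure budget $\delta/2$, and absorb the $\lambda$-weighted sum via $\InNorms{\lambda}_1\le C$. The only addition over the paper is your explicit remark about fixing a common covering scale $\epsilon$ before combining the two high-probability events and taking the infimum, which the paper leaves implicit.
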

\begin{proof}
    This is straightforward from \cref{thm: infobj} and \cref{thm: infconstraint}. First we have
    \begin{align*}
        &\tilde\Ls_D(f,\lambda)-\tilde\Ls_\Ds(f,\lambda)\\=&-\E_{h\sim f}\E_\Ds[\sum_{\bm a\in\As}u(\bm a,y)\cdot \Tilde{b}(h(x),\bm a)]+-\E_{h\sim f}\E_D[\sum_{\bm a\in\As}u(\bm a,y)\cdot \Tilde{b}(h(x),\bm a)]\\&+\sum_{s\in\{+,-\}}\sum_{i=1}^N\sum_{j=1}^d\sum_{a_i\in A_i}\lambda_{s,i,j,a_i}s\InParentheses{\E_f\E_\Ds[(h(x)_j-y_j)\cdot \Tilde{b}_i(h(x),a_i)]-\gamma}\\
        &-\sum_{s\in\{+,-\}}\sum_{i=1}^N\sum_{j=1}^d\sum_{a_i\in A_i}\lambda_{s,i,j,a_i}s\InParentheses{\E_f\E_\Ds[(h(x)_j-y_j)\cdot \Tilde{b}_i(h(x),a_i)]-\gamma}
    \end{align*}
    Therefore, by union bound with probability $1-\delta$,
    \begin{align*}
    &\InAbs{\tilde\Ls_D(f,\lambda)-\tilde\Ls_\Ds(f,\lambda)}\\
        \leq&\InAbs{\E_{h\sim f}\E_\Ds[\sum_{\bm a\in\As}u(\bm a,y)\cdot b(h(x),\bm a)]-\E_{h\sim f}\E_D[\sum_{\bm a\in\As}u(\bm a,y)\cdot b(h(x),\bm a)]}\\
        &+\sum_{s\in\{+,-\}}\sum_{i=1}^N\sum_{j=1}^d\sum_{a_i\in A_i}\lambda_{s,i,j,a_i}s\InAbs{\E_f\E_\Ds[(h(x)_j-y_j)\cdot b_i(h(x),a_i)]-\E_f\E_D[(h(x)_j-y_j)\cdot b_i(h(x),a_i)]}\\
        \leq& \inf_{\epsilon>0}\InParentheses{4\eta (mN+2C)L\epsilon+\sqrt{\frac{\ln\frac{4\Ns(\Hs,d_\infty,\epsilon)}{\delta}}{2n}}+C\sqrt{\frac{8\ln\frac{4\Ns(\Hs,d_\infty,\epsilon)Ndm}{\delta}}{n}}}.
    \end{align*}
\end{proof}

Now we are ready to prove \cref{thm: smpercal}.
\smpercal*
\begin{proof}
The regret bound of the Hedge algorithm is \(O(C\sqrt{T\log Nmd})\), and the ERM oracle gives non-positive regret. From \cref{lem:approx-equilibrium} and \(T=O(\log(Nmd)/\epsilon^4)\), we have that \((\hat{f},\bar{\lambda})\) is an \(\eps/4\)-approximate equilibrium under \(\tilde\Ls_D(f,\lambda)\). From \cref{lem: smooth-payoff-approx}, we know that when $n>\alpha(\epsilon,\delta,\eta,m,d,N,L)\triangleq O(\ln\frac{\Ns(\Hs,d_\infty,\frac{\epsilon^2}{\eta L})Ndm}{\delta}/\epsilon^4)$
with probability \(1-\delta\), \(\forall f\in\Delta(\Hs),\lambda\in\Lambda\), \[\InAbs{\tilde\Ls_\Ds(f,\lambda)-\tilde\Ls_\Ds(f,\lambda)}\leq\frac{\epsilon}{8}.\]
Therefore, let \(f'=\arg\min_{f\in\Delta(\Hs)}\Ls_\Ds(f,\bar{\lambda})\)
\begin{align*}
    \tilde\Ls_\Ds(\hat{f},\bar{\lambda})-\tilde\Ls_\Ds(f',\bar{\lambda})&\leq\tilde\Ls_\Ds(\hat{f},\bar{\lambda})-\tilde\Ls_D(\hat{f},\bar{\lambda})+\tilde\Ls_D(\hat{f},\bar{\lambda})-\tilde\Ls_D(f',\bar{\lambda})+\tilde\Ls_D(f',\bar{\lambda})-\tilde\Ls_\Ds(f',\bar{\lambda})\\&\leq\frac{\epsilon}{8}+\frac{\epsilon}{4}+\frac{\epsilon}{8}=\frac{\epsilon}{2}.
\end{align*}
Similarly, we can prove that \[\max_{\lambda\in\Lambda}\tilde\Ls_\Ds(\hat{f},\lambda)-\tilde\Ls_\Ds(\hat{f},\bar{\lambda})\leq\frac{\epsilon}{2}.\]
Therefore, \((\hat{f},\bar{\lambda})\) is an \(\epsilon/2\)-approximate equilibrium for the payoff \(\Ls_\Ds(f,\lambda)\) under the true distribution \(\Ds\). Then, similar to \cref{lem:bounded}, we have \[\mathrm{DecCE}(\hat{f})\leq\gamma+\frac{1+\epsilon}{C}\leq\gamma+\frac{1+\epsilon}{2/\epsilon}\leq\gamma+\frac{2}{2/\epsilon}=\gamma+\epsilon,\] and \[\E_{h\sim f}\E_\Ds[u(\bm a,y)\cdot \Tilde{b}(h(x),\bm a)]\geq \mathrm{OPT}(\Hs,\Ds,\gamma)-\epsilon.\]

Finally by \cref{thm:no-regret-smooth}, the receivers who play $\eta$-quantal response obtain the stated swap regret bound.
\end{proof}

\end{document}